\newtheorem{theorem}{Theorem}[section]
\newtheorem{lemma}[theorem]{Lemma}
\newtheorem{definition}[theorem]{Definition}
\newcommand{\bb}{\mathbb}
\newcommand{\mf}{\mathfrak}
\newcommand{\mc}{\mathcal}
\def\cA{\mathcal{A}}
\def\cC{\mathcal{C}}
\def\cH{\mathcal{H}}
\def\cI{\mathcal{I}}
\def\cO{\mathcal{O}}
\def\cP{\mathcal{P}}
\def\cQ{\mathcal{Q}}
\def\cS{\mathcal{S}}
\def\one{{\mathchoice {\rm 1\mskip-4mu l} {\rm 1\mskip-4mu l} {\rm
			1\mskip-4.5mu l} {\rm 1\mskip-5mu l}}}
\newcommand{\ketbra}[1]{\, | #1 \rangle \hspace{-0.05em} \langle #1 | \,} 
\newcommand{\ketbras}[2]{\, | #1 \rangle \hspace{-0.05em} \langle #2 | \,}
\newcommand{\poly}{\operatorname{poly}}
\begin{document}
	
	\title{Fast-forwarding quantum evolution}
	
	

	\author[1]{
		Shouzhen Gu}

	\author[2]{
		Rolando D. Somma}

	\author[2]{Burak \c{S}ahino\u{g}lu}
	
	\affil[1]{Institute for Quantum Information and Matter, California Institute of Technology, Pasadena, CA 91125, USA}
	
	\affil[2]{Theoretical Division, Los Alamos National Laboratory, Los Alamos, NM 87545, USA}

	\maketitle

	\begin{abstract}
		We investigate the problem of fast-forwarding quantum evolution, whereby the dynamics of certain quantum systems can be simulated with gate complexity that is sublinear in the evolution time.  We provide a definition of fast-forwarding that considers the model of quantum computation, the Hamiltonians that induce the evolution, and the properties of the initial states.
		Our definition accounts for {\em any} asymptotic complexity improvement of the general case and we use it to demonstrate fast-forwarding in several quantum systems. In particular, we show that some local spin systems whose Hamiltonians can be taken into block diagonal form using an efficient quantum circuit, such as those
		that are permutation-invariant, can be exponentially fast-forwarded. 
		We also show that certain classes of positive semidefinite local spin systems, also known as frustration-free, can be polynomially fast-forwarded, provided the initial state is supported on a subspace of sufficiently low energies. 
		Last, we show that all quadratic fermionic systems and number-conserving quadratic bosonic systems can be exponentially fast-forwarded in a model where quantum
		gates are exponentials of specific fermionic or bosonic operators, respectively.
		Our results extend the classes of physical Hamiltonians that were previously known to be fast-forwarded, while 
		not necessarily requiring methods that diagonalize the Hamiltonians efficiently.
		We further develop a connection between fast-forwarding and precise energy measurements that also accounts for polynomial improvements.
	\end{abstract}

	\newpage	
	
	\tableofcontents

	\section{Introduction}
	\label{sec:intro}

	Since Feynman's initial proposal of simulating  physics with quantum computers~\cite{Fey82}, a significant amount of research in quantum computing has focused on quantum algorithms for Hamiltonian simulation, i.e., the problem of simulating the time evolution of a quantum system induced by a Hamiltonian $H$. 
	The main figure of efficiency is how the quantum complexity of these algorithms, as determined by the number of elementary (e.g., two-qubit) gates, scales with various parameters such as the evolution time $t>0$. 
	To date, the most efficient known methods for Hamiltonian simulation  have quantum complexities that are almost linear in $t$ (cf.,~\cite{Llo96,SOGKL02,BAC07,WBH+10,CW12,BC12,BCC+14,BCC+15,BCK15,LC17,LC19, campbell2019random,berry2020time}).
	These methods are thus optimal since simulating {\em all} quantum systems with complexity sublinear in $t$ is impossible~\cite{BAC07,AA17,HHK+18}. 
	
	Nevertheless, a natural question is whether such complexities can be significantly reduced in specific cases, opening up the possibility of \emph{fast-forwarding} Hamiltonian simulation.
	Roughly, a quantum system or Hamiltonian is said to be fast-forwarded if its time evolution can be simulated with quantum complexity scaling sublinearly in $t$.
	An answer to this question is relevant for both foundational and practical reasons.
	For example, the ability to fast-forward a Hamiltonian might allow us to study physical systems on quantum computers more rapidly. 
	It could also result in other important quantum speedups, as some well-known quantum algorithms use Hamiltonian simulation as a subroutine~(cf., \cite{HHL09,CKS17}).

	Indeed, a few examples of {\em exponential fast-forwarding} are known~\cite{AA17}, in which there exist quantum algorithms for Hamiltonian simulation of quantum complexity at most polylogarithmic in $t$.
	These examples include Hamiltonians that can be diagonalized via an efficient quantum circuit, commuting local Hamiltonians, and quadratic fermionic Hamiltonians~\cite{VCL09,Kivlichan18,Jiang18}.
	But while these cases are interesting and demonstrate the possibility of fast-forwarding, they are only a handful, and the field of fast-forwarding quantum evolution remains largely unexplored. 
	Characterizing larger classes of Hamiltonians that can be fast-forwarded is crucial.
	
	In this paper, we further explore the theme of fast-forwarding and present new examples that can be fast-forwarded. 
	To this end, we first provide a definition of fast-forwarding that accounts for {\em any} asymptotic complexity improvement of the general case,  going beyond the exponential fast-forwarding studied previously~\cite{AA17}.
	We also consider the problem of fast-forwarding
	in subspaces, which concerns the case where quantum evolution occurs only in a certain subspace of the full Hilbert space.
	This notion of subspace fast-forwarding is useful for simulating physical systems because often
	the evolution occurs in certain subspaces of, for example, low energies or certain preserved symmetries~\cite{csahinouglu2021hamiltonian,TSCT21, su2021nearly}.
	Moreover, since certain quantum systems (e.g., bosonic systems) cannot be directly simulated on a digital quantum computer, we provide a definition 
	of fast-forwarding that is relative to a specified set of observables. 
	The ``elementary gates'' in this case correspond to certain unitary operators that are exponentials of these observables. These definitions are presented in detail in Sec.~\ref{sec:def}.

	Following these definitions, we present several examples of quantum systems that can be fast-forwarded and provide detailed constructions for simulating their time evolution. These are:
	
	\vspace{0.2cm}
	
	\indent {\bf{Block diagonalizable Hamiltonians:}} A class of Hamiltonians that can be exponentially fast-forwarded consists of those Hamiltonians that can be brought into block diagonal form with an efficient quantum circuit if each block can be computed efficiently and fast-forwarded. 
	We investigate fast-forwarding based on this property in Sec.~\ref{subsec:BlockDiagonalization} for spin systems. A particular case,
	which is analyzed in Sec.~\ref{subsubsec:su2hamiltonians},
	is the class of spin Hamiltonians that are invariant under permutations of spins (and thus preserve the magnitude of the total angular momentum). In this case, we achieve exponential fast-forwarding by employing the Schur transform~\cite{BCH06,BCH05}, which block diagonalizes the Hamiltonian in the angular momentum subspaces.

	\indent {\bf{Frustation-free Hamiltonians at low energies:}} In Sec.~\ref{subsec:FrustrationFree}, we investigate subspace fast-forwarding for certain positive Hamiltonians of spin systems, which include those that satisfy a frustration-free property~\cite{PVCW08,BT09}.
	We show that these Hamiltonians can be polynomially fast-forwarded when the input states are guaranteed to be in a certain low-energy subspace.
	We achieve fast-forwarding by using the spectral gap amplification method given in Refs.~\cite{SB13, CS16} in combination with
	quantum phase estimation~\cite{Kit96,CEMM98}.

	\indent {\bf{Fermionic and bosonic Hamiltonians:}} 
	In Sec.~\ref{sec:Fermionic/BosonicModels},
	we study exponential fast-forwarding in quadratic fermionic and bosonic Hamiltonians. We first consider
	a Lie algebraic setting and focus on
	Lie algebraic models of quantum computation, where the quantum gates correspond to certain unitary transformations induced by the algebra.
	In Sec.~\ref{subsec:LieAlgebraDiagonalization} we describe exponential fast-forwarding via Lie-algebra diagonalization. We then apply the technique to formalize a result on exponential fast-forwarding of quadratic fermionic Hamiltonians (Sec.~\ref{subsec:QuadraticFermionicHamiltonians}) and number-conserving quadratic bosonic Hamiltonians (Sec.~\ref{subsec:QuadraticBosonicHamiltonians}).

	\vspace{0.2cm}
	
	Our results address the simulation of quantum systems
	that are abundant in physics. These systems include condensed matter spin models and quantum chemistry and nuclear physics models, such as the well-known Lipkin-Meshkov-Glick model~\cite{lipkin1965validity}, which preserve the magnitude of the total angular momentum. 
	Frustration-free Hamiltonians are also ubiquitous and include the AKLT model~\cite{AKLT88} and, more generally, those ``parent'' Hamiltonians that appear in the context of projected entangled pair states (PEPS)~\cite{PVCW08}.
	While a complexity improvement for simulating these systems under the assumption that the initial state is supported in a low-energy subspace was shown in Ref.~\cite{csahinouglu2021hamiltonian}, that result does not demonstrate fast-forwarding as the quantum complexity is still superlinear in $t$.
	Fermionic and bosonic Hamiltonians are also important, and fast simulation methods for these systems will play an important role in the general problem of simulating quantum field theories~\cite{jordan2012quantum}.
	Our methods exploit different properties of these systems and go beyond the fast-forwarding approach based on diagonalization.

	Last, we note that our definitions immediately lead us to a generalization of the correspondence between fast-forwarding and precise energy measurements discussed in Ref.~\cite{AA17}.
	In Sec.~\ref{sec:energy} we show that, roughly, {polynomial} fast-forwarding is equivalent to {polynomially-precise} energy measurements in qubit systems.

	\section{Concepts and definitions}
	\label{sec:def}
	
	For a given Hamiltonian $H$ and $t>0$, Hamiltonian simulation methods aim at simulating the evolution operator $U(t):=e^{-itH}$. They do this by implementing a sequence of ``elementary'' quantum gates, 
	and the number of these gates determines the quantum complexity of the method. This complexity depends on parameters such as $t$, system size, and precision.
	The fast-forwarding problem is then focused on finding fast ways of simulating $U(t)$,
	where the figure of merit is the quantum complexity and its asymptotic scaling, particularly with $t$. Roughly,
	a Hamiltonian is said to be fast-forwarded if this complexity is sublinear in $t$.

	Several points must be addressed before
	providing precise definitions of fast-forwarding.
	First, the quantum complexity depends on the computational model under consideration, which specifies what the elementary quantum gates are. Usually, these quantum gates are chosen to be compatible with physical implementations.
	For example, in the standard circuit model of quantum computation, elementary quantum gates may correspond to two-qubit gates.  For a fermionic or bosonic model of computation, elementary gates correspond to simple unitary operators induced by corresponding fermionic or bosonic algebras.
	\begin{definition}[Models of quantum computation]
		\label{def:modelsofQC}
		A model of quantum computation can be specified by a set of observables $\mathfrak h =\{O_1,O_2,\ldots\}$
		acting on a Hilbert space $\mathcal H$.
		The elementary quantum gates are unitary transformations expressed as exponentials $e^{-i \theta_l O_l}$, where $\theta_l \in \mathbb R$ and $|\theta_l|\le 1$. A quantum circuit in this model is a sequence of elementary quantum gates, and the number of these gates is the quantum complexity.
	\end{definition}
	
	The condition $|\theta_l|\le 1$ is arbitrary -- i.e., we can bound the phases by a different constant -- but is needed to restrict the power of each quantum gate. 
	This is particularly relevant in bosonic models of computation where, in contrast with certain two-qubit gates (induced by Pauli operators), bosonic gates are not periodic in $\theta_l$.
	\vspace{0.2cm}
	
	Second, known \emph{no-fast-forwarding} results~\cite{BAC07,AA17,HHK+18}
	place a lower bound on the worst-case quantum complexity for simulating classes of Hamiltonians as long as the evolution time satisfies $t \le T$, where $T$ depends on certain problem parameters, such as  the number of qubits or spins. These lower bounds are commonly presented
	as asymptotic scalings.
	If the Hamiltonian is fixed, however, it may be possible to simulate 
	that particular Hamiltonian with quantum complexity that has a very mild (polylogarithmic) dependence on $t$, for example, by exact diagonalization for large $t$. Since the asymptotic scaling of the quantum complexity with $t$ and other parameters is in general of relevance, we will consider sequences of Hamiltonians $\{H_n\}_n$ that depend on a ``system-size'' parameter $n=1,2,\ldots$. For qubit systems, $n$
	is the number of qubits. For fermionic or bosonic systems, $n$ can be the number of modes.
	By allowing the system size to grow, we are able to consider maximum evolution times that increase in order to determine the true asymptotic scaling of the gate complexity.
	For the problem of fast-forwarding quantum evolution, it is then
	important to determine the parameters of Hamiltonian simulation precisely as a function of $n$, which are defined as follows:
	
	\begin{definition} [Hamiltonian-simulation parameters]
		\label{def:simulparamsubsp}
		Let $\{H_n\}_n$ be a sequence of Hamiltonians acting on Hilbert spaces $\{\mc H_n\}_n$.
		The quantum circuits $\{V_n(t)\}_{n,t}$
		simulate the Hamiltonians $\{H_n\}_n$ on subspaces $\{\cS_n \subseteq \cH_n\}_n$
		with parameters $(T(n),\epsilon(n),G(n))$ if the quantum complexity is at most $G(n)$ in a given model of quantum computation 
		and, for each $n$, all $\ket{\psi_n} \in \cS_n$, and all $t\le T(n)$,
		\begin{align}
			\label{eq:FFsubsp}
			\left\|(e^{-i t H_n}\otimes \one_{\mc A} - V_n(t))\ket{\psi_n} \otimes \ket 0_{\mc A}\right\| \le \epsilon(n) \;.
		\end{align}
		The operator $\one_{\mc A}$ is the identity  operator acting on an ancillary register $\cA$ and $\ket 0_{\mc A}$
		is some simple initial state for that register.
	\end{definition}
	
	Throughout this paper, $\|\ket \phi\|$ is the Euclidean norm of the quantum state $\ket \phi$ and $\|A\|$ is the spectral norm of the operator $A$.
	
	\vspace{0.2cm}
	Definition~\ref{def:simulparamsubsp} concerns the
	largest quantum complexity, $G(n)$, 
	for approximating $e^{-itH_n}$ for {\em all} times $t \le T(n)$. While this is not necessary, we will see that understanding the asymptotic behavior of $G(n)$ alone will suffice to 
	classify many classes of Hamiltonians that can be fast-forwarded, including previously known examples and the examples we provide in the following sections. 
	Nevertheless, one may be interested in the ability of fast-forwarding in, for example, other specific regions
	of $t$, in which case the above definition could be adapted to that setting.  
	
	It is important to remark that a circuit $V_n(t)$ might include operations during a pre- or post-processing step. 
	In some cases these steps can be implemented on a classical computer (see the examples in Sec.~\ref{sec:Fermionic/BosonicModels}), but we will assume they are all part of the quantum operation that simulates the Hamiltonian, thereby avoiding any complication in separating quantum and classical complexities.
	Some examples in this paper do require some form of pre-processing, but the complexity of such a step is not dominant in those cases.

	Last, any useful definition of fast-forwarding must capture the ability to simulate a particular Hamiltonian  (or a class of Hamiltonians) with quantum complexity
	that is below a lower bound established 
	for worst-case instances of a class. For example,
	one can establish a no-fast-forwarding theorem for the class of local spin Hamiltonians~\cite{HHK+18} that would apply to the worst case, but there are still
	local Hamiltonians in that class that can be simulated more rapidly with quantum complexity that is sublinear in $t$, such as XY Ising models~\cite{VCL09}. The no-fast-forwarding line, which we define below, will play an important role in a definition of fast-forwarding.
	
	\begin{definition}[No-fast-forwarding line]
		\label{def:non-fastforwarding_line}
		Let $\{\mc C_n\}_n$ denote classes of Hamiltonians, i.e. $\mc C_n$ is a subset of the Hermitian operators acting on $\mc H_n$, and let $T(n) \ge 0$ and $\epsilon(n) \ge 0$ be functions of $n$.
		The no-fast-forwarding line $l(n)$ with respect to these classes is the function
		\begin{align}
			l(n) = \max_{H_n\in \mc C_n}\min_{\{V_n(t)\}_t}G(n) \; ,
		\end{align}
		where the quantum circuits $\{V_n(t)\}_t$ simulate the Hamiltonian $H_n \in \mc C_n$ with parameters $(T(n),\epsilon(n),G(n))$.
	\end{definition}
	
	That is, for a given $n$, the no-fast-forwarding line is the minimum quantum complexity required for simulating every Hamiltonian in a given class and quantum computational model. This line depends on the maximum simulation time $T(n)$; however, in our analyses and examples $T(n)$ is a fixed function of $n$ and we do not need to consider this dependence explicitly.

	If a particular sequence of Hamiltonians of those classes can be simulated with less quantum complexity, i.e. asymptotically less than $l(n)$, then we will claim that such a Hamiltonian sequence can be fast-forwarded. In more detail, we define
	fast-forwarding as follows:
	
	\begin{definition}[Fast-forwarding]
		\label{def:general_fast-forwarding}
		Let $\{\mc C_n\}_n$ denote classes of Hamiltonians, i.e. $\mc C_n$ is a subset of the Hermitian operators acting on $\mc H_n$, and let $T(n)\ge 0$ and $\epsilon(n) \ge 0$ be functions of $n$.
		The quantum circuits $\{V_n(t)\}_{n,t}$ are said to be 
		$(T(n), \epsilon(n), G(n))$-fast-forwarding a Hamiltonian sequence $\{H_n \in \mc C_n\}_n$
		on subspaces $\{\cS_n\}_n$
		if the following hold:
		
		\begin{enumerate}
			\item \sloppy The quantum circuits $\{V_n(t)\}_{n,t}$ simulate the Hamiltonians $\{H_n\}_n$ with parameters $(T(n),\epsilon(n),G(n))$, and
			
			\item 
			\begin{align}
				\lim_{n \rightarrow \infty}\frac{G(n)}{l(n)} =0 \;.
			\end{align}
		\end{enumerate}
	\end{definition}

	Our definition of fast-forwarding is relative to the classes of Hamiltonians $\{\mc C_n\}_n$, and this is needed because certain classes are more difficult to simulate than others. If, however, those classes are chosen to be too restrictive, then the definition will not capture
	the ability to fast-forward a Hamiltonian in an interesting class. For example, a diagonal qubit Hamiltonian $H_n$ may not be fast-forwarded relative to a class $\{\mc C_n\}_n$ of diagonal qubit Hamiltonians despite having an algorithm of polylogarithmic complexity in $t$ because every Hamiltonian in $\cC_n$ can be simulated with such complexity. But if $\{\mc C_n\}_n$ are more general, not necessarily diagonal qubit Hamiltonians, and $H_n \in \mc C_n$ is diagonal, then the definition will serve its purpose. Likewise, we will require $\{\mc C_n\}_n$ and $\{ H_n\}_n$ to consist of Hamiltonians that are normalized in some way. 
	Otherwise, $l(n)$ and $G(n)$ could be artificially large or small, respectively, implying fast-forwarding under our definition.

	Definition~\ref{def:general_fast-forwarding} also depends on a Hamiltonian
	sequence $\{H_n\}_n$ with no explicit requirement
	that these Hamiltonians are related in some way.
	Nevertheless, the definition becomes useful
	when the sequence describes the ``same'' type of Hamiltonians for different system sizes $n$.
	For example, $H_n$ may be constructed from $H_{n-1}$ by adding a term that involves a new subsystem; e.g., the $n$-th spin or the $n$-th fermionic or bosonic mode.
	We sometimes refer to such sequences as \emph{meaningful sequences of Hamiltonians} and define them more precisely for spin and fermionic or bosonic systems in Secs.~\ref{sec:spinsystems} and~\ref{sec:Fermionic/BosonicModels}, respectively.
	Similarly, while no requirement is set for the subspaces $\{\cS_n\}_n$, the definition will be useful when $\cS_n$ is related to $\cS_{n-1}$ in some way (e.g., $\cS_n \equiv \mc H_n$).
	For simplicity, we will be concerned with the simulation of Hamiltonians to constant error $\epsilon(n)=\epsilon$, but our results can be easily generalized to other cases.
	
	\vspace{0.2cm}
	
	It is generally difficult to find the no-fast-forwarding line $l(n)$.
	To show that a Hamiltonian can be fast-forwarded, we often show that it can be simulated with asymptotic quantum complexity that grows slower in $t$ than a lower bound $l'(n,t)$ of $l(n)$. These lower bounds, also known as no-fast-forwarding theorems, are known for some classes of Hamiltonians.
	Reference~\cite{HHK+18}, for example, provides this lower bound for the class of geometrically-local qubit Hamiltonians, although these Hamiltonians are time-dependent. 
	In this case, $l'(n,t)=\tilde\Omega(n t)$
	for $t \le T(n)=4^n$ and thus $l(n)=\tilde \Omega(n 4^n)$.
	Related is a result of Ref.~\cite{AA17} that
	strongly suggests that no exponential fast-forwarding is possible for all $n$-qubit Hamiltonians that are 2-sparse,
	where the matrix of $H_n$ has at most two nonzero elements per row/column. Other known lower bounds apply
	to the setting in which Hamiltonians can be accessed via a black-box unitary~\cite{BAC07}.
	These results suggest that for some classes of Hamiltonians analyzed in this paper, the no-fast-forwarding line satisfies $l(n)= \Omega(nT(n))$, where $T(n)$ is exponential in $n$ (e.g., $T(n)=4^n$). When the no-fast-forwarding line has this form, Def.~\ref{def:general_fast-forwarding} coincides with the intuitive notion of fast-forwarding as a simulation algorithm achieving complexity scaling sublinearly in the evolution time.

	The various types of fast-forwarding can be characterized by the asymptotic behavior of $G(n)$ and $l(n)$. For example, for the exponential fast-forwarding problem analyzed in Ref.~\cite{AA17}, a class of Hamiltonians can be simulated with quantum complexity $G(n) = \cO ({\rm poly}(n))$
	while $l(n)$ (or $T(n)$) is exponential in $n$. 
	Our definition, however, allows for more general cases, including polynomial fast-forwarding. This occurs when, in particular, both $G(n)$ and $l(n)$ are  exponential in $n$, but $G(n)/l(n) \rightarrow 0$. For example, a Hamiltonian simulation method of quantum complexity that scales as $n t^\alpha$ for $\alpha<1$ and all $t \le T(n)= 4^n$,
	will imply polynomial fast-forwarding relative to a class of Hamiltonians with no-fast-forwarding line of the form $l(n)= \Omega(n4^n)$.
	In Fig.~\ref{fig:FFDef} we sketch  examples of fast-forwarding and no-fast-forwarding
	using our definitions, where all the relevant quantities are increasing in $n$ or $t$.

	\begin{figure} [htbp]
		\centering
		\includegraphics[width=1.\linewidth]{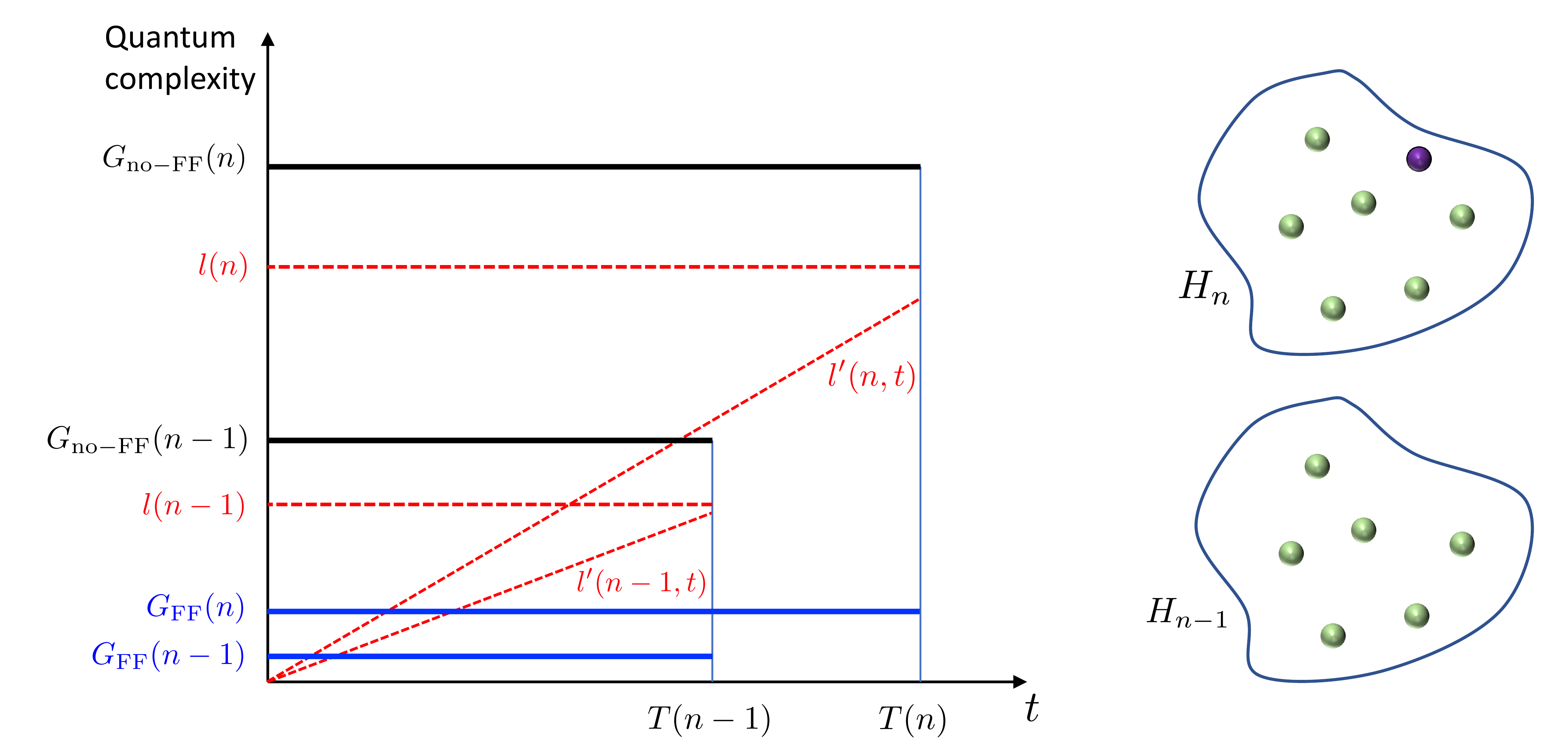}
		\caption{\label{fig:FFDef}Examples of fast-forwarding and no-fast-forwarding according to Def.~\ref{def:general_fast-forwarding}. For fast-forwarding, the quantum complexity $G_{FF}(n)$ crosses the lower bound $l'(n,t)$ and lies under the no-fast-forwarding line $l(n)$, and $G_{{\rm FF}}(n)/l(n)\rightarrow 0$. For no-fast-forwarding, the quantum complexity $G_{{\rm no-FF}}(n)$ lies above $l(n)$. Exponential or polynomial fast-forwarding is obtained depending on the asymptotic behavior of $G(n)$ and $l(n)$. The Hamiltonians $\{H_n\}_n$ belong to classes $\{\cC_n\}_n$ and model quantum systems of different sizes $n$.
		}
	\end{figure}

	In the following, we assume $\hbar=1$ and do not assign units to Hamiltonians or time,
	which is common practice for analyzing quantum simulation algorithms. Nevertheless, if one is interested in the resulting complexity in an actual application where units are considered, this will easily follow from our results. 
	We also use $\cA$, $\cA'$, etc., to denote ancillary systems, which include {\em all} additional systems required by a particular Hamiltonian simulation approach, such as those needed to implement certain classical computations reversibly.

	\section{Fast-forwarding of local spin systems}
	\label{sec:spinsystems}

	In this section, we consider fast-forwarding relative to the class of local spin systems, where $n$ spins are
	located at the vertices of lattices $\Lambda_n$, $n=1,2,\ldots$.
	Each spin is of dimension $d$, and for each $n$, the Hilbert space is $\mc H_n = \left(\mathbb{C}^{d}\right)^{\otimes n}$.
	For each system there is a Hamiltonian $H_n$ acting on $\cH_n$ that describes the interactions. This Hamiltonian is a sum of (local) Hermitian terms; that is, $H_n= \sum_{X \subset \Lambda_n} h_X$, where $h_X$ describes the interactions of spins in a subset $X \subset \Lambda_n$. 
	We detail the structure of these Hamiltonians by assuming them to be $k$-local and of degree $g$.
	This means that each subset $X$ involves at most $k$ spins, and each spin is involved in at most $g$ of the subsets $X$. For simplicity, we take $k$ and $g$ to be constant and also assume $\|h_X\| \le 1$, for all $X$. 
	The no-fast-forwarding line is assumed to be $l(n)=\Omega(nT(n))$, where $T(n)=d^{2n}$.

	The meaningful Hamiltonian sequences $\{H_n\}_n$ are such that $H_n = H_{n-1} \otimes \one_n + v_n$, where $v_n$ is a sum of interaction terms that act  non-trivially on the new ($n$-th) spin and other spins of the system.
	The Hilbert spaces are such that $\cH_n=\cH_{n-1} \otimes \mathbb C^d$ and $\one_n$ is the identity operator on $\mathbb C^d$, the space associated with the new spin. 
	We are interested in the complexity of simulating spin Hamiltonians in the standard gate-based model, where elementary quantum gates are two-qubit gates.
	Nevertheless, some results of this section apply to more general quantum systems that are not necessarily spin systems.

	The first example of fast-forwarding of a spin system is described in Sec.~\ref{subsec:BlockDiagonalization}, where we study Hamiltonians with certain block diagonal structure, which occurs due to a symmetry of the system.
	Roughly, if a Hamiltonian can be taken to a block diagonal form via a unitary that admits an efficient quantum circuit implementation, and the block diagonal Hamiltonian can be simulated with complexity sublinear in $t$, then the original Hamiltonian can be fast-forwarded.
	This occurs, for example, when the Hamiltonian is invariant under the permutation of spins, such as for the well-known Lipkin-Meshkov-Glick model~\cite{lipkin1965validity}, and we show that this case can be exponentially fast-forwarded.
	Permutation-invariant spin Hamiltonians have also been used in quantum computing for noise protection~\cite{zanardi1999symmetrizing} and optimization~\cite{cook2020quantum}.

	The second example is described in Sec.~\ref{subsec:FrustrationFree}, where we study the so-called parent or frustration-free Hamiltonians in the context of subspace fast-forwarding.
	We show that these Hamiltonians can be polynomially fast-forwarded if the subspaces $\cS_n$ are those of  ``low'' energies.
	The actual level of fast-forwarding depends on the parameters of the problem, especially on a low-energy cutoff $\Delta$, which might depend on the evolution time $t$.

	In the following, we  drop the subscript $n$
	from the Hamiltonians and Hilbert spaces when these are clear from context.

	\subsection{Block diagonalization}
	\label{subsec:BlockDiagonalization}
	
	Any $n$-spin Hamiltonian $H$ can be decomposed based on its action on invariant subspaces as
	\begin{align}
		H = \bigoplus_\mu \left.H\right|_{\cI_\mu} \; ,
	\end{align}
	where $\mu$ may be associated with some conserved quantity (quantum number), $\left.H\right|_{\cI_\mu}$ is the restriction of the Hamiltonian to the subspace $\cI_\mu \subseteq \cH$, and $\cH = \bigoplus_\mu \cI_\mu$.
	To fast-forward $H$, we are particularly interested in cases where the evolution under each $\left.H\right|_{\cI_\mu}$ can be fast-forwarded. More precisely, let $U$ be a block diagonalizing unitary operation or quantum circuit that
	performs the mapping
	\begin{align}
		\label{eq:block-diagonalizing}
		U \left( H \otimes \ketbra 0_{\cA} \right) U^\dagger = \sum_\mu \left.H'\right|_{\cI'_\mu} \otimes \ketbra \mu_{\cA'} \; ,
	\end{align}
	where $\cA,\cA'$ are some ancillary systems and the basis states $\{\ket{\mu}_{\cA'}\}_\mu$ encode the values of $\mu$.
	In general, $\cI'_\mu$ are subspaces of a different Hilbert space $\cH'$, so the Hamiltonians 
	$\left.H'\right|_{\cI'_\mu}$ may be different than $\left.H\right|_{\cI_\mu}$, but we require $\dim \cI_\mu = \dim \cI'_\mu$.
	Note that the tensor product decompositions on the left and right hand sides of Eq.~\eqref{eq:block-diagonalizing} may also be different.
	Consider now another unitary
	\begin{align}
		\label{eq:U'def}
		U'(t)=\sum_\mu U_\mu(t) \otimes \ketbra \mu_{\cA'}
	\end{align}
	that approximates the evolution under each $\left.H'\right|_{\cI'_\mu}$ as
	\begin{align}\label{eq:BlockUnitary}
		\left\| \left(U_\mu(t) - e^{-i t H'|_{\cI'_\mu}} \right) \ket \psi  \right\| \le \epsilon \;,
	\end{align}
	for all $\ket \psi \in \cI'_\mu$. 
	We obtain:
	\begin{theorem}
		Let $\epsilon > 0$, $T>0$, $g$ be the quantum complexity of $U$ in Eq.~\eqref{eq:block-diagonalizing}, and $g'(t,\epsilon)$ be an upper bound on the quantum complexity of  $U'(t)$ in Eq.~\eqref{eq:U'def}, such that Eq.~\eqref{eq:BlockUnitary} holds. 
		Assume that $g'(T,\epsilon) \ge g'(t,\epsilon)$ for all $t \in [0,T]$. 
		Then, there exist quantum circuits $\{V(t)\}_t$ that simulate $H$ on $\cH$ with parameters $(T,\epsilon,2g+ g'(T,\epsilon) )$ for all $0 \le t \le T$.
	\end{theorem}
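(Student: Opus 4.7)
The plan is to take $V(t) := U^\dagger\, U'(t)\, U$, with all three factors acting on the same combined system $\cH \otimes \cA$. The gate count is immediate: reversing the circuit for $U$ realizes $U^\dagger$ at the same cost, so $V(t)$ uses $g + g'(t,\epsilon) + g$ gates, which by the assumed monotonicity $g'(T,\epsilon) \ge g'(t,\epsilon)$ is at most $2g + g'(T,\epsilon)$ uniformly in $t \in [0,T]$. Everything therefore reduces to the error bound of Def.~\ref{def:simulparamsubsp}.

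First I would translate Eq.~\eqref{eq:block-diagonalizing} from an operator identity into an action on states. Applying both sides to a vector $|\psi\rangle \otimes |0\rangle_{\cA}$, writing $|\psi\rangle = \sum_\mu |\psi_\mu\rangle$ with $|\psi_\mu\rangle$ the projection of $|\psi\rangle$ onto $\cI_\mu$, and using that the ancilla projector acts trivially on this vector, one reads off that $U$ sends $\cH \otimes |0\rangle_{\cA}$ isometrically into $\bigoplus_\mu \cI'_\mu \otimes |\mu\rangle_{\cA'}$ as
\begin{align}
U\bigl(|\psi\rangle \otimes |0\rangle_{\cA}\bigr) = \sum_\mu |\psi'_\mu\rangle \otimes |\mu\rangle_{\cA'},
\end{align}
where each $|\psi'_\mu\rangle \in \cI'_\mu$ and the assignment $|\psi_\mu\rangle \mapsto |\psi'_\mu\rangle$ is a linear isometry intertwining $H|_{\cI_\mu}$ with $H'|_{\cI'_\mu}$. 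Because intertwinings pass through the continuous functional calculus, this same isometry intertwines $e^{-itH|_{\cI_\mu}}$ with $e^{-itH'|_{\cI'_\mu}}$, yielding
\begin{align}
U\bigl(e^{-itH}|\psi\rangle \otimes |0\rangle_{\cA}\bigr) = \sum_\mu e^{-itH'|_{\cI'_\mu}}|\psi'_\mu\rangle \otimes |\mu\rangle_{\cA'}.
\end{align}

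Second, I would bound the simulation error. By unitary invariance of the Euclidean norm under the outermost $U^\dagger$ in $V(t)$, followed by the block structure of $U'(t)$ from Eq.~\eqref{eq:U'def} and the orthogonality of the $|\mu\rangle_{\cA'}$,
\begin{align}
\bigl\|(e^{-itH}\otimes \one_{\cA} - V(t))(|\psi\rangle \otimes |0\rangle_{\cA})\bigr\|^2 = \sum_\mu \bigl\|(e^{-itH'|_{\cI'_\mu}} - U_\mu(t))|\psi'_\mu\rangle\bigr\|^2.
\end{align}
Each summand is bounded by $\epsilon^2 \| |\psi'_\mu\rangle \|^2$ via Eq.~\eqref{eq:BlockUnitary}, and $\sum_\mu \| |\psi'_\mu\rangle \|^2 = \| |\psi\rangle \|^2 \le 1$ because $U$ is unitary, delivering the promised error $\epsilon$.

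The only mildly subtle step is the first one — turning the operator identity~\eqref{eq:block-diagonalizing} into the isometric decomposition of $U(|\psi\rangle \otimes |0\rangle_{\cA})$ — because one must check that the block-diagonal right-hand side really does force $U$ to map $\cH \otimes |0\rangle_{\cA}$ into $\bigoplus_\mu \cI'_\mu \otimes |\mu\rangle_{\cA'}$ in a block-compatible manner, rather than mixing with ancilla components orthogonal to $\mathrm{span}\{|\mu\rangle_{\cA'}\}$. Once this bookkeeping is in place the error and gate-count bounds are one-liners, and I do not anticipate any further obstacle.
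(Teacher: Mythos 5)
Your proof is correct and follows essentially the same route as the paper: it uses the circuit $V(t)=U^\dagger\,U'(t)\,U$, counts $2g$ gates for $U,U^\dagger$ and $g'(T,\epsilon)$ for $U'(t)$ via the monotonicity assumption, and then bounds the error by conjugating away the outer $U^\dagger$ and applying the per-block guarantee~\eqref{eq:BlockUnitary}. The only difference is that you spell out the block-by-block isometric intertwining and the orthogonality/sum-of-squares step, which the paper compresses into a single norm equality; the subtlety you flag about interpreting Eq.~\eqref{eq:block-diagonalizing} as also fixing how $U$ maps states in $\cH\otimes\ket{0}_{\cA}$ (rather than merely as an operator conjugation identity) is indeed implicitly assumed in the paper's proof as well.
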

	
	\begin{proof}
		For each $t \le T$, the quantum circuit $V(t)=U^\dag U'(t)U$ of Fig.~\ref{fig:block diagonal evolution} can be used to approximate $e^{-itH}$. That circuit implements $U$ and $U^\dagger$, adding $2g$ to the quantum complexity, and also implements $U'(t)$ that has quantum complexity bounded by $g'(T,\epsilon)$. We can bound the simulation error as follows:
		\begin{align}
			\| (e^{-i t H} \otimes \one_\cA - V(t)) \ket \psi \otimes \ket 0_\cA \| &= \| (e^{-i t (H \otimes \ketbra 0_\cA)}  - V(t)) \ket \psi \otimes \ket 0_\cA\| \\
			& =  \| (\sum_\mu (e^{-i t H'|_{\cI'_\mu}} - U_\mu(t)) \otimes \ketbra \mu_{\cA'} )\ket \phi  \|
			\\
			&\le \epsilon \;,
		\end{align}
		where $\ket \psi \in \cH$ is any state and $\ket \phi=U \ket \psi \otimes \ket 0_\cA$.

		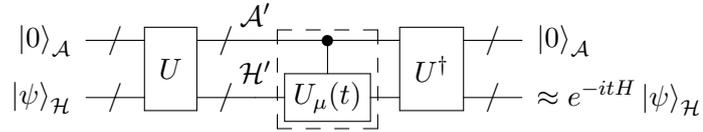
\begin{figure} [htbp]
			\centerline{\Qcircuit @C=1em @R=.7em {
					\lstick{\ket 0_{\cA}} & {/}\qw & \multigate{1}{U} & {/}\qw & \ustick{\cA'}\qw & \ctrl{1} & \multigate{1}{U^\dag} & {/}\qw & \rstick{\ket 0_{\mc A}}\qw\\
					\lstick{\ket \psi_{\cH}} & {/}\qw & \ghost{U} & {/}\qw & \ustick{\cH'}\qw & \gate{U_\mu(t)} & \ghost{U^\dag} & {/}\qw & \rstick{\approx e^{-itH}\ket\psi_{\cH}}\qw \gategroup{1}{6}{2}{6}{0.5em}{--}
			}}
			\caption{\label{fig:block diagonal evolution}The quantum circuit $V(t)$ simulating the time evolution of a block diagonalizable Hamiltonian. The dashed box denotes the unitary $U'(t)$, which implements $U_\mu(t)$ conditional on $\ket{\mu}_{\cA'}$.
			}
		\end{figure}
		
	\end{proof}

	Fast-forwarding using this approach is possible if the quantum complexity of $V(t)$ is sublinear in $t$ for a sufficiently large range of values of $t$.
	Thus, the ability to fast-forward block diagonalizable Hamiltonians depends on the complexity of the various procedures in Fig.~\ref{fig:block diagonal evolution}. Of particular interest
	are those Hamiltonians where the unitary $U$ in Eq.~\eqref{eq:block-diagonalizing}
	has quantum complexity $g={\rm poly}(n)$ and $U'(t)$ in Eq.~\eqref{eq:U'def} has quantum complexity $g'(T,\epsilon)={\rm poly}(n,\log(T))$, for constant error $\epsilon$.
	These Hamiltonians can be exponentially fast-forwarded. They include  Hamiltonians that can be efficiently (block) diagonalized, where $\dim \cI'_\mu$ is polynomial in $n$ and, for each $\mu$, the Hamiltonians $H'|_{\cI'_\mu}$ can be efficiently computed and simulated, as in the example of Sec.~\ref{subsubsec:su2hamiltonians} below.
	
	Note that implementing $U'(t)$ requires implementing
	the unitaries $U_\mu(t)$ conditional on the ancillary state $\ket \mu_{\cA'}$. 
	There are at most $\dim \cH=d^n$ different values of $\mu$, and thus at most $d^n$ unitaries $U_\mu(t)$. The quantum complexity of $U'(t)$ can then be exponential in $n$ in the worst case, but as we discussed, we are interested in cases where this complexity is only ${\rm poly}(n)$.  A standard approach to achieve this complexity in some cases is by having access to a classical algorithm that, on any input $\mu$, provides a classical description of the quantum circuit $U_\mu(t)$ in time ${\rm poly}(n)$. The classical circuit can be efficiently implemented on a quantum computer using reversible quantum gates, which we denote by $C$; see Fig.~\ref{fig:block diagonal controlled unitary}. The unitary $U'(t)$ can then be efficiently implemented by applying the corresponding quantum gates of $U_\mu(t)$ conditional on the ancillary register state that encodes the quantum circuit. The last step is to apply the inverse of $C$ to reset the ancillary state. 
	
	\vspace{0.4cm}
	
	\begin{figure} [htbp]
		\centerline{\Qcircuit @C=1em @R=.7em {
				\lstick{\ket \mu_{\mc A'}} & {/}\qw & \multigate{1}{C^{\;}} & \qw & \multigate{1}{C^\dagger} & \rstick{\ket \mu_{\mc A'}}\qw\\
				\lstick{\ket 0_{\mc A''}} & {/}\qw & \ghost{C^{\;}} & \ctrl{1} & \ghost{C^\dagger} & \rstick{\ket 0_{\mc A''}}\qw\\
				\lstick{\ket {\varphi}_{\cH'}} & {/}\qw & \qw & \gate{U_\mu(t)} & \qw & \rstick{U_\mu(t)\ket {\varphi}_{\cH'}}\qw
		}}
		\caption{\label{fig:block diagonal controlled unitary}
			Efficient implementation of $U'(t)$ using $C$, a reversible version 
			of a classical circuit that outputs a description of $U_\mu(t)$ on input $\mu$. The ancillary system is comprised of two registers: $\cA'$ needed to encode $\mu$ and $\cA''$ needed to encode the description of $U_\mu(t)$.}
	\end{figure}
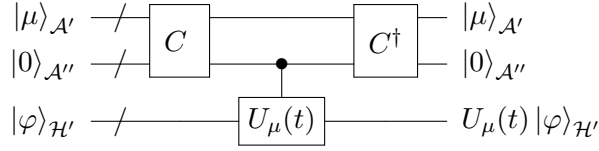

	\subsubsection{Permutation-invariant spin Hamiltonians}\label{subsubsec:su2hamiltonians}
	
	We apply the results of Sec.~\ref{subsec:BlockDiagonalization}
	to fast-forward Hamiltonians that are invariant under permutations of spins. 
	This setting is relevant for studying nuclear and condensed matter systems, where the well-known Lipkin-Meshkov-Glick model is a prime example~\cite{lipkin1965validity}.
	We use the block diagonal structure of these Hamiltonians to show that their time evolution can be exponentially fast-forwarded. In particular, the permutation-invariance ensures the following:
	
	\begin{enumerate}
		\item The Schur transform $U_{\rm Sch}$, which is a unitary transformation, takes the Hamiltonian to a block diagonal form according to the decomposition $\mc H= \bigoplus_\lambda \cQ_\lambda  \otimes \cP_\lambda$, where $\cQ_\lambda$ and 
		$\mathcal P_\lambda$ are the irreducible representations of the unitary group $U(d)$ and symmetric group $S_n$, respectively, labeled by the Young diagrams $\lambda\in \operatorname{Part}(n,d)$.
		An efficient quantum algorithm to implement the Schur transform $U_{\rm Sch}$ is shown in Refs.~\cite{BCH06,BCH05} and analyzed in detail for the case $d=2$ in Appendix~\ref{appx:schur}.
		
		\item Each subspace $\cQ_\lambda$ is of dimension $\poly(n)$ and the Hamiltonian acts trivially on the tensor factors $\mc P_{\lambda}$, i.e., for all $\lambda$ and $\ket{q}\otimes\ket{p}\in \cQ_\lambda\otimes\cP_\lambda$, the Hamiltonian acts as $H(\ket{q}\otimes\ket{p}) = (H_\lambda\ket{q}) \otimes \ket p$.
	\end{enumerate}
	
	Thus, the block diagonalizing quantum circuit in Eq.~\eqref{eq:block-diagonalizing} is $U=U_{{\rm Sch}}$, and the blocks or subspaces are labeled by quantum numbers $\mu=(\lambda,p)$ associated with invariant subspaces $\mc I_{\mu}=\cQ_\lambda \otimes \bb C\ket p$.
	Because the dimensions of the blocks are polynomial, we obtain an efficient circuit $U'(t)$ for simulating the block diagonal Hamiltonians in the Schur basis, obtaining the following result for permutation-invariant Hamiltonians:
	
	\begin{theorem}[Exponential fast-forwarding of permutation-invariant Hamiltonians]
		\label{thm:sud}
		Let $\{\cC_n\}_n$ denote the classes of $k$-local, $n$-spin Hamiltonians of dimension $d$ acting on spaces $\cH_n =\left(\mathbb C^d\right)^{\otimes n}$ and $\{H_n \in \cC_n\}_n$ be a sequence of permutation-invariant Hamiltonians. Then, for given $\epsilon > 0$, $T(n) = d^{2n}$, there exist quantum circuits $\{V_n(t)\}_{n,t}$ in the standard gate model that are $(T(n),\epsilon,G(n))$-fast-forwarding the Hamiltonians $\{H_n\}_n$ on the spaces $\{\cH_n\}_n$, where $G(n)=\poly(n)$. In particular, if $l(n)=\Omega(nT(n))$, this implies exponential fast-forwarding.
	\end{theorem}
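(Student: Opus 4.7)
The plan is to apply the block-diagonalization theorem of Sec.~\ref{subsec:BlockDiagonalization} with the Schur transform in the role of the block-diagonalizing unitary. Because $H_n$ is permutation-invariant, Schur--Weyl duality gives $\cH_n = \bigoplus_\lambda \cQ_\lambda \otimes \cP_\lambda$ with $\lambda \in \operatorname{Part}(n,d)$, and $H_n$ acts as $\bigoplus_\lambda H_\lambda \otimes \one_{\cP_\lambda}$ when viewed in the Schur basis. I would therefore set $U := U_{\rm Sch}$, cite the result of Refs.~\cite{BCH06,BCH05} (detailed in Appendix~\ref{appx:schur}) to conclude that $U_{\rm Sch}$ admits a quantum circuit of complexity $g = \poly(n)$, and identify the ancillary register that holds $\lambda$ with $\cA'$ in Eq.~\eqref{eq:block-diagonalizing}. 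The index $p$ labelling vectors in $\cP_\lambda$ plays no dynamical role and can be left on the Schur-basis registers without needing a controlled operation.

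Next I would construct $U'(t)$. Two facts make this easy. First, $|\operatorname{Part}(n,d)| = O(n^{d-1})$ is polynomial in $n$ for constant $d$. Second, $\dim \cQ_\lambda = \poly(n)$ for each $\lambda$. So a classical pre-processing step can, for every $\lambda$, write down the $\poly(n)\times\poly(n)$ matrix $H_\lambda$ (obtained by expanding the local terms $h_X$ in the Schur basis using the standard Gelfand--Tsetlin or Young--Yamanouchi formulas), diagonalize it numerically in $\poly(n)$ time, and output a classical description of a quantum circuit implementing $e^{-itH_\lambda}$ to precision $\epsilon/|\operatorname{Part}(n,d)|$ on $\cQ_\lambda$; the latter circuit has size $\poly(n, \log(T/\epsilon))$, since each block is $\poly(n)$-dimensional and exponentiating a diagonal matrix to precision $\epsilon'$ requires only $\poly(\log(1/\epsilon'))$ gates per eigenvalue with the time $t$ entering only through the phase, a parameter. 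I would then use the gadget of Fig.~\ref{fig:block diagonal controlled unitary} to implement $U'(t)$ controlled on the Schur register $\ket\lambda_{\cA'}$, giving total complexity $g'(T,\epsilon) = \poly(n,\log T, \log(1/\epsilon))$.

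Putting the pieces together, the block-diagonalization theorem of Sec.~\ref{subsec:BlockDiagonalization} yields a circuit $V_n(t) = U_{\rm Sch}^\dagger U'(t) U_{\rm Sch}$ that simulates $H_n$ on all of $\cH_n$ with parameters $(T(n),\epsilon,G(n))$ where $G(n) = 2g + g'(T(n),\epsilon) = \poly(n)$ since $T(n) = d^{2n}$ makes $\log T(n) = O(n)$. Finally, the fast-forwarding condition is verified by observing that, under the assumption $l(n)=\Omega(nT(n)) = \Omega(n d^{2n})$, one has $G(n)/l(n) = \poly(n)/\Omega(n d^{2n}) \to 0$, which is the exponential fast-forwarding claimed.

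The main technical obstacle I expect is the efficient and uniform implementation of the controlled block-unitary in $U'(t)$: one must verify that the classical algorithm producing a circuit description of $e^{-itH_\lambda}$ truly runs in $\poly(n)$ time for every Young diagram $\lambda$, that the errors across blocks sum coherently to at most $\epsilon$ rather than accumulating in a way that forces an $n$-dependent overhead inside $\log(1/\epsilon)$, and that the reversible evaluation of this classical algorithm on a quantum register (as in Fig.~\ref{fig:block diagonal controlled unitary}) does not introduce hidden dependence on $t$. The rest of the argument is essentially bookkeeping on top of the block-diagonalization template already established.
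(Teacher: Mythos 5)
Your proposal follows essentially the same route as the paper's proof: Schur transform as the block-diagonalizing unitary $U$ (with $\poly(n)$ cost from Refs.~\cite{BCH06,BCH05}), polynomial dimension of each $\cQ_\lambda$ and polynomial number of partitions to bound the cost of $U'(t)$, and the Fig.~\ref{fig:block diagonal controlled unitary} gadget (or conditioning on each $\lambda$ separately) to implement the controlled block evolution. The only cosmetic difference is that the paper compiles each $e^{-itH|_{\cQ'_\lambda}}$ directly via a QR/Givens decomposition rather than diagonalizing first, and it does not impose a $1/|\operatorname{Part}(n,d)|$ per-block precision penalty since the block structure makes the error the maximum over blocks rather than a sum — your more conservative budget is harmless but unnecessary.
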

	
	\begin{proof}
		For a given $n$, Fig.~\ref{fig:su2 simulation} shows the circuit used to simulate the Hamiltonian $H=H_n$ for time $t$. The Schur transform $U_{{\rm Sch}}$ block diagonalizes the Hamiltonian $H$ as follows.
		Consider an $n$-spin input state $\ket x\in \cH$ along with an ancillary state $\ket 0_\cA$.
		The unitary $U_{{\rm Sch}}$ is a basis transformation according to the Schur-Weyl duality. In the Schur basis, basis states $\ket\lambda$ of a register $\cA'_1$ label the Young diagram while each basis state $\ket p\in\mc P'_\lambda$
		of a register $\cA'_2$ labels a basis vector of the corresponding symmetric group irreducible representation, 
		and $\cA'=\cA'_1 \cup \cA'_2$ in the notation of Sec.~\ref{subsec:BlockDiagonalization}; see Eq.~\eqref{eq:block-diagonalizing} and Fig.~\ref{fig:block diagonal evolution}.
		Also, the basis states $\ket q$ satisfy
		$\ket q \in \mc Q'_{\lambda} \subseteq \cH'$.
		The input state is then mapped to a linear combination of the Schur basis states $\{\ket q_{\cH'}\otimes \ket\lambda_{\cA'_1} \otimes \ket p_{\cA'_2} \}_{q,\lambda, p}$. 
		As we have argued, $H$ is block diagonal in the Schur basis, only acting on $\cH'$ conditioned on $\ket{\lambda}_{\cA'_1}$. 
		After implementing the operator $U'(t)=\sum_\lambda U_\lambda(t) \otimes ( \ketbra \lambda_{\cA'_1} \otimes \one_{\cA'_2})$, where $U_\lambda(t)=e^{-it \left .H \right|_{\cQ'_\lambda}}$, we transform back to the original basis.
		
		\begin{figure} [htbp]
			\centerline{\Qcircuit @C=1em @R=1.5em {
					\lstick{\ket 0_{\mc A}} & {/}\qw & \multigate{2}{U_{{\rm Sch}}} & {/}\qw & \qw & \ustick{\ket \lambda_{\cA'_1}}\qw & \qw & \qw & \ctrl{2}& \ustick{\! \! \! \! \! \! \! \! \! \! \! \! \! \! \cA'}\qw & \qw &\multigate{2}{U_{{\rm Sch}}^\dagger} & {/}\qw & \rstick{\ket 0_{\mc A}} \qw \\
					& & \push{\rule{3em}{0em}} & {/}\qw & \qw & \ustick{\ket p_{\cA'_2}}\qw & \qw & \qw & \qw & \qw & \qw &\ghost{U_{{\rm Sch}}^\dagger} & \\
					\lstick{\ket x_{\cH}} & {/}\qw  &\ghost{U_{{\rm Sch}}} & {/}\qw & \qw & \ustick{\ket q_{\cH'}}\qw & \qw & \qw & \gate{e^{-it\left.H\right|_{\mc Q'_\lambda}}}& \qw & \qw &\ghost{U_{{\rm Sch}}^\dagger} & {/}\qw & \rstick{e^{-itH}\ket x_{\cH}} \qw \gategroup{1}{9}{2}{9}{1.5em}{.}
			}}
			\caption{\label{fig:su2 simulation}The quantum circuit that simulates the time evolution of a permutation-invariant Hamiltonian. The ancillary register $\cA'$ appearing in Eq.~\eqref{eq:block-diagonalizing} contains two registers $\cA'_1$ and $\cA'_2$ in this case. The state after the basis transformation $U_{\rm Sch}$ is in general a linear combination of Schur basis states $\{\ket q_{\cH'} \otimes \ket\lambda_{\cA'_1} \otimes \ket p_{\cA'_2} \}_{q,\lambda, p}$.}
		\end{figure}
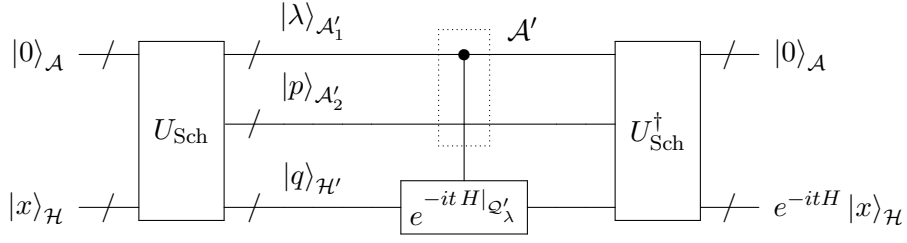
		
		We now consider the complexity of simulating $H$
		when $t\le T(n)=d^{2n}$ and $\epsilon>0$ is constant.
		We allow $\epsilon/3$ error for each of the three components $U_{\rm Sch}$, $e^{-itH|_{\cQ'_\lambda}}$, and $U_{\rm Sch}^\dag$.
		As described in Ref.~\cite{BCH05}, $U_{\rm Sch}$ (and $U_{\rm Sch}^\dag$) can be implemented with complexity $n\poly(\log n,d,\log \frac 1 \epsilon)=\poly(n)$.
		In the Schur basis, each subspace $\mc Q'_\lambda$ has dimension given by the number of semistandard Young tableaux of shape $\lambda$ and maximum entry $d$, which is $\cO(n^{d^2})$\footnote{An explicit formula is given by $\dim \mc Q'_\lambda = \dim \mc Q_\lambda = \frac{\prod_{1\le i<j\le d}\lambda_i-\lambda_j+j-i}{\prod_{m=1}^d (m-1)!}$.}.
		Thus, each block $e^{-itH|_{\cQ'_\lambda}}$ of the unitary operator $U'(t)$ has size polynomial in $n$ and can be decomposed into a product of ${\rm poly}(n)$ two-level unitaries using QR decomposition by Givens rotations, as described in Section 4.5.1 of Ref.~\cite{NC01}.
		The circuit $U'(t)$ consists of these two-level unitaries controlled on the $\cA'_1$ register.
		By classically calculating a description of the quantum circuit for $\lambda$ on another ancillary register and applying the conditional two-level unitaries  as in Fig.~\ref{fig:block diagonal controlled unitary}, we obtain a circuit implementing $U'(t)$.
		Alternatively, because there are at most $|\operatorname{Part}(n,d)|=\binom{n+d-1}{d-1}=\cO(n^{d-1})$ labels $\lambda$, we could also evolve the blocks conditioned on the each value of $\lambda$ separately.
		In either implementation, the circuit for $U'(t)$ has quantum complexity that is polynomial in $n$.
	\end{proof}
	
	We can obtain a more precise expression for the simulation complexity by fixing specific encodings for the different registers in the Schur basis. For simplicity, we demonstrate the result for qubit systems:
	
	\begin{theorem}[Exponential fast-forwarding of permutation-invariant  qubit Hamiltonians]
		\label{thm:su2}
		Let $\{\cC_n\}_n$ denote the classes of $k$-local, $n$-qubit Hamiltonians acting on spaces $\cH_n=\left(\mathbb C^2\right)^{\otimes n}$ and
		$\{H_n \in \cC_n \}_n$ be a sequence of permutation-invariant qubit Hamiltonians. Then, for given $\epsilon > 0$, $T(n)=2^{2n}$, there exist quantum circuits $\{V_n(t)\}_{n,t}$ in the standard gate model that are $(T(n),\epsilon,G(n))$-fast-forwarding
		the Hamiltonians $\{H_n \in \cC_n \}_n$ on the spaces $\{\cH_n\}_n$, 
		where $G(n)=\cO(n^3)$.
		In particular, if $l(n)= \Omega(n T(n))$, this implies exponential fast-forwarding.
	\end{theorem}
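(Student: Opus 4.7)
The plan is to specialize the construction of Theorem~\ref{thm:sud} to qubit systems ($d=2$) and to tally carefully the gate counts of the three components of the circuit in Fig.~\ref{fig:su2 simulation} so as to obtain the explicit $\cO(n^{3})$ bound. For $d=2$ the Schur--Weyl decomposition reads $\cH_n = \bigoplus_{J} \cQ_J \otimes \cP_J$, where $J\in\{0,1/2,\ldots,n/2\}$ is the total-angular-momentum label, $\dim \cQ_J = 2J+1 \le n+1$, and $J$ can be encoded on $\cO(\log n)$ qubits in register $\cA'_1$. The no-fast-forwarding line satisfies $l(n)=\Omega(n\cdot 4^{n})$, so any $\poly(n)$ simulation complexity automatically yields exponential fast-forwarding.

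First I would invoke the qubit Schur transform $U_{{\rm Sch}}$ from Appendix~\ref{appx:schur}, whose implementation for constant error is bounded by $\cO(n^{3})$ elementary gates (coming from $n$ sequential Clebsch--Gordan couplings, each of cost polynomial in $n$). Second, I would implement the conditional block evolution $U'(t)=\sum_{J} U_J(t)\otimes(\ketbra{J}_{\cA'_1}\otimes\one_{\cA'_2})$ with $U_J(t)=e^{-it\left.H\right|_{\cQ'_J}}$. Since each $\left.H\right|_{\cQ'_J}$ is an $(n+1)\times(n+1)$ Hermitian matrix whose entries can be computed classically from the description of $H$, the unitary $U_J(t)$ can be formed and then decomposed by QR/Givens factorization into $\cO(n^{2})$ two-level unitaries, as in Section~4.5.1 of~\cite{NC01}.

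The two-level factors must then be applied conditionally on the value of $J$ (on $\cO(\log n)$ qubits of $\cA'_1$) and on the pair of basis states of $\cH'$ they connect (on $\cO(\log n)$ qubits). Using the approach of Fig.~\ref{fig:block diagonal controlled unitary}, a classical reversible subroutine $C$ writes onto an ancilla $\cA''$ the $\cO(n^2)$ Givens angles specific to the value of $J$ in $\cA'_1$ at cost $\poly(n)$; the $U'(t)$ block itself then costs $\cO(n^{2})$ single-controlled two-level unitaries, each implementable in $\cO(n)$ elementary gates with $\cO(\log n)$ controls and targets, giving $\cO(n^{3})$ gates for the entire conditional block evolution. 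Adding the inverse Schur transform yields a total of $\cO(n^{3})$ gates, and allocating error $\epsilon/3$ to each of the three components keeps the dependence on $1/\epsilon$ polylogarithmic and subdominant.

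The main obstacle is the bookkeeping of the middle step: achieving $\cO(n^{3})$ rather than a looser polynomial requires that the multi-controlled Givens rotations be implemented without paying an extra factor of $n$ per two-level factor for enumerating $J$, which is precisely what the classical preprocessing on $\cA''$ accomplishes. Once that count is in hand, together with the $\cO(n^{3})$ bound on $U_{{\rm Sch}}$ from Appendix~\ref{appx:schur}, one obtains $G(n)=\cO(n^{3})$, and $G(n)/l(n)=\cO(n^{3})/\Omega(n\cdot 4^{n})\to 0$ confirms exponential fast-forwarding in the sense of Definition~\ref{def:general_fast-forwarding}.
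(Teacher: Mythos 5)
Your overall plan (block-diagonalize with the Schur transform, evolve each block via a Givens/QR decomposition, undo) is the same as the paper's, but the way you implement the conditional block evolution $U'(t)$ differs from the paper's proof in a way that does not obviously deliver $G(n)=\cO(n^3)$.

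You use the ``compute the circuit on the fly'' pattern of Fig.~\ref{fig:block diagonal controlled unitary}: a reversible subroutine $C$ reads $J$ from $\cA'_1$ (which you switch to binary, $\cO(\log n)$ qubits, inconsistently with the unary encoding of Appendix~\ref{appx:schur} that you also invoke), writes the $\cO(n^2)$ Givens angles of $U_J(t)=e^{-it\left.H\right|_{\cQ'_J}}$ onto $\cA''$, and then the rotations are applied conditionally on the stored angles. The problem is the $\poly(n)$ cost of $C$, which you do not bound below $\cO(n^3)$. Because $t$ can be as large as $T(n)=4^n$, computing $e^{-it\left.H\right|_{\cQ'_J}}$ to constant final error requires eigenvalues of the $(n+1)\times(n+1)$ matrix $\left.H\right|_{\cQ'_J}$ to precision $\cO(\epsilon/t)=2^{-\Omega(n)}$, i.e.\ $\Omega(n)$-bit arithmetic. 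Reversibly diagonalizing an $(n+1)\times(n+1)$ matrix and then QR-factorizing the result with $\Omega(n)$-bit numbers already costs $\Omega(n^4)$ elementary gates, so $C$ dominates and your total exceeds $\cO(n^3)$. (Your per-Givens cost estimate of $\cO(n)$ with $\cO(\log n)$ controls is also off; that would be $\cO(\text{polylog}(n))$, but since you overcount this is not what breaks the bound.)

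The paper's proof avoids this by \emph{not} computing $U_J(t)$ coherently: it keeps both the $\ket J$ and $\ket q$ registers in unary, classically precomputes the Givens angles (this is outside the gate count), and implements $U'(t)$ by explicitly iterating over the $\cO(n)$ values of $J$ as in Fig.~\ref{fig:block evolution}. In unary, each two-level Givens rotation on $\ket q,\ket{q+1}$ is a genuine two-qubit gate, and controlling it on a single qubit of the unary $\ket J$ register keeps the per-rotation cost constant. That yields $\cO(n)\cdot\cO(n^2)\cdot\cO(1)=\cO(n^3)$ for $U'(t)$, matching the $\cO(n^3)$ cost of $U_{\rm Sch}$ from Appendix~\ref{appx:schur}. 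To repair your argument you should either drop the subroutine $C$ and adopt the paper's enumeration over $J$ with hardcoded angles and the unary encoding throughout, or else show explicitly that a reversible $C$ of cost $\cO(n^3)$ suffices, which I do not believe is possible at the required precision when $t$ is exponential in $n$.
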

	
	\begin{proof}
		For $d=2$, a partition $\lambda=(\lambda_1,\lambda_2)$ into at most two parts with $\lambda_1+\lambda_2=n, \lambda_1\ge\lambda_2\ge 0$ can equivalently be described by providing $J=\frac{\lambda_2-\lambda_1}{2}$, which is the total angular momentum labeling representations of $SU(2)$.
		
		In Appendix~\ref{appx:schur}, we explicitly describe the circuit for the Schur transform for the qubit case, following Ref.~\cite{BCH06}, and using a unary encoding of the registers that encode $\lambda$ (or $J$) and $q$. That is, for a given basis state $\ket J$ or $\ket q$, all qubits of the register are $\ket 0$ except the one corresponding to the value of $J$ or $q$, which is $\ket 1$. The gate complexity of the circuit $U_{\rm Sch}$ is $\cO(n^3)$.
		
		The dimensions of the irreducible representations of $SU(2)$ are $\dim \mc Q_J=2J+1 \le n+1$, so the QR decomposition expresses $U_J(t)$ as the product of $\mc O(n^2)$ two-level operations. The unary encoding ensures that each two-level unitary, acting on a subspace spanned by $\ket q, \ket{q+1}$, is a two-qubit gate with support on the corresponding qubits. We implement the block diagonal unitary $U'(t)$ by evolving the system in each irreducible representation separately, as shown in Fig.~\ref{fig:block evolution}. By encoding the $\ket J$ register in unary, the unitaries acting on the $\ket q$ register are controlled on single qubits in the $\ket J$ register, resulting in a constant cost per two-level unitary. Because there are $\mc O(n)$ possible values of $J$, this gives a total of $\mc O(n^3)$ two-qubit elementary gates to implement $U'(t)$. Adding the cost of the quantum complexity of $U_{{\rm Sch}}$, we can simulate the evolution of the Hamiltonian $H$ with $\cO(n^3)$ elementary gates.
		
		\begin{figure} [htbp]
			\centerline{\Qcircuit @C=1em @R=1em {
					\lstick{J=0} & \qw & \ctrl{5} & \qw & \qw & \qw & \qw \\
					\lstick{J=\frac 1 2} & \qw & \qw & \ctrl{4} & \qw & \qw & \qw \\
					\lstick{J=1} & \qw & \qw & \qw & \ctrl{3} & \qw & \qw \\
					\lstick{J=\frac 3 2} & \qw & \qw & \qw & \qw & \ctrl{2} & \qw \\
					\\
					\lstick{\ket q} & {/}\qw & \gate{U_{J=0}(t)} & \gate{U_{J=1/2}(t)} & \gate{U_{J=1}(t)} & \gate{U_{J=3/2}(t)} & \qw
			}}
			\caption{\label{fig:block evolution}
				The quantum circuit $U'(t)$ implementing the evolution of the block diagonal Hamiltonian in the Schur basis for $n=3$ qubits. This corresponds to the middle block of Fig.~\ref{fig:su2 simulation}. Each controlled $U_J(t)$ operation consists of $\mc O(n^2)$ elementary gates.
			}
		\end{figure}
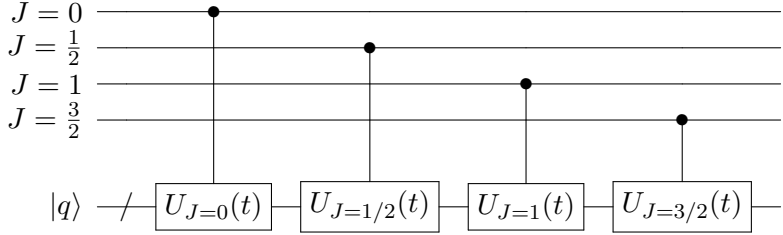
		
	\end{proof}
	
	Examples of permutation-invariant Hamiltonians are vast.
	One characterization is that they are in the universal enveloping algebra $U(\mf{su}(d))$:
	Consider a Hamiltonian $H$ with terms that are products of $\mf{su}(d)$ operators $A$ acting on the representation space $\mc H=\left(\bb C^d\right)^{\otimes n}$ as
	\begin{align}
		A\ket{x_1\dots x_n} = \sum_{i=1}^n \ket{x_1}\otimes\dots\otimes A\ket{x_i}\otimes\dots\otimes\ket{x_n}\;.
	\end{align}
	For example, the $\mf{su}(2)$ operators are spanned by angular momentum operators
	\begin{align}
		J_x = \frac{1}{2}\sum_i X_i\;,\quad J_y = \frac{1}{2}\sum_i Y_i\;,\quad J_z = \frac{1}{2}\sum_i Z_i\;,
	\end{align}
	where each qubit of the Hilbert space is identified with the standard representation $\mc Q_{1/2}$ of $\mf{su}(2)$.
	Since $H$ is manifestly permutation-invariant, any $U(\mf{su}(d))$ Hamiltonian can be fast-forwarded using Thm.~\ref{thm:sud}.
	Conversely, this example is the most general case because up to an additive constant, every permutation-invariant Hamiltonian is an element of $U(\mf{su}(d))$.
	Indeed, by Schur-Weyl duality, the commutant of the image of the symmetric group $S_n$ in $\left(\bb C^d\right)^{\otimes n}$ is spanned by elements of the form $A^{\otimes n}$ for a $A\in M_d(\bb C)$, which are generated by $\mf{gl}(d)$. The Hermitian elements are the $U(\mf{su}(d))$ operators up to an additive constant.

	A famous example of a $U(\mf{su}(2))$ Hamiltonian is the Lipkin-Meshkov-Glick (LMG) model that appears in nuclear physics~\cite{lipkin1965validity}. The LMG model describes a system of $n$ fermions which can each be spin up or spin down. 
	The Hamiltonian is given by
	\begin{align}
		H = \frac{1}{2}\sum_{i,\sigma}\sigma c_{i\sigma}^\dagger c_{i\sigma}^{\;} + \frac{V}{2n}\sum_{i,i',\sigma}c_{i\sigma}^\dagger c_{i'\sigma}^\dagger c_{i'\bar\sigma}^{\;} c_{i\bar\sigma}^{\;} + \frac{W}{2n}\sum_{i,i',\sigma}c_{i\sigma}^\dagger c_{i'\bar\sigma}^\dagger c_{i'\sigma}^{\;} c_{i\bar\sigma}^{\;}\;.
	\end{align}
	Here, $\sigma=\uparrow,\downarrow$ corresponds to the two possible spins with $\bar\sigma$ the opposite spin, $i$ runs from 1 to $n$, and the constants $V$ and $W$ describe the interaction strengths.
	By defining the operators
	\begin{align}
		J_+ = \frac{1}{\sqrt 2}\sum_i c_{i\uparrow}^\dagger c_{i\downarrow}^{\;}\;, \quad
		J_- = \frac{1}{\sqrt 2}\sum_i c_{i\downarrow}^\dagger c_{i\uparrow}^{\;}\;, \quad
		J_z = \frac{1}{2}\sum_{i,\sigma} \sigma c_{i\sigma}^\dagger c_{i,\sigma}^{\;} = \frac{1}{2}(n_{\uparrow} - n_{\downarrow})\;,
	\end{align}
	which satisfy the usual $\mf{sl}(2,\bb C)$ commutation relations, 
	we can rewrite the Hamiltonian up to a constant as
	\begin{align}
		H = J_z + \frac{V}{n}(J_+^2 + J_-^2) + \frac{W}{n}(J_+J_- + J_-J_+)\;.
	\end{align}
	We see that $H$ is a $U(\mf{su}(2))$ Hamiltonian on a system of $n$ qubits where each qubit $i$ stores the spin of fermion $i$.
	Thus, by Thm.~\ref{thm:su2}, the LMG model can be exponentially fast-forwarded.

	\subsection{Frustration-free spin Hamiltonians at low-energies}
	\label{subsec:FrustrationFree}
	We  present a method for polynomially fast-forwarding a class of Hamiltonians denominated as frustration-free~\cite{AKLT88,PVCW08,BT09}, when the initial state is supported in a certain low-energy subspace.
	This setting can be relevant for studying quantum phase transitions, the simulation of adiabatic quantum state preparation, and more, where spectral gaps can decrease with the system size~\cite{csahinouglu2021hamiltonian}.
	For a spin system, a frustration-free Hamiltonian is $H=\sum_{X\subset \Lambda} h_X$, where each $h_X$ has the additional property $h_X \ge 0$ and the lowest eigenvalue of $H$ is zero. 
	As before, we assume $\|h_X\|\le 1$ so that $\|H\| \le L$, where $L$ denotes the number of subsets $X$ appearing in $H$. 
	While we focus on spin systems, we note that the results of this section may be applied to other systems.
	Our approach to fast-forward $H$ uses a well-known Hamiltonian simulation method based on quantum eigenvalue (or phase) estimation~\cite{Kit96,CEMM98}, but other related approaches could also be used (cf.,~\cite{low2019hamiltonian}). 
	This method first estimates (a function of) an eigenvalue of $H$ and then implements the corresponding conditional phase (cf., Ref~\cite{childs2010relationship}); see Fig.~\ref{fig:QPE}.
	To fast-forward $H$, we combine this method with the spectral gap amplification technique of Ref.~\cite{SB13} to amplify the magnitude of some eigenvalues of $H$. 
	The basic reason that allows for fast-forwarding in this case is that the amplified eigenvalues can be estimated with less accuracy -- and thus less quantum complexity -- than that needed for the estimation of the actual eigenvalues of $H$, as long as the initial state is supported in a subspace of sufficiently low energies. Our main result on fast-forwarding is given in Thm.~\ref{thm:FFfrutration-free} below, but we first prove some lemmas needed for simulating $H$ via quantum eigenvalue estimation.
	
	\begin{figure} [htbp]
		\begin{subfigure}[b]{1\textwidth}
			\centerline{\Qcircuit @C=1em @R=.7em {
					\lstick{\ket{0}_{\mc A}} & {/}\qw & \multigate{1}{U(H)} & \qw & \ustick{\ket{f(\lambda_j)}}\qw & \qw & \gate{e^{-it\lambda_j}} & \multigate{1}{U^\dag(H)} & \rstick{\ket 0_{\mc A}} \qw\\
					\lstick{\ket{\psi_j}} & {/}\qw & \ghost{U(H)} & \qw & \qw & \qw & \qw & \ghost{U^\dag(H)} & \rstick{\approx e^{-it\lambda_j}\ket{\psi_j}} \qw 
			}}
			\caption{}
		\end{subfigure}
		\\\\
		\begin{subfigure}[b]{1\textwidth}
			\centerline{\Qcircuit @C=1em @R=.7em {
					& \gate{Had} & \qw & \qw & \qw & \ctrl{4} & \multigate{3}{QFT^\dag} & \qw\\
					& \gate{Had} & \qw & \qw & \ctrl{3} & \qw & \ghost{QFT^\dag} & \qw\\
					& \gate{Had} & \qw & \ctrl{2} & \qw & \qw & \ghost{QFT^\dag} & \qw\\
					& \gate{Had} & \ctrl{1} & \qw & \qw & \qw & \ghost{QFT^\dag} & \qw\\
					& \qw & \gate{W} & \gate{W^2} & \gate{W^4} & \gate{W^8} & \qw & \qw
			}}
			\caption{}
		\end{subfigure}
		\caption{\label{fig:QPE}(a) Hamiltonian simulation via quantum eigenvalue estimation: on input state $\ket{\psi_j}\otimes \ket 0_\cA$, where $\ket{\psi_j}$ is an eigenstate of $H$ of eigenvalue $\lambda_j$, the unitary $U(H)$ outputs a state where the ancillary register $\ket{f(\lambda_j)}$ encodes information about (a function of) the eigenvalue. We then apply the phase $e^{-it\lambda_j}$ on $\ket{f(\lambda_j)}$ and invert $U(H)$. The output state is approximately $e^{-it \lambda_j} \ket{\psi_j}\otimes \ket 0_\cA$ and the overall quantum complexity of this approach depends on the approximation error. (b) An example of a building block of $U(H)$ in this case is the standard quantum phase estimation algorithm. The unitary $W=e^{-iH'/\sqrt L}$  is constructed from $H$ and $QFT$ is the quantum Fourier transform. To increase the confidence level, $U(H)$ uses this block repeatedly many times~\cite{KOS07}.
		}
	\end{figure}
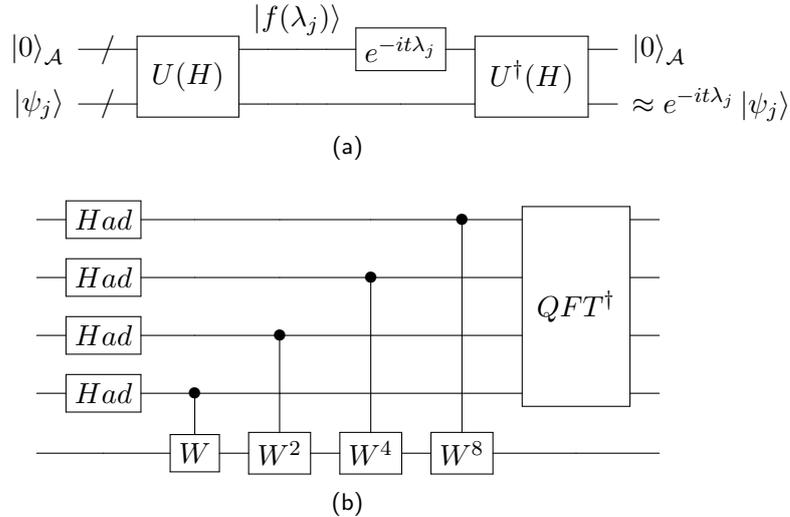
	
	\begin{lemma}[Eigenvalue amplification, from Appendix B of Ref.~\cite{SB13}]
		\label{thm:EigenvalueAmplificationPositive}
		Let $H= \sum_{X \subset \Lambda} h_X$ be frustration-free
		and let $0= \lambda_0 \leq \lambda_1 \leq \ldots$ be the (ordered) eigenvalues of $H$ associated with the eigenstates $\ket{\psi_j}$, $j=0,1,\ldots$. 
		Then, the Hamiltonian
		\begin{align}
			\label{eq:EigenvalueAmplifiedHamiltonian2}
			H' = \sum_{X \subset \Lambda} \sqrt{h_X} \otimes (\ketbras{X}{0}_\cA + \ketbras{0}{X}_\cA ) 
		\end{align}
		acts as the square root of $H$ on the subspace where the ancillary state is fixed to $\ket0_\cA$: 
		\begin{align}
			\label{eq:ClaimSqrtHamiltonian}
			(H')^2 (|\phi\rangle \otimes \ket0_\cA) = H |\phi \rangle \otimes \ket 0_\cA \;.
		\end{align}
		The states $\ket X_\cA$ are orthogonal basis states (labeled by $X$) of an ancillary system.
		Furthermore, if $\lambda_j>0$ and $\ket{\phi_j}=\sum_{X \subset \Lambda} (\sqrt{h_X}\ket{\psi_j}) \otimes \ket X_\cA/\lambda_j^{1/2}$,
		\begin{align}
			H' (\ket{\psi_j} \otimes \ket 0_\cA \pm \ket{\phi_j})=
			\pm \sqrt{\lambda_j} (\ket{\psi_j} \otimes \ket 0_\cA \pm \ket{\phi_j}) \;.
		\end{align}
		The eigenvalues of $H'$ are either zero or $\pm \lambda'_{j}=\pm \sqrt{\lambda_j}$.
	\end{lemma}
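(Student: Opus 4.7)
The plan is to prove all three claims by direct computation, exploiting the bipartite block structure that $H'$ has between the "no-ancilla-excitation" subspace $\mc H_0 := \mc H \otimes \bb C \ket 0_{\cA}$ and its orthogonal complement $\mc H_\perp := \mc H \otimes \spn\{\ket X_{\cA}\}_{X \subset \Lambda}$ in the ancillary register. The construction of $H'$ is an instance of the standard ``dilation'' or ``Jordan-Wachter'' trick that turns a positive operator into the square of a Hermitian one by introducing an extra degree of freedom, and virtually everything will follow by unpacking the inner products $\langle 0 | X\rangle_{\cA} = 0$ and $\langle X | Y\rangle_\cA = \delta_{XY}$.

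First I would verify Eq.~\eqref{eq:ClaimSqrtHamiltonian} by expanding
\begin{align}
(H')^2 = \sum_{X,Y} \sqrt{h_X}\sqrt{h_Y} \otimes \left( \ketbras{X}{0} + \ketbras{0}{X}\right)\left(\ketbras{Y}{0} + \ketbras{0}{Y}\right)_{\cA}.
\end{align}
The cross terms $\ketbras{X}{0}\ketbras{Y}{0}$ and $\ketbras{0}{X}\ketbras{0}{Y}$ vanish since $\langle 0|Y\rangle = \langle X|0\rangle = 0$, leaving $\sum_{X,Y} \sqrt{h_X}\sqrt{h_Y}\otimes \ketbras{X}{Y}_{\cA} + \sum_X h_X \otimes \ketbra{0}_{\cA}$, where I used $\sqrt{h_X}\sqrt{h_X} = h_X$ after $\langle X|Y\rangle = \delta_{XY}$ collapses the second sum. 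Acting on $\ket\phi \otimes \ket 0_{\cA}$, only the second piece survives and reproduces $H \ket\phi \otimes \ket 0_{\cA}$.

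Next I would compute $H'$ on the two candidate eigenspace generators. Acting on $\ket{\psi_j}\otimes \ket 0_{\cA}$ only the $\ketbras{X}{0}_{\cA}$ term contributes, giving $\sum_X \sqrt{h_X}\ket{\psi_j}\otimes \ket X_{\cA} = \sqrt{\lambda_j}\, \ket{\phi_j}$ after recognizing the definition of $\ket{\phi_j}$. Similarly, acting on $\ket{\phi_j}$ only the $\ketbras{0}{X}_{\cA}$ term contributes, and using $\sqrt{h_X}\sqrt{h_X} = h_X$ one gets $\lambda_j^{-1/2} \sum_X h_X \ket{\psi_j} \otimes \ket 0_{\cA} = \sqrt{\lambda_j}\,\ket{\psi_j}\otimes \ket 0_{\cA}$. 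Taking the sum and difference of these two identities yields the $\pm\sqrt{\lambda_j}$ eigenstates immediately; note also that $\|\ket{\phi_j}\|^2 = \lambda_j^{-1}\langle \psi_j| \sum_X h_X|\psi_j\rangle = 1$, so the two displayed eigenstates are in fact mutually orthogonal and each of norm $\sqrt 2$.

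Finally, to conclude that the spectrum of $H'$ consists only of $0$ and the $\pm\sqrt{\lambda_j}$, I would observe that $H'$ is block-off-diagonal with respect to the decomposition $\mc H_0 \oplus \mc H_\perp$: it sends $\mc H_0$ into $\mc H_\perp$ and vice versa. This alone forces the spectrum to be symmetric under $\lambda \mapsto -\lambda$, because conjugation by the unitary that acts as $+\one$ on $\mc H_0$ and $-\one$ on $\mc H_\perp$ maps $H'$ to $-H'$. Squaring and using the first part, on $\mc H_0$ the operator $(H')^2$ restricts to $H$, which already accounts for the eigenvalues $\lambda_j$ (with zero ground space). On $\mc H_\perp$, the image of $\mc H_0$ under $H'$ is exactly $\spn\{\ket{\phi_j}\}_{\lambda_j>0}$, so any vector in $\mc H_\perp$ orthogonal to this image lies in $\ker H'$. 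Combined, the nonzero eigenvalues are exactly the $\pm\sqrt{\lambda_j}$ for $\lambda_j > 0$. The only subtle step is the spectral completeness argument in $\mc H_\perp$, which is really the main obstacle, but it reduces to the rank-nullity observation just made.
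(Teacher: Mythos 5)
Your proof is correct and self-contained. Note that the paper does not actually supply a proof of this lemma---it attributes the result to Appendix~B of Ref.~\cite{SB13}---so there is no in-text argument to compare against; your direct-computation verification is exactly what a reader would need to reconstruct. The three pieces all check out: the ancilla inner products $\langle 0|X\rangle=0$, $\langle X|Y\rangle=\delta_{XY}$ kill the cross terms in $(H')^2$ and collapse the $|0\rangle\langle 0|$ block to $\sum_X h_X\otimes\ketbra 0$; the action on $\ket{\psi_j}\otimes\ket 0$ and $\ket{\phi_j}$ swaps the two vectors up to a factor $\sqrt{\lambda_j}$, giving the $\pm\sqrt{\lambda_j}$ eigenpairs; and the normalization $\|\ket{\phi_j}\|^2=\lambda_j^{-1}\langle\psi_j|H|\psi_j\rangle=1$ is right. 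Your spectral-completeness argument is the right (and necessary) final step: the block-off-diagonal structure gives the $\lambda\mapsto-\lambda$ symmetry, and the observation that any $v\in\mc H_\perp$ orthogonal to $H'(\mc H_0)=\spn\{\ket{\phi_j}\}_{\lambda_j>0}$ satisfies $H'v\perp\mc H_0$ while also $H'v\in\mc H_0$, hence $H'v=0$, closes the kernel. One small point you could make explicit: for the ground states ($\lambda_j=0$) you conclude $\ket{\psi_j}\otimes\ket 0\in\ker H'$ either because positivity of each $h_X$ forces $\sqrt{h_X}\ket{\psi_j}=0$, or because $(H')^2$ is positive semidefinite so $(H')^2 w=0$ implies $H'w=0$---either justification is fine, but one should be stated rather than left implicit in ``accounts for the eigenvalues $\lambda_j$ (with zero ground space).''
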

	

	\vspace{0.1cm}
	Thus, $\lambda'_{ j} \gg \lambda_j$ if $0<\lambda_j \ll 1$,
	where eigenvalue amplification occurs. Note that each $\sqrt{h_X}$ can be classically computed in constant time as this is also a $k$-local term and $k=\cO(1)$ by assumption.

	We use the method of Fig.~\ref{fig:QPE} to simulate $H$
	by estimating the eigenvalues of $H'$ via quantum eigenvalue estimation.
	If $\lambda_{j-1}<\lambda_j \ll 1$, then $|\lambda'_{ j} - \lambda'_{j-1}| \gg |\lambda_{ j} - \lambda_{j-1}|$. 
	Thus, we are able to demand a less accurate estimation of $\lambda'_{j}$'s for an accurate estimation of $\lambda_j$ in this case:

	\begin{lemma}
		\label{lem:H'AccuracyPositive}
		Let $\epsilon \ge 0$, $t>0$, and $\Delta >0$. 
		Then, an estimate of an eigenvalue $\lambda_j \le \Delta$ of $H$ within accuracy $\epsilon/(2t)$ is implied by an estimate of one of the corresponding eigenvalues $\pm \lambda'_{ j}$ of $H'$, defined  in Eq.~\eqref{eq:EigenvalueAmplifiedHamiltonian2}, within accuracy 
		\begin{align}
			\label{eq:deltadef}
			\delta= \min \left\{ \sqrt{\frac{\epsilon}{4 t}}, \frac{\epsilon}{8t \sqrt{\Delta}} \right\} \;.
		\end{align}
	\end{lemma}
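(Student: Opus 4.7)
The plan is to produce an estimate of $\lambda_j$ by simply squaring the estimate of $\pm\lambda'_j$. Let $\tilde{\lambda}'$ be a number with $|\tilde{\lambda}' - s\lambda'_j| \le \delta$ for some sign $s\in\{+1,-1\}$ (we do not need to know which sign, because squaring erases it), and define $\tilde{\lambda}:=(\tilde{\lambda}')^2$. Since $(s\lambda'_j)^2 = \lambda_j$ by Lemma~\ref{thm:EigenvalueAmplificationPositive}, the natural approach is to factor the resulting error as a difference of squares.

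Concretely, I would write
\begin{align}
\tilde{\lambda} - \lambda_j = (\tilde{\lambda}')^2 - (s\lambda'_j)^2 = (\tilde{\lambda}' - s\lambda'_j)(\tilde{\lambda}' + s\lambda'_j),
\end{align}
and bound the second factor by the triangle inequality, $|\tilde{\lambda}' + s\lambda'_j|\le |\tilde{\lambda}' - s\lambda'_j| + 2|\lambda'_j| \le \delta + 2\sqrt{\lambda_j} \le \delta + 2\sqrt{\Delta}$, where the last inequality uses the hypothesis $\lambda_j\le \Delta$. Multiplying this by the first factor gives
\begin{align}
|\tilde{\lambda} - \lambda_j| \le \delta^2 + 2\delta\sqrt{\Delta}.
\end{align}

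The remaining step is to verify that the definition of $\delta$ in Eq.~\eqref{eq:deltadef} makes each of these two terms at most $\epsilon/(4t)$, so that the sum is at most $\epsilon/(2t)$ as required. Indeed, $\delta \le \sqrt{\epsilon/(4t)}$ gives $\delta^2 \le \epsilon/(4t)$, and $\delta \le \epsilon/(8t\sqrt{\Delta})$ gives $2\delta\sqrt{\Delta} \le \epsilon/(4t)$. There is no substantive obstacle here — the lemma is a direct calculation whose only subtleties are (i) handling the sign ambiguity $\pm\lambda'_j$, which is automatic upon squaring, and (ii) observing that the low-energy cutoff $\lambda_j\le\Delta$ is precisely what controls the cross term $2\delta\sqrt{\lambda_j}$, which is where the amplification benefit manifests: the smaller $\Delta$ is, the larger $\delta$ is allowed to be for a fixed target accuracy $\epsilon/(2t)$.
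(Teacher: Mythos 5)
Your proof is correct and follows essentially the same route as the paper's: squaring the estimate of $\pm\lambda'_j$ and bounding the resulting error by $\delta^2 + 2\delta\sqrt{\lambda_j} \le \delta^2 + 2\delta\sqrt{\Delta} \le \epsilon/(2t)$. Your observation that $\delta$, being a minimum, satisfies both $\delta \le \sqrt{\epsilon/(4t)}$ and $\delta \le \epsilon/(8t\sqrt{\Delta})$ simultaneously is a small streamlining of the paper's two-case analysis, but the substance is identical.
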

	
	\begin{proof}
		The estimate of $\lambda_j$ is obtained from the squared of the estimate of $\lambda'_{j}$ or $-\lambda'_{j}$. Thus, it suffices to prove
		\begin{align}
			\label{eq:eigenest}
			\lambda_j - \epsilon/(2t) \le ( \lambda'_{j} \pm \delta )^2 \le \lambda_j + \epsilon/(2t) \; ,
		\end{align}
		or, equivalently, $2 \delta \sqrt{\lambda_j} + \delta^2 \le \epsilon/(2t)$.
		We first assume $\delta=\sqrt{\epsilon/(4t)} \le \epsilon/(8t \sqrt \Delta)$ so that
		\begin{align}
			2 \delta \sqrt{\lambda_j} + \delta^2 & \le 2 \delta \sqrt \Delta + \delta^2 \\
			& \le \epsilon/(4t) + \epsilon/(4t) \\
			& \le \epsilon/(2t).
		\end{align}
		A similar bound follows if we assume $\delta=\epsilon/(8t \sqrt \Delta) \le \sqrt{\epsilon/(4t)}$.
		Thus, it suffices to choose $\delta$ according to Eq.~\eqref{eq:deltadef} to satisfy Eq.~\eqref{eq:eigenest}.
	\end{proof}

	\begin{lemma}[Quantum eigenvalue estimation]
		\label{lem:QuantumPhaseEstimation}
		Let $H'$ be as in Eq.~\eqref{eq:EigenvalueAmplifiedHamiltonian2}. Then, there exists a quantum circuit $U(H)$ that, acting on input $\ket{\psi_j}\otimes \ket 0_\cA$, outputs an estimate of $\lambda'_{j}$
		within accuracy $\delta$ and probability $c \ge 1-\epsilon/2$ after a simple measurement.
		The quantum complexity of $U(H)$ is
		\begin{align}
			\tilde \cO \left( \log(1/\epsilon)L^2 /\delta \right) \;.
		\end{align}
		
		The $\tilde \cO$ notation hides logarithmic factors in $L$, $1/\delta$, and $\log(1/\epsilon)$.
	\end{lemma}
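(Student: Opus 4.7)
The plan is to perform textbook quantum phase estimation (QPE) on the amplified Hamiltonian $H'$ of Lemma~\ref{thm:EigenvalueAmplificationPositive}, using a near-optimal Hamiltonian simulation subroutine to implement the controlled powers of the base unitary. Since the relevant eigenvalues of $H'$ are $\pm\sqrt{\lambda_j}$ and $\lambda_j \le \|H\| \le L$, we have $\|H'\| \le \sqrt{L}$, so the natural building block for QPE is $W = e^{-iH'/\sqrt{L}}$ as depicted in Fig.~\ref{fig:QPE}(b): its eigenphases $\mp\sqrt{\lambda_j}/\sqrt{L}$ lie in $[-1,1]$, and estimating such a phase to additive accuracy $\delta/\sqrt{L}$ is equivalent to estimating $\lambda'_j$ to accuracy $\delta$.

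First I would construct a block encoding of $H'$ from its linear-combination-of-unitaries structure in Eq.~\eqref{eq:EigenvalueAmplifiedHamiltonian2}: there are $L$ summands, each of spectral norm at most $1$, each acting on $\cO(k)=\cO(1)$ system qubits together with the $\cO(\log L)$-qubit ancilla register labeled by $X$. Using standard prepare/select oracles and the fact that $\sqrt{h_X}$ is classically computable in constant time, a block encoding with normalization $\alpha = L$ can be synthesized with $\tilde{\cO}(L)$ elementary gates per query. Then I would invoke a near-optimal simulation method (qubitization, or the truncated Taylor series of Berry--Childs--Cleve--Kothari--Somma) to realize $W$ to simulation error $\epsilon_{\rm sim}$ using $\tilde{\cO}(\alpha/\sqrt{L}+\log(1/\epsilon_{\rm sim})) = \tilde{\cO}(\sqrt{L}+\log(1/\epsilon_{\rm sim}))$ queries, giving a per-$W$ gate cost of $\tilde{\cO}(L^{3/2}+L\log(1/\epsilon_{\rm sim}))$.

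Next I would feed $W$ into the standard QPE circuit of Fig.~\ref{fig:QPE}(b) with $b=\lceil\log_2(\sqrt{L}/\delta)\rceil$ control qubits, which applies $W^{2^0},W^{2^1},\ldots,W^{2^{b-1}}$ for a total of $\cO(\sqrt{L}/\delta)$ applications of $W$, producing a $\delta/\sqrt{L}$-accurate phase estimate with at least constant probability. A sign bit recovered from the $\cA$ register (using the structure of the $\pm$ eigenvectors in Lemma~\ref{thm:EigenvalueAmplificationPositive}) then yields an estimate of $\lambda'_j$ within $\delta$. To boost the success probability to $1-\epsilon/2$, I would apply the confidence-amplification construction of~\cite{KOS07}, running this basic QPE $\cO(\log(1/\epsilon))$ times and returning the median outcome after a simple computational-basis measurement. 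Choosing $\epsilon_{\rm sim}$ inverse polynomial in the total number of $W$ applications keeps the accumulated simulation error well below $\epsilon$, so the total cost is
\begin{align}
\cO(\log(1/\epsilon))\cdot\cO(\sqrt{L}/\delta)\cdot\tilde{\cO}(L^{3/2}) \;=\; \tilde{\cO}\!\left(\log(1/\epsilon)\,L^2/\delta\right),
\end{align}
as claimed.

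The main obstacle is bookkeeping rather than any new idea: one must verify that the per-query gate cost of the block encoding is genuinely $\tilde{\cO}(L)$ (this uses $k$-locality and constant-time classical access to $\sqrt{h_X}$), that the time $t=1/\sqrt{L}$ puts us in the $\alpha t \gtrsim \log(1/\epsilon_{\rm sim})$ regime so that the $L^{3/2}$ term dominates the $L\log(1/\epsilon_{\rm sim})$ term, and that the cumulative simulation and estimation errors over the $\tilde{\cO}(\sqrt{L}\log(1/\epsilon)/\delta)$ invocations of $W$ still fit inside the overall $\epsilon/2$ budget via a triangle-inequality/union-bound argument. Modulo this accounting, the lemma is a direct combination of Lemma~\ref{thm:EigenvalueAmplificationPositive}, standard QPE, and near-optimal Hamiltonian simulation.
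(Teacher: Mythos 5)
Your proposal is correct and follows essentially the same route as the paper's proof: choose $W=e^{-iH'/\sqrt L}$ so its eigenphases fit the QPE window, simulate $W$ via a near-optimal LCU-based method exploiting the $L$-term, $\cO(1)$-weight structure of $H'$ (the paper uses the truncated Taylor series of BCC+15, and explicitly notes qubitization as an alternative, so your block-encoding/qubitization framing is interchangeable), boost to confidence $1-\epsilon/2$ with the high-confidence QPE of KOS07, and budget the per-$W$ simulation error as $\cO(\delta/m)$ over the $m=\cO(\log(1/\epsilon)\sqrt L/\delta)$ uses of $W$. The only extra step you add, recovering a sign bit from the $\cA$ register to disambiguate $\pm\lambda_j'$, is harmless but unnecessary in context: the downstream use in Lemma~\ref{lem:H'AccuracyPositive} only needs an estimate of one of $\pm\lambda_j'$, since $\lambda_j$ is recovered by squaring.
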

	
	\begin{proof}
		The result follows from previous works. In particular, we can use the high-confidence quantum phase estimation algorithm of Ref.~\cite{KOS07}. Let $W=e^{-i H'/\sqrt{L}}$
		be a unitary operator (note that $\|H'\| \le \sqrt L$). On input $\ket{\psi_j}\otimes \ket 0_\cA$, that algorithm outputs either $\lambda'_{j}/\sqrt{L}$ or $-\lambda'_{j}/\sqrt{L}$
		with probability greater or equal than $1-\epsilon/2$ and within accuracy $\delta/\sqrt{L}$ after a simple projective measurement. This is because $\ket{\psi_j}\otimes \ket 0_\cA$ is a superposition of eigenstates of $H'$ of eigenvalues $\pm\lambda'_{ j}$.
		The quantum complexity of quantum phase estimation is dominated by the number of uses of $W$, or its conditional version, which in this case is $m=\cO(\log(1/\epsilon)\sqrt{L}/\delta)$.
		
		Next, we can simulate $W$ using a known algorithm for Hamiltonian simulation. To avoid undesired overheads due to the precision factor, we consider the method of Ref.~\cite{BCC+15}; alternatively, we could use the method of Ref.~\cite{LC17}. 
		Reference~\cite{BCC+15} implements $W$ by implementing a truncated Taylor series of the exponential operator.
		That method requires a decomposition of the Hamiltonian as a linear combination of unitary operations. When $H$ is a $k$-local, $n$-spin Hamiltonian, and $k=\cO(1)$, it is simple to write $H'$ as a sum of $\cO(L)$ unitaries. 
		To this end, we encode the ancilla register of dimension $L$ into $\log L$ qubits with binary encoding. 
		This implies that the Hamiltonian $H'$ is a sum of $L$-many $(k + \log L)$-local terms.
		Each of these terms can be written as a sum of $\mathcal{O}(1)$ unitaries.
		For example, $\sqrt{h_X}$ can be processed into a sum of $\cO(1)$ $k$-local unitaries. Also,
		$\ketbras{X}{0} + \ketbras{0}{X}= \frac{1}{4}(\mathds{1} + Z_{0X}) U_{0X} (\mathds{1} + Z_{0X})$, where $Z_{0X}= \ketbra{0} + \ketbra{X} - \sum^{L}_{i =1: i \neq X} \ketbra{i}$ and $U_{0X}= \ketbras{X}{0} + \ketbras{0}{X} + \sum^{L}_{i =1: i \neq X} \ketbra{i}$ are $\log L$-local unitaries. 
		The latter can be easily expressed as products of $\cO(\log L)$ controlled bit and phase flips (two-qubit gates) using standard techniques~\cite{NC01}. 
		
		Then, we use the method of Ref.~\cite{BCC+15} to simulate $W$. This method assumes a presentation of the Hamiltonian as a linear combination of unitaries. When the evolution time is $1/\sqrt L$ and the Hamiltonian is a sum of $\cO(L)$ unitaries, and each is at most $(k+\log L)$-local, with $k=\cO(1)$, the quantum complexity of this method is $\tilde \cO (L^{3/2} \log(1/\delta'))$. The $\tilde \cO$ notation hides logarithmic factors in $L$, and $\delta'$ is the accuracy.
		
		The quantum circuit $U(H)$ is the high-confidence quantum phase estimation algorithm, which is built upon repeated many calls of Fig.~\ref{fig:QPE}(b), and where $W$ is approximated as above. The result then follows by noticing  that it suffices to choose $\delta' = \cO(\delta/m)$ for overall accuracy $\delta$. The resulting quantum complexity is
		\begin{align}
			\tilde \cO \left( \log(1/\epsilon)L^2 /\delta \right) \;,
		\end{align}
		where we dropped logarithmic factors in $L$, $1/\delta$, and $\log(1/\epsilon)$.
		
	\end{proof}

	According to Lemma~\ref{lem:QuantumPhaseEstimation}, fast-forwarding of frustration-free Hamiltonians
	can then occur when $1/\delta$ is sublinear in $t$, since the conditional phase operation of Fig.~\ref{fig:QPE}(a) has a very mild (polylogarithmic) complexity dependence on $t$.
	This can happen if $\Delta$ is sufficiently small; for example, when $\Delta$ is a decreasing function of $t$.
	The main result of this section is:

	\begin{theorem}[Polynomial fast-forwarding of frustration-free Hamiltonians at low energies]
		\label{thm:FFfrutration-free}
		Let $\{\cC_n\}_n$ denote the classes of $k$-local, $n$-spin Hamiltonians of dimension $d$ acting on $\cH_n =\left(\mathbb C^d\right)^{\otimes n}$ and $H_n=\sum_{X \subset \Lambda_n} h_X \in \cC_n$ be frustration-free as above, with $\|h_X\| \le 1$.
		The number of subsets $X$ in $H_n$ is $L(n)=\poly(n)$.
		Let $\epsilon > 0$, $0< T(n) \le d^{2n}$, and $\Delta(n) \le 1/T(n)$.
		Consider the subspaces $\{\cS_n \subseteq \cH_n\}_n$
		where $\cS_n$ is spanned by all the eigenstates of $H_n$ of eigenvalues in $[0,\Delta(n)]$.
		Then, there exist quantum circuits $\{V_n(t)\}_{n,t}$ in the standard gate model that are $(T(n),\epsilon,G(n))$-fast-forwarding the Hamiltonians $\{H_n\}_n$ on subspaces $\{\cS_n\}_n$,
		where $G(n)=\tilde \cO(L(n)^2 \sqrt{T(n)})$. 
		In particular,
		if $l(n)=\Omega(n T(n))$, we obtain $G(n)/l(n)=\tilde\cO(L^2(n) /(n\sqrt{T(n)}))$, which decreases asymptotically with $n$ if $T(n)= \tilde\Omega(L^4(n)/n^2)$.
	\end{theorem}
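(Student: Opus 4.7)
The plan is to realize $V_n(t)$ as the three-stage circuit of Fig.~\ref{fig:QPE}(a) applied to the amplified Hamiltonian $H'=H'_n$ from Eq.~\eqref{eq:EigenvalueAmplifiedHamiltonian2}: first run the high-confidence quantum eigenvalue estimation $U(H)$ of Lemma~\ref{lem:QuantumPhaseEstimation} on the input state together with the ancillary register $\cA$ of $H'$ and a phase-estimation register, then apply a conditional phase that computes and uses $\tilde\lambda_j:=\tilde\lambda'^2_j$ (i.e.\ square the estimate coherently and multiply by $-t$ to realise $e^{-it\tilde\lambda_j}$), and finally uncompute $U(H)$.

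First I would choose the internal precision of the eigenvalue estimator as
\begin{align}
\delta=\min\!\left\{\sqrt{\tfrac{\epsilon}{4T(n)}},\ \tfrac{\epsilon}{8T(n)\sqrt{\Delta(n)}}\right\},
\end{align}
dictated by Lemma~\ref{lem:H'AccuracyPositive}, so that when acting on any eigenstate $\ket{\psi_j}$ of $H_n$ with $\lambda_j\le\Delta(n)$, an accuracy-$\delta$ estimate of $\pm\lambda'_j$ produces, after squaring, an $\epsilon/(2t)$-accurate estimate of $\lambda_j$ for every $t\le T(n)$. Because $\ket{\psi_j}\otimes\ket{0}_\cA$ is a superposition of two $H'$-eigenstates with opposite eigenvalues $\pm\lambda'_j$ (Lemma~\ref{thm:EigenvalueAmplificationPositive}), the sign is irrelevant after squaring, so both branches produce the same phase $e^{-it\tilde\lambda_j}$ and the interference that inverts $U(H)$ is coherent. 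For a generic low-energy input $\ket\psi=\sum_j c_j\ket{\psi_j}\in\cS_n$, linearity then gives $V_n(t)\ket\psi\otimes\ket0_\cA\approx\sum_j c_j e^{-it\tilde\lambda_j}\ket{\psi_j}\otimes\ket0_\cA$.

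To control the error I would combine the failure-probability bound $1-\epsilon/2$ from Lemma~\ref{lem:QuantumPhaseEstimation} with the per-branch phase error $|e^{-it\lambda_j}-e^{-it\tilde\lambda_j}|\le t\cdot\epsilon/(2t)=\epsilon/2$ provided by Lemma~\ref{lem:H'AccuracyPositive}, whereupon a triangle-inequality estimate on Eq.~\eqref{eq:FFsubsp} yields overall error at most $\epsilon$ uniformly in $\ket\psi\in\cS_n$ and $t\le T(n)$. The complexity count is then immediate from Lemma~\ref{lem:QuantumPhaseEstimation}: using the hypothesis $\Delta(n)\le 1/T(n)$ one checks $\delta=\tilde\Omega(1/\sqrt{T(n)})$, so the gate cost is $G(n)=\tilde{\cO}(L(n)^2/\delta)=\tilde{\cO}(L(n)^2\sqrt{T(n)})$. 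The ratio to the no-fast-forwarding line $l(n)=\Omega(nT(n))$ is $\tilde{\cO}(L(n)^2/(n\sqrt{T(n)}))$, which tends to zero as soon as $T(n)=\tilde\Omega(L(n)^4/n^2)$, giving polynomial fast-forwarding.

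The main subtlety I anticipate is the coherent squaring step together with bookkeeping of all ancillary registers: one must ensure that the reversible map $\tilde\lambda'_j\mapsto(\tilde\lambda'_j,\tilde\lambda'^2_j)$ is applied before the conditional phase so that the phase depends only on $\tilde\lambda_j$ (which is identical for the $\pm$ branches), and that the ancillas are exactly the same registers that $U(H)^\dagger$ uncomputes at the end; otherwise either the two $H'$-eigenbranches fail to recombine into $\ket{\psi_j}\otimes\ket0_\cA$, or leftover junk ruins the simulation on a superposition input. All other ingredients, namely the decomposition of $H'$ into a linear combination of $\cO(L)$ local unitaries and the high-confidence phase-estimation complexity, are provided directly by the preceding lemmas.
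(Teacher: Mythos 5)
Your proposal follows essentially the same route as the paper's proof: apply the high-confidence eigenvalue-estimation circuit to the amplified Hamiltonian $H'$, choose $\delta$ via Lemma~\ref{lem:H'AccuracyPositive}, apply the conditional phase, uncompute, and read off $G(n)=\tilde\cO(L(n)^2\sqrt{T(n)})$ from Lemma~\ref{lem:QuantumPhaseEstimation} using $\Delta(n)\le 1/T(n)$. Your explicit remark that the $\pm\lambda'_j$ branches give the same squared estimate (so the phase is well-defined and the uncomputation recombines them coherently) is a helpful clarification of a point the paper treats implicitly; the only minor deviation is your error bookkeeping, where the paper rescales the confidence to $1-\epsilon/4$ to split the total $\epsilon$ evenly, but this affects only constants and not the stated complexity.
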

	
	\begin{proof}
		The definition of fast-forwarding considers the simulation of $H_n$ for all times $0 \le t \le T(n)$ and it suffices to consider the maximum $T(n)$ to place an upper bound on the quantum complexity $G(n)$. 
		The simulation method we use is the one described in Fig.~\ref{fig:QPE}, where we first estimate the eigenvalues of $H'_n$ to implement the conditional phase
		that approximates $e^{-i t \lambda_j}$.

		Let $\delta_n=\epsilon/(8 \sqrt{T(n)}) < \sqrt{\epsilon/(4T(n)})$ and $c=1-\epsilon/4$.
		Using the high-confidence version of quantum phase estimation of Ref.~\cite{KOS07} and building $U(H)$ according to Lemma~\ref{lem:QuantumPhaseEstimation}, the output state
		of $U(H)$ is $\ket{\psi_j} \otimes \ket{f(\lambda_j)}_\cA$
		when the input state is $\ket{\psi_j}\otimes \ket0_\cA$. Here, $\ket{f(\lambda_j)}_\cA$ is a linear combination of basis states and has the following properties: with probability at least $1-\epsilon/4$, it is supported on a subspace where the corresponding registers encode the eigenvalue $\lambda'_j$ of $H'$ within precision $\delta_n$. Equivalently, with probability at least $1-\epsilon/4$, it is supported on a subspace where the corresponding registers encode the eigenvalue $\lambda_j$ of $H$ within precision $\epsilon/(2T(n))$; see Lemma~\ref{lem:H'AccuracyPositive}.
		The conditional phase operation then introduces an error that is at most $t \epsilon/(2T(n))\le \epsilon/2$. Inverting $U(H)$ can also add an additional factor of $\epsilon/4$ to the error. Thus, this approach simulates $H_n$ for time $t \le T(n)$ with overall error bounded by $\epsilon/4 + \epsilon/2 + \epsilon/4 = \epsilon$. The resulting circuits following this approach are the $\{V_n(t)\}_{n,t}$, given in Fig.~\ref{fig:QPE}.
		
		The quantum complexity for this method is dominated by that of $U(H)$, analyzed in Lemma~\ref{lem:QuantumPhaseEstimation}. For constant error, this is $\tilde \cO((L^2(n)) \sqrt{T(n)})$, where we dropped logarithmic factors in $L(n)$ and $T(n)$. This complexity is sublinear in $T(n)$.
		Assuming $l(n)=\Omega(n T(n))$, then $G(n)/l(n)\rightarrow 0$ for cases where, for example, $T(n)$ is exponential in $n$.
	\end{proof}
	
	The dominant factor in the scaling of $G(n)$ in Thm.~\ref{thm:FFfrutration-free} is coming from $\sqrt{T(n)}$, and we refer to this case as ``quadratic'' fast-forwarding.
	While Thm.~\ref{thm:FFfrutration-free} focuses on the case where $\Delta$ decreases as $1/T$, it is clear that if $\Delta$ decreases as $1/T^\alpha$, for any positive $\alpha \ne 1$, other types of polynomial fast-forwarding can result. 
	
	As mentioned in Sec.~\ref{sec:def}, while $l(n)$ may not be known, several results in the literature strongly suggest that $l(n)=\Omega(nT(n))$ for $T(n) \le d^{2n}$.

	\section{Fast-forwarding of fermionic and bosonic systems}
	\label{sec:Fermionic/BosonicModels}
	Quantum systems obeying various particle statistics such as fermions or bosons play an important role in physics, including condensed matter and quantum field theories, quantum chemistry, and more. 
	In second quantization, Hamiltonians of $n$-mode fermionic and bosonic systems are written in terms of annihilation and creation  operators $c_i^{\;},c_i^\dag$, respectively, where $i=1,\ldots,n$.
	Fermionic operators satisfy the canonical anticommutation relations
	\begin{align}
		\{c_i^{\;},c_j^\dagger\} & = \delta_{ij} \; , \\
		\{c_i^{\;},c_j^{\;}\} & = 0 \;,  
	\end{align}
	while bosonic operators satisfy the canonical commutation relations
	\begin{align}
		[c_i^{\;},c_j^\dagger]& =\delta_{ij} \; , \\
		[c_i^{\;},c_j^{\;}]& =0 \; .
	\end{align}
	A fermionic or bosonic Hamiltonian is written as
	\begin{align}
		H & = \sum_{I\subseteq \{1,\dots,n\}} h_I \; ,
	\end{align}
	where each $h_I$ describes interactions among a subset $I$ of fermionic or bosonic modes. More precisely, $h_I$ is a sum of terms of the form $a (c_{i_1}^\dag)^{e_1}\dots(c_{i_k}^\dag)^{e_k}(c_{j_1}^{\;})^{f_1}\dots (c_{j_l}^{\;})^{f_l}$, and $a \in \mathbb C$. Due to the exclusion principle, $e_i=f_j=1$ for fermionic systems whereas $e_i \ge 1$ and $f_i \ge 1$ for bosonic systems, for all $i \le k$ and $j \le l$. The Hilbert space $\cH$ is the standard Fock space with the proper symmetrization~\cite{reed1978methods}.
	
	As for spin systems, we 
	can define the degree of $H$ to be the largest sum of the exponents $e_1+\dots+e_k+f_1+\dots+f_l$ in the terms of $h_I$
	and the {\em weight} of $H$ to be the largest value of $k+l$. We will be particularly interested in those physically relevant cases where the degree and weight of $H$ are constant and further assume $|a|\le 1$.
	In this context, a fermionic or bosonic Hamiltonian sequence $\{H_n\}_n$ is meaningful if we can construct $H_n$ from $H_{n-1}$ by adding an interaction term $v_n$ that acts non-trivially on the $n$-th mode. Our definition of fast-forwarding also depends on the model of quantum computation, and in this case, the fermionic and bosonic models are such that
	the elementary quantum gates have the form $e^{i \theta_l O_l}$, where $O_l$ is a fermionic or bosonic operator of bounded weight and degree and $|\theta_l| \le 1$; see Def.~\ref{def:modelsofQC}. Note that these models are different than the standard gate model, but a number of mappings can be used to simulate fermionic or bosonic systems with qubit quantum computers if desired~\cite{SOGKL02,SOKG03}.

	Very few results in the literature can be used to address the no-fast-forwarding line for fermionic or bosonic systems.
	Such are the known results for local spin systems~\cite{HHK+18}
	that, under some assumptions, could also be applied to this setting. 
	(The resulting complexity bounds will be weakened by overheads due to mappings between various models of quantum computing.) For certain classes of fermionic or bosonic systems, these results imply a quantum complexity that is at least linear in $t$, in certain range.
	In general, we conjecture $l(n)= \Omega({\rm poly}(n)T(n))$ for the fermionic and bosonic systems discussed in the following sections, where $T(n)=\exp(\Omega(n))$.  We will prove
	exponential fast-forwarding of quadratic fermionic and certain quadratic bosonic Hamiltonians under this assumption, where the weight of $H$ is at most 2.
	We do this by using a Lie-algebra diagonalization approach that can be applied more generally.

	\subsection{Lie-algebra diagonalization}
	\label{subsec:LieAlgebraDiagonalization}
	We describe an approach for fast-forwarding Hamiltonians that are elements of a Lie algebra of {\em small} dimension.
	Hamiltonian simulation via diagonalization was also recently explored in Refs.~\cite{cirstoiu2020variational,KSFDK21} with a focus on near term simulation of qubit Hamiltonians.
	Our method, which works for fermionic, bosonic, and other types of Hamiltonians, uses a Lie algebraic version of the Jacobi eigenvalue algorithm for diagonalizing matrices~\cite{Wildberger93,Som05,Som19}, which we review. 
	
	Suppose the Hamiltonian $H$ is an element of a real compact semisimple Lie algebra $\mathfrak g$ with Lie group $G$. The Cartan-Weyl basis decomposition of the complexified Lie algebra implies
	\begin{align}
		\mathfrak g=\mathfrak h \oplus \left(\bigoplus_{\alpha\in R_+} \mathfrak g_\alpha\right) \;,
	\end{align}
	where $\mf h$ is the Cartan subalgebra, $R_+\subseteq \mf h^*$ are the positive roots and each $\mf g_\alpha$ is the associated two-dimensional space corresponding to the roots $\alpha$ and $-\alpha$. Because $\mf g$ is compact, each Lie subalgebra generated by $\mathfrak g_\alpha$ is isomorphic to $\mathfrak{su}(2)$ and spanned by generalized Pauli elements $X_\alpha,Y_\alpha\in \mathfrak g_\alpha, Z_\alpha\in\mathfrak h$ with commutation relations
	\begin{align}
		[X_\alpha,Y_\alpha] &= 2iZ_\alpha \; , \\
		[Y_\alpha,Z_\alpha] &= 2iX_\alpha \; ,\\
		[Z_\alpha,X_\alpha] &= 2iY_\alpha \; .
	\end{align}
	The Lie-algebra diagonalization approach
	is simply a sequence of $\mathfrak{su}(2)$ rotations that eliminate the off-diagonal elements on $2 \times 2$ blocks gradually.
	In the first step we set $H^{(0)} = H$.  At each step $j \ge 0$, we write
	\begin{align}
		H^{(j)} = H_D^{(j)} + \sum_{\alpha\in R_+} (a_\alpha^{(j)} X_\alpha + b_\alpha^{(j)} Y_\alpha).
	\end{align}
	The $H_D^{(j)} \in \mathfrak h$ term is the diagonal component of the Hamiltonian $H^{(j)}$. The off-diagonal coefficients
	are $a_\alpha^{(j)}$, $b_\alpha^{(j)} \in \mathbb R$ and the goal is to make these small. To this end, we eliminate the  $a_{\alpha_j}^{(j)},b_{\alpha_j}^{(j)}$'s of largest norm  at each step. This is accomplished by performing the transformation
	\begin{align}
		H^{(j)}\rightarrow H^{(j+1)}= (V^{(j+1)})^\dagger H^{(j)} V^{(j+1)} \;.
	\end{align}
	The unitaries $V^{(j)}$ are  $\mathfrak{su}(2)$ rotations given by
	\begin{align}
		V^{(j+1)} = e^{i(\pi_x X_{\alpha_j}+\pi_y Y_{\alpha_j})} \;,
	\end{align}
	where the coefficients $|\pi_{x,y}| \le \pi$ can be determined from $H^{(j)}$ in time polynomial in $\dim \mathfrak g$;
	more details can be found in Ref.~\cite{Som19}.

	Convergence of the sequence $\{H^{(j)}\}_j$ to a diagonal operator can be measured using the distance from each Hamiltonian to the Cartan subalgebra, defined by the norm of the off-diagonal terms
	\[d_{\mf h}(H^{(j)}) = \left(8\sum_{\alpha\in R_+}\left(|a_\alpha^{(j)}|^2+|b_\alpha^{(j)}|^2\right)\right)^{1/2}.\]
	Wildberger proved that this algorithm converges exponentially~\cite{Wildberger93}:
	
	\begin{lemma}[Wildberger]
		\label{lem:wildberger}
		Let $H$ be an element of a real compact semisimple Lie algebra $\mf g$ with Lie group $G$. Let $l=|R_+|<\frac{1}{2}\dim\mf g$. Then, there is a sequence of $r$ unitaries $V^{(j)}\in \exp(i\mf g_{\alpha_j})\subseteq G$ such that
		\begin{align}
			d_{\mf h}(H^{(r)})\le \left(\frac{l-1}{l}\right)^{r/2}d_{\mf h}(H) \;.
		\end{align}
	\end{lemma}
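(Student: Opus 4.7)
The plan is to run a Lie-algebraic analogue of the Jacobi eigenvalue algorithm and track the Lyapunov function
\begin{align}
S^{(j)} := \frac{d_{\mf h}(H^{(j)})^2}{8} = \sum_{\alpha\in R_+}\left(|a_\alpha^{(j)}|^2 + |b_\alpha^{(j)}|^2\right).
\end{align}
At step $j$ I would choose $\alpha_j \in R_+$ maximizing $|a_{\alpha}^{(j)}|^2 + |b_{\alpha}^{(j)}|^2$; with $l=|R_+|$ positive roots, pigeonhole forces $|a_{\alpha_j}^{(j)}|^2 + |b_{\alpha_j}^{(j)}|^2 \ge S^{(j)}/l$. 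Then pick $\pi_x,\pi_y$ so that conjugation by $V^{(j+1)} = e^{i(\pi_x X_{\alpha_j}+\pi_y Y_{\alpha_j})}$ annihilates the $\alpha_j$ pair, i.e.\ $a_{\alpha_j}^{(j+1)}=b_{\alpha_j}^{(j+1)}=0$. Such $\pi_x,\pi_y$ exist with absolute value at most $\pi$ because $\{X_{\alpha_j},Y_{\alpha_j},Z_{\alpha_j}\}$ span a copy of $\mathfrak{su}(2)$ and the adjoint action of $V^{(j+1)}$ restricted to this triple is an $SO(3)$ rotation, which can always align any $3$-vector with the $Z_{\alpha_j}$ axis.

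The crux, and the step I expect to be most delicate, is establishing
\begin{align}
S^{(j+1)} = S^{(j)} - \left(|a_{\alpha_j}^{(j)}|^2 + |b_{\alpha_j}^{(j)}|^2\right).
\end{align}
For this I would decompose $\mf g$ into irreducible summands under the adjoint action of the $\mathfrak{su}(2)_{\alpha_j}$ subalgebra. Because $\mathrm{Ad}(V^{(j+1)})$ preserves every summand together with the Killing-form inner product on each, mass can only rearrange inside a summand. The $\alpha_j$-triple itself is the spin-$1$ adjoint summand, and by construction its $X_{\alpha_j},Y_{\alpha_j}$ content is pushed entirely into $Z_{\alpha_j}\in\mf h$, accounting for the claimed decrement. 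Every other off-diagonal summand lies inside some $\alpha_j$-string $\{\beta+k\alpha_j\}_k$ with $\beta\ne\pm\alpha_j$; since $\pm\alpha_j$ are the only roots that are integer multiples of $\alpha_j$, such a string contains no element of $\mf h$, so its entire mass stays inside the off-diagonal collection $\{X_\gamma,Y_\gamma\}_{\gamma\in R_+}$, and any internal rearrangement leaves $\sum_{\gamma\in\text{string}}(|a_\gamma|^2+|b_\gamma|^2)$ intact. Finally, the Cartan complement of $\mathbb R Z_{\alpha_j}$ in $\mf h$ is pointwise fixed by $V^{(j+1)}$ and so contributes nothing. Summing over summands gives the identity.

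Combining the pigeonhole bound with this identity yields $S^{(j+1)} \le S^{(j)}(l-1)/l$, and iterating $r$ times produces $S^{(r)}\le S^{(0)}((l-1)/l)^{r}$. Taking square roots and multiplying by $\sqrt 8$ delivers $d_{\mf h}(H^{(r)})\le ((l-1)/l)^{r/2}d_{\mf h}(H)$. The main obstacle is precisely the bookkeeping of the middle paragraph: I must rule out any hidden leakage of mass between the off-diagonal collection and the Cartan subalgebra through summands other than the $\alpha_j$-triple, and both the indivisibility of roots and the pointwise fixing of $\ker\alpha_j\subset\mf h$ are needed to close this gap.
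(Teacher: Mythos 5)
The paper does not prove this lemma; it is stated with a citation to Wildberger~\cite{Wildberger93}, so there is no in-paper argument to compare against. Your reconstruction is the standard Jacobi-style argument and it is correct: the pigeonhole bound on the maximal off-diagonal pair, the freedom to choose an $\mathfrak{su}(2)_{\alpha_j}$ conjugation (acting as an $SO(3)$ rotation that aligns the three-vector $(a_{\alpha_j},b_{\alpha_j},c_{\alpha_j})$ with the $Z_{\alpha_j}$ axis) so as to kill both off-diagonal components, and the decomposition of $\mathfrak g$ into $\alpha_j$-strings to localize the effect of that conjugation are exactly the right ingredients. Your use of reducedness of the root system to show that a string through $\beta\ne\pm\alpha_j$ never meets $\mathfrak h$, together with the pointwise fixing of $\ker\alpha_j\subset\mathfrak h$, are precisely the two facts needed to rule out mass leaking between the Cartan and the off-diagonal sector outside the $\alpha_j$-triple.

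One point worth tightening: your string-conservation step invokes invariance of the Killing form under $\mathrm{Ad}(V^{(j+1)})$, but the quantity $S^{(j)}=\sum_{\alpha\in R_+}(|a_\alpha^{(j)}|^2+|b_\alpha^{(j)}|^2)$ you track is an \emph{unweighted} sum, which is proportional to the Killing norm of the off-diagonal part only when all $X_\alpha,Y_\alpha$ have the same Killing norm, i.e.\ when the root system is simply-laced. In a non-simply-laced $\mathfrak g$ (e.g.\ $B_2$), an $\alpha_j$-string can mix long and short roots, so the unweighted string mass need not be $\mathrm{Ad}$-invariant and your exact decrement identity can fail. The clean fix is to carry the root-dependent weights $\|X_\alpha\|_B^2$ through $S^{(j)}$ (equivalently, define $d_{\mathfrak h}$ as the Killing-form distance to $\mathfrak h$); these weights pass through the pigeonhole unchanged, so the rate $((l-1)/l)^{1/2}$ is unaffected. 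This caveat is harmless for the paper's applications, since $\mathfrak{so}(2n,\mathbb R)$ (type $D_n$) and $\mathfrak{su}(n)$ (type $A_{n-1}$) are both simply-laced, but it is needed to make your argument valid for arbitrary compact semisimple $\mathfrak g$ as the lemma claims.
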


	The Hamiltonian simulation approach via Lie-algebra diagonalization is given in Fig~\ref{fig:diagonalization}. In principle, each unitary $V^{(j)}$ is a sequence of elementary gates in a corresponding model of quantum computing determined by $\mathfrak g$. For simplicity, we define this model so that each $V^{(j)}$ is a single elementary gate of unit cost. 
	Additionally, the value of $r$ can be chosen so that the desired accuracy in the diagonalization is achieved. In particular, there exists
	\begin{align}
		\label{eq:rbound}
		r=\cO((\dim \mathfrak g) \log(1/\varepsilon))
	\end{align}
	that achieves $d_{\mf h}(H^{(r)})\le \varepsilon d_{\mf h}(H)$.
	The diagonal operation $e^{-i t H_D^{(r)}}$
	can be implemented with quantum complexity that depends on $t$ and the approximation error.
	We obtain:
	
	\begin{lemma}
		\label{lem:diagonalsimulation}
		Let $H$ and $H^{(j)} \in \mathfrak g$ be as above, $\epsilon>0$, and $T>0$. Let $G_D(T,\epsilon)$ be an upper bound on the quantum complexity of approximating $e^{-i t H_D^{(r)}}$ within error $\epsilon/2$ for all $t \le T$ and assume $\|H^{(r)} - H_D^{(r)}\| \le \epsilon/(2T)$. Then, there exist quantum circuits $\{V(t)\}_t$ that simulate $H$ with parameters $(T,\epsilon,2r+G_D(T,\epsilon))$.
	\end{lemma}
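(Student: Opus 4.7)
The plan is to build $V(t)$ exactly as suggested by the diagonalization procedure. Writing $W = V^{(1)} V^{(2)} \cdots V^{(r)}$, the construction from Sec.~\ref{subsec:LieAlgebraDiagonalization} yields $W^\dagger H W = H^{(r)}$, and hence
\begin{align}
e^{-i t H} = W e^{-i t H^{(r)}} W^\dagger \;.
\end{align}
If $H^{(r)}$ were exactly diagonal, a unit-cost implementation of $e^{-i t H_D^{(r)}}$ would trivially give an exact circuit, so the real work is to control the error from (a) replacing $H^{(r)}$ by $H_D^{(r)}$ and (b) approximating the diagonal evolution itself.

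For step~(a), I would use the standard first-order operator-norm bound
\begin{align}
\left\| e^{-i t H^{(r)}} - e^{-i t H_D^{(r)}} \right\| \le t \, \|H^{(r)} - H_D^{(r)}\| \le t \cdot \frac{\epsilon}{2 T} \le \frac{\epsilon}{2}
\end{align}
valid for all $t \le T$, by the assumption on $\|H^{(r)} - H_D^{(r)}\|$. Since conjugation by the unitary $W$ is an isometry in operator norm, this error is preserved when we sandwich between $W$ and $W^\dagger$. For step~(b), let $\widetilde U_D(t)$ denote the circuit of complexity at most $G_D(T,\epsilon)$ that approximates $e^{-i t H_D^{(r)}}$ within error $\epsilon/2$. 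Defining
\begin{align}
V(t) = W \,\widetilde U_D(t)\, W^\dagger \;,
\end{align}
the triangle inequality gives $\|e^{-i t H} - V(t)\| \le \epsilon/2 + \epsilon/2 = \epsilon$, which is the desired simulation error on all of $\mathcal{H}$ (so in particular on any ancilla-free initial state, matching Def.~\ref{def:simulparamsubsp}). If $\widetilde U_D(t)$ uses an ancillary register, the same triangle argument goes through with identities appended, at the cost of a trivially enlarged Hilbert space.

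Finally, for the gate count: $W$ and $W^\dagger$ each consist of $r$ of the elementary $\mathfrak{su}(2)$ rotations $V^{(j)}$, which by the convention introduced right before the lemma are unit-cost gates of the model, contributing $2r$ gates total; the diagonal evolution contributes at most $G_D(T,\epsilon)$ gates by hypothesis. Summing yields the claimed $2r + G_D(T,\epsilon)$. I do not anticipate a serious technical obstacle here: the only subtlety is ensuring the first-order exponential bound is applied to the full operator norm (not just on a subspace), which is why the hypothesis is phrased as $\|H^{(r)} - H_D^{(r)}\| \le \epsilon/(2T)$ rather than as a bound on $d_{\mathfrak h}$; translating between $d_{\mathfrak h}$ and the operator norm is where one would invoke Lemma~\ref{lem:wildberger} together with the bound~\eqref{eq:rbound} on $r$ to make the assumption achievable, but within this lemma itself it is simply taken as given.
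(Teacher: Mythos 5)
Your proof is correct and follows essentially the same route as the paper's: conjugate by $W = V^{(1)}\cdots V^{(r)}$ so that $e^{-itH} = W e^{-itH^{(r)}} W^\dagger$, use unitary invariance of the operator norm to reduce to $\|e^{-itH^{(r)}} - e^{-itH_D^{(r)}}\| \le t\|H^{(r)} - H_D^{(r)}\| \le \epsilon/2$, add the $\epsilon/2$ from the circuit approximating the diagonal evolution, and count $2r + G_D(T,\epsilon)$ gates. No meaningful deviation from the paper's argument.
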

	\begin{proof}
		We use the circuit of Fig.~\ref{fig:diagonalization} to implement $V(t)$. Since each $V^{(j)}$ and $(V^{(j)})^\dagger$ has unit cost according to the assumptions, the quantum complexity of the circuit in Fig.~\ref{fig:diagonalization} is $2r$ plus the cost of implementing $e^{-it H_D^{(r)}}$. The correctness of this simulation method follows from simple norm properties ($t \le T$):
		\begin{align}
			\| e^{-i t H} - V^{(1)} \ldots V^{(r )}e^{-i t H_D^{(r)}} (V^{(r )})^\dagger \ldots (V^{(1)})^\dagger \| & = \| e^{-i t H^{(r)}} - e^{-i t H_D^{(r)}} \|  \\
			& \le \|t (H^{(r)} -H_D^{(r)} )\| \\
			& \le \epsilon/2 \;.
		\end{align}
		If, in addition, we approximate $e^{-it H_D^{(r)}}$ with a quantum circuit $U$ of quantum complexity at most $G_D(T,\epsilon)$ such that $\|e^{-it H_D^{(r)}}-U \|\le \epsilon/2$, then the overall approximation error is bounded by $\epsilon$ and the overall quantum complexity is $2r+G_D(T,\epsilon)$.
	\end{proof}
	
	\begin{figure} [htbp]
		\centerline{\Qcircuit @C=1em @R=.7em {
				& {/}\qw & \gate{(V^{(1)})^\dag} & \gate{(V^{(2)})^\dag} & \qw & \dots &  & \gate{(V^{(r)})^\dag} & \gate{ e^{-itH_D^{(r)}}} & \gate{ \phantom{(}V^{(r)}\phantom{)}^{\;}} & \qw & \dots & & \gate{\phantom{(}V^{(2)}\phantom{)}^{\;}} & \gate{\phantom{(}V^{(1)}\phantom{)}^{\;}} & \qw
		}}
		\caption{\label{fig:diagonalization}Hamiltonian simulation via Lie-algebra diagonalization. The quantum circuit describes the unitary $V(t)$ that simulates $U(t)=e^{-itH}$.
		}
	\end{figure}
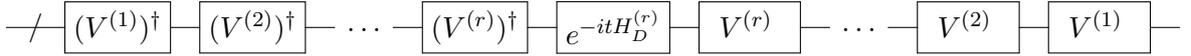

	Of particular interest for fast-forwarding via Lie-algebra diagonalization is the case where $l$ is small relative to the dimension of the Hilbert space. This occurs for many quantum systems, such as quadratic fermionic systems, where the dimension of $\mathfrak g$ is only polynomial in $n$. In this case, the Lie-algebra diagonalization method described above converges quickly, and to satisfy $\|H^{(r)}- H_D^{(r)}\|\le \epsilon/(2T)$, it suffices to choose $r$ that is polynomial in $n$ but only polylogarithmic in $T/\epsilon$. If, in addition, the quantum complexity of approximating $e^{-i t H_D^{(r)}}$ within error $\epsilon$ is also polynomial in $n$ and polylogarithmic in $T/\epsilon$, for $t \le T$, we attain exponential fast-forwarding:

	\begin{theorem}[Exponential fast-forwarding via Lie-algebra diagonalization]
		\label{thm:diagonalFF}
		Let $\{\cC_n\}_n$ denote classes of Hamiltonians acting on spaces $\{\cH_n\}_n$ and $H_n \in \cC_n$, $H_n \in \mathfrak g_n$, where  $\mathfrak g_n$ is a compact, semisimple Lie algebra of $\dim \mathfrak g_n ={\rm poly}(n)$. Assume 
		$d_{\mf h}(H_n)={\rm poly}(n)$ and that, for any $X \in \mf g_n$,
		\begin{align}
			\label{eq:distancebounds}
			\|X-X_D\|\le p(n)d_{\mf h}(X)\;, \quad d_{\mf h}(X)\le q(n)\|X-X_D\| \;,
		\end{align}
		where $p(n),q(n)=\poly(n)$ and $X_D$ is the (diagonal) projection of $X$ onto $\mathfrak h$. Let $\epsilon > 0$ and $T(n) = \tau^n$ for some $\tau>1$. 
		If the quantum complexity of simulating the diagonal version of $H_n$ for time $t \le T(n)$ within precision $\epsilon/2$ is $G_D(t,\epsilon)=\poly(n,\log(t \|H_n\|/\epsilon))$, then there exist quantum circuits $\{V_n(t)\}_{n,t}$ that are $(T(n),\epsilon,G(n))$-fast-forwarding the Hamiltonians $\{H_n\}_n$ on the spaces $\{\cH_n\}_n$, where $G(n)={\rm poly}(n)$. In particular, if $l(n)=\Omega(n T(n))$, we obtain exponential fast-forwarding.
	\end{theorem}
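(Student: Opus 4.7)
The plan is to combine Wildberger's exponential convergence (Lemma~\ref{lem:wildberger}) with the diagonal-simulation circuit of Lemma~\ref{lem:diagonalsimulation}, choosing the number of Jacobi rotations $r$ large enough that the residual off-diagonal part is small compared to $\epsilon/T(n)$, and then verifying that the resulting $r$ remains $\poly(n)$ even when $T(n)$ is exponential in $n$.

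First, I would set up the error budget. Lemma~\ref{lem:diagonalsimulation} requires $\|H_n^{(r)}-H_{D,n}^{(r)}\|\le \epsilon/(2T(n))$. By the hypothesis $\|X-X_D\|\le p(n)\,d_{\mf h}(X)$, it suffices to arrange
\begin{align}
d_{\mf h}(H_n^{(r)}) \le \frac{\epsilon}{2T(n)\, p(n)}.
\end{align}
Wildberger's lemma (applied with $l=|R_+|\le \tfrac12\dim\mf g_n=\poly(n)$) gives $d_{\mf h}(H_n^{(r)})\le \bigl((l-1)/l\bigr)^{r/2}d_{\mf h}(H_n)$. Solving for $r$ and using $d_{\mf h}(H_n)=\poly(n)$, $T(n)=\tau^n$, and $p(n)=\poly(n)$, one obtains
\begin{align}
r = \mathcal{O}\!\left(l\,\log\!\frac{T(n)\,p(n)\,d_{\mf h}(H_n)}{\epsilon}\right) = \poly(n),
\end{align}
for any fixed $\epsilon>0$, since the logarithm of $\tau^n$ is linear in $n$ and $l=\poly(n)$.

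Next I would invoke Lemma~\ref{lem:diagonalsimulation}: the simulation circuit $V_n(t)$ is the conjugation of $e^{-itH_{D,n}^{(r)}}$ by the product of the $r$ Jacobi rotations (cf.\ Fig.~\ref{fig:diagonalization}), each of which is an elementary gate of unit cost in the model of computation associated with $\mf g_n$. The total quantum complexity is $2r+G_D(T(n),\epsilon)$. The hypothesis $G_D(t,\epsilon)=\poly(n,\log(t\|H_n\|/\epsilon))$ together with $T(n)=\tau^n$ and $\|H_n\|\le \|H_n-H_{D,n}\|+\|H_{D,n}\|\le (1+p(n))\cdot\|H_n\|$-type bounds (more directly, $\|H_n\|$ is polynomially bounded once one notes that $d_{\mf h}(H_n)$ and $\|H_{D,n}\|$ are both $\poly(n)$, using the second inequality in~\eqref{eq:distancebounds}) yields $G_D(T(n),\epsilon)=\poly(n)$. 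Hence $G(n)=2r+G_D(T(n),\epsilon)=\poly(n)$, verifying the parameters $(T(n),\epsilon,G(n))$ in the sense of Def.~\ref{def:simulparamsubsp}.

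Finally, to conclude fast-forwarding in the sense of Def.~\ref{def:general_fast-forwarding}, I would check $G(n)/l(n)\to 0$: under the stated assumption $l(n)=\Omega(nT(n))=\Omega(n\tau^n)$, this is immediate since $G(n)=\poly(n)$ is exponentially smaller than $n\tau^n$. The main obstacle I anticipate is bookkeeping around the two-sided distance bounds in~\eqref{eq:distancebounds}: the bound $\|X-X_D\|\le p(n)d_{\mf h}(X)$ is what lets us convert Wildberger's guarantee on $d_{\mf h}$ into the operator-norm bound required by Lemma~\ref{lem:diagonalsimulation}, while the companion inequality is implicitly needed to keep $\|H_n\|$ (and hence $\log(T(n)\|H_n\|/\epsilon)$) polynomial in $n$ so that $G_D$ remains $\poly(n)$. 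Neither step is deep, but both must be tracked carefully to ensure that no quantity secretly depends exponentially on $n$.
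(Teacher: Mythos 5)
Your proof follows the same route as the paper's (which is terse but says essentially this): pick $r$ via Wildberger's lemma so that $d_{\mf h}(H_n^{(r)})\le \epsilon/(2T(n)p(n))$, convert this to the operator-norm bound $\|H_n^{(r)}-H_{D,n}^{(r)}\|\le \epsilon/(2T(n))$ using the first inequality in Eq.~\eqref{eq:distancebounds}, note $r=\poly(n)$ because $\log T(n)=\cO(n)$ and $\dim\mf g_n=\poly(n)$, and then apply Lemma~\ref{lem:diagonalsimulation}.

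One small slip worth flagging, though it does not break the argument: your parenthetical attempt to show $\|H_n\|=\poly(n)$ is circular as written (the right-hand side still contains $\|H_n\|$), and the claim that $\|H_{D,n}\|=\poly(n)$ does not follow from the stated hypotheses --- only $d_{\mf h}(H_n)$, i.e.\ the off-diagonal part, is assumed to be polynomial. The detour is unnecessary: $\|H_n\|$ enters $G_D$ only inside a logarithm, so $G_D(T(n),\epsilon)=\poly(n)$ already follows from the hypothesis so long as $\log\|H_n\|=\poly(n)$, which holds even if $\|H_n\|$ were exponential in $n$. Relatedly, the second inequality in Eq.~\eqref{eq:distancebounds} is not actually used in this proof (nor in the paper's); only the first is needed to pass from $d_{\mf h}$ to the operator norm.
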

	
	\begin{proof}
		The proof is a direct consequence of Lemma~\ref{lem:diagonalsimulation} when considering the sequence of Hamiltonians $\{H_n\}_n$ and under the assumptions on the asymptotic scalings of quantum complexities. In particular, the bounds in Eqs.~\eqref{eq:distancebounds} and $\dim \mf g_n = \poly(n)$ allow us to set $r=\poly(n)$. The classical complexity of finding the unitaries $V^{(j)}$ is polynomial in the dimension of the Lie algebra and polylogarithmic in the precision, and thus is ${\rm poly}(n)$ under all the assumptions.
	\end{proof}
	
	The requirements to achieve exponential fast-forwarding using the Lie-algebra diagonalization approach are satisfied by some important classes of Hamiltonians, such as quadratic fermionic and bosonic Hamiltonians, as we show below.
	While this diagonalization approach is efficient, 
	in some cases  it can be less efficient than directly decomposing the time evolution operator into a sequence of elementary quantum gates using a different approach~\cite{Kivlichan18,Jiang18}.

	\subsubsection{Quadratic fermionic Hamiltonians}
	\label{subsec:QuadraticFermionicHamiltonians}
	As one example of exponential fast-forwarding via Lie-algebra diagonalization, we consider fermionic Hamiltonians which are quadratic in the creation and annihilation operators. 
	These are written as
	\begin{align}
		\label{eq:fermionicH}
		H = \sum_{i,j} \alpha_{ij}c_i^\dagger c_j^{\;} + \beta_{ij}c_i^{\;}c_j^{\;} - \beta_{ij}^* c_i^\dagger c_j^\dagger \; ,
	\end{align}
	where we assume $\alpha_{ij}=(\alpha_{ji})^*$, $|\alpha_{ij}|\le 1$, and 
	$|\beta_{ij}|\le 1$. Up to an additive constant, the terms in $H$ span a compact Lie algebra $\mathfrak g$ isomorphic to $\mf{so}(2n,\bb R)$, where the Cartan subalgebra is spanned by $\{c^\dagger_i c^{\;}_i -1/2\}_i$~\cite{GilFen83}. This satisfies
	$\dim \mathfrak g= n(2n-1)$, being polynomial in $n$, and we can obtain a bound of the form \eqref{eq:distancebounds}. Thus we can use the analysis of the Sec.~\ref{subsec:LieAlgebraDiagonalization} to demonstrate exponential fast-forwarding. We obtain:

	\begin{theorem}[Exponential fast-forwarding of quadratic fermionic Hamiltonians]
		\label{thm:fermionicFF}
		Let $\{\cC_n\}_n$ denote the classes of $n$-mode fermionic Hamiltonians acting on Fock spaces $\cH_n$ and
		$\{H_n \in \mf{so}(2n,\bb R)\}_n$ be as in Eq.~\eqref{eq:fermionicH}. Let $\epsilon > 0$ and $T(n)=\tau^n$ for some $\tau>1$. 
		Then, there exist quantum circuits $\{V_n(t)\}_{n,t}$ that are $(T(n),\epsilon,G(n))$-fast-forwarding the Hamiltonians $\{H_n\}_n$ on the spaces $\{\cH_n\}_n$, where $G(n)=\cO(n^2\log (T(n)))$, and thus $G(n)={\rm poly}(n)$. In particular, if $l(n)=\Omega(n T(n))$, we obtain exponential fast-forwarding.
	\end{theorem}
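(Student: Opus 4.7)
The plan is to invoke Theorem~\ref{thm:diagonalFF} and verify each of its hypotheses for the sequence $\{H_n\}_n$. As already noted in the paragraph preceding the theorem statement, the Hermitian quadratic fermionic operators of the form~\eqref{eq:fermionicH} span, up to an additive scalar, a real compact semisimple Lie algebra $\mf g_n \cong \mf{so}(2n,\bb R)$ of dimension $n(2n-1) = \poly(n)$, whose Cartan subalgebra $\mf h$ is spanned by the number-like operators $\{c_i^\dagger c_i^{\;} - 1/2\}_{i=1}^n$. So the Lie-algebraic structural requirement is immediate.

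The second step is to establish the norm equivalence of Eq.~\eqref{eq:distancebounds}. I would fix an explicit Cartan-Weyl basis whose off-diagonal generators $X_\alpha, Y_\alpha$ are the standard Hermitian hopping and pairing operators $c_i^\dagger c_j^{\;} + c_j^\dagger c_i^{\;}$, $i(c_i^\dagger c_j^{\;} - c_j^\dagger c_i^{\;})$, $c_i^\dagger c_j^\dagger + c_j^{\;} c_i^{\;}$ and $i(c_i^\dagger c_j^\dagger - c_j^{\;} c_i^{\;})$ for $i<j$, each of which has operator norm $\cO(1)$ on Fock space. For any $X = X_D + \sum_{\alpha \in R_+}(a_\alpha X_\alpha + b_\alpha Y_\alpha) \in \mf g_n$, the triangle inequality combined with Cauchy-Schwarz and $|R_+| = \cO(n^2)$ yields $\|X - X_D\| \le p(n) \, d_{\mf h}(X)$ with $p(n) = \poly(n)$; the reverse bound $d_{\mf h}(X) \le q(n) \|X - X_D\|$ with $q(n) = \poly(n)$ follows from the equivalence of norms on the $\poly(n)$-dimensional real vector space $\mf g_n / \mf h$. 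For the initial Hamiltonian itself, the bounds $|\alpha_{ij}|, |\beta_{ij}|\le 1$ directly give $d_{\mf h}(H_n) = \poly(n)$.

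The third step is to control the diagonal evolution. Any $H_D^{(r)} = \sum_i \gamma_i (c_i^\dagger c_i^{\;} - 1/2) \in \mf h$ is a sum of commuting operators with eigenvalues $\pm 1/2$, so $e^{-itH_D^{(r)}}$ factorizes exactly as a product of $n$ single-mode phase rotations, each $4\pi$-periodic in its exponent. After reducing the angles modulo $4\pi$, each factor can be implemented with $\cO(1)$ elementary fermionic gates of phase magnitude at most $1$, yielding $G_D(T,\epsilon) = \cO(n)$, which is trivially $\poly(n,\log(t\|H_n\|/\epsilon))$.

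Finally, to assemble the complexity bound, Lemma~\ref{lem:wildberger} and Eq.~\eqref{eq:rbound} show that to achieve $\|H_n^{(r)} - H_D^{(r)}\| \le \epsilon/(2T(n))$ it suffices to target diagonalization accuracy $\varepsilon = \Theta(\epsilon/(T(n)\, p(n)\, d_{\mf h}(H_n)))$, so $r = \cO(\dim \mf g_n \cdot \log(1/\varepsilon)) = \cO(n^2 \log T(n))$. Lemma~\ref{lem:diagonalsimulation} then gives total quantum complexity $G(n) = 2r + G_D(T(n),\epsilon) = \cO(n^2 \log T(n)) = \poly(n)$. Since $T(n)=\tau^n$ and $l(n) = \Omega(n T(n))$, we have $G(n)/l(n) \to 0$, establishing exponential fast-forwarding. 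The only nontrivial piece is pinning down the polynomial constants $p(n),q(n)$ in the norm-equivalence bound; everything else is a bookkeeping assembly of the three ingredients above.
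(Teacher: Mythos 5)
Your proposal follows essentially the same route as the paper: invoke the Lie-algebra diagonalization machinery of Lemma~\ref{lem:wildberger} and Lemma~\ref{lem:diagonalsimulation} for $\mf{so}(2n,\bb R)$, verify the $\poly(n)$ bound in Eq.~\eqref{eq:distancebounds}, bound $d_{\mf h}(H_n)=\cO(n^2)$, set $r=\cO(n^2\log T(n))$, and implement $e^{-itH_{n,D}^{(r)}}$ with $\cO(n)$ single-mode number-operator rotations. The only differences are cosmetic: the paper quotes a sharper $p(n)=\tfrac12\sqrt{n/(n-1)}$ where your Cauchy--Schwarz argument gives $p(n)=\cO(n)$ (both suffice since only $\log p(n)$ enters $r$), and you are slightly more explicit about reducing the diagonal-rotation angles modulo the period so that each elementary gate has $|\theta|\le 1$.
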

	
	\begin{proof}
		The circuits $\{V_n(t)\}_{n,t}$ to simulate the Hamiltonians $\{H_n\}_n$
		are given in Fig.~\ref{fig:diagonalization}. For precision $\epsilon$, it suffices to choose $r$ in
		Lemma~\ref{lem:wildberger} such that $\|H_n^{(r)}- H_{n,D}^{(r)} \| =\cO(\epsilon/T(n))$, where $H_n^{(r)}$ is the Hamiltonian at the $r$-th step of the Lie-algebra diagonalization procedure that starts with $H_n$, and $H_{n,D}^{(r)}$ is its projection onto the Cartan subalgebra.
		The elements of the fermionic $\mf{so}(2n,\bb R)$ algebra satisfy the bounds of Eq.~\eqref{eq:distancebounds}. A precise analysis allows us to take $p(n)=\frac 1 2\sqrt{\frac{n}{n-1}}$ in Eq.~\eqref{eq:distancebounds}. Since $|\alpha_{ij}|,|\beta_{ij}|\le 1$, the initial distance to the Cartan subalgebra is $d_{\mf h}(H_n)=\cO(n^2)$. Thus, we choose
		$\varepsilon=\epsilon/(T(n)p(n)n^2)$ in Eq.~\eqref{eq:rbound},
		and it suffices to run
		\begin{align}
			r&=\cO\left(\dim\mf{so}(2n,\bb R)\log\left(\frac{T(n)p(n)n^2}{\epsilon}\right)\right) \\
			& = \cO\left(n^2\log (T(n) )\right)
		\end{align}
		iterations of the Lie-algebra diagonalization method for each $n$, which determines the complexity of $V_n(t)$. Each $\mf{su}(2)$ rotation $V^{(j)}$ in Fig.~\ref{fig:diagonalization} is the exponential of a weight two fermionic operator and is thus an elementary gate in the fermionic model of computation. Since the exponential of the diagonal Hamiltonian $H_{n,D}^{(r)}$ can be implemented as a sequence of  $\cO(n)$ rotations $e^{-i\theta_jc_j^\dag c^{\;}_j}$, $|\theta_j | \le \pi$, we obtain the desired total quantum complexity $G(n)=\cO(n^2\log (T(n)))$.
	\end{proof}

	\subsubsection{Quadratic bosonic Hamiltonians}
	\label{subsec:QuadraticBosonicHamiltonians}
	The approach of Sec.~\ref{subsec:LieAlgebraDiagonalization} may also be applied to quadratic bosonic Hamiltonians. However, for bosonic Hamiltonians containing non-number-conserving terms $c_ic_j, c_i^\dag c_j^\dag$, the real Lie algebra spanned by the Hermitian terms is noncompact (with complexification $\mf{sp}(2n,\bb C)$), and the Lie-algebra diagonalization technique cannot directly be used. Therefore, we consider number-conserving quadratic bosonic Hamiltonians
	\begin{align}
		\label{eq:bosonicH}
		H = \sum_{i,j} \alpha_{ij}c_i^\dagger c_j^{\;} \;,
	\end{align}
	where $|\alpha_{ij}|\le 1$ and $\alpha_{ji}=\alpha_{ij}^*$. This simplification also allows us to restrict to subspaces $\mc S_m$ of $m$ total bosons, on which $H$ is a bounded operator. 
	It is clear that $\mc S_m$ is invariant under $H$. 
	By adding a constant to $H$, we identify the real Lie algebra $\mf g$ generated by the Hermitian terms of $H$ as $\mf{su}(n)$, which has polynomial dimension $\dim \mf g=n^2-1$.
	The Cartan subalgebra consists of all elements of the form $\sum_j \gamma_j c_j^\dag c_j^{\;}$ where $\sum_j\gamma_j=0$ and $\gamma_j \in \mathbb{R}$.
	Then we can use the analysis of Sec.~\ref{subsec:LieAlgebraDiagonalization} to show exponential fast-forwarding.
	
	\begin{theorem}[Exponential fast-forwarding of number-conserving quadratic bosonic Hamiltonians]
		\label{thm:bosonicFF}
		Let $\{\cC_n\}_n$ denote the classes of $n$-mode, number-conserving bosonic Hamiltonians acting on Fock spaces $\cH_n$ and
		$\{H_n \in \mf{su}(n)\}_n$ be as in Eq.~\eqref{eq:bosonicH}. Let $\epsilon > 0$ and $T(n)=\tau^n$ for some $\tau>1$. Consider the subspaces $\mc S_{m(n)}$ with $m(n)=\poly(n)$.
		Then, there exist quantum circuits $\{V_n(t)\}_{n,t}$ that are $(T(n),\epsilon,G(n))$-fast-forwarding the Hamiltonians $\{H_n\}_n$ on the subspaces $\{\cS_{m(n)}\subseteq \cH_n\}_n$, where $G(n)=\cO(n^2\log (T(n)))$, and thus $G(n)={\rm poly}(n)$. In particular, if $l(n)=\Omega(n T(n))$, we obtain exponential fast-forwarding.
	\end{theorem}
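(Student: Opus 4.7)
The plan is to follow the strategy of the fermionic Thm.~\ref{thm:fermionicFF}, applying Thm.~\ref{thm:diagonalFF} with $\mf g_n = \mf{su}(n)$, but on the invariant subspace $\mc S_{m(n)}$ where $H_n$ acts as a bounded operator. First I would verify the ingredients of Thm.~\ref{thm:diagonalFF} in the bosonic setting: $\dim\mf{su}(n)=n^2-1=\poly(n)$, the number of positive roots is $|R_+|=\binom{n}{2}<\tfrac12\dim\mf g_n$, and the initial distance to the Cartan subalgebra satisfies $d_{\mf h}(H_n)=\cO(n^2)$ because $|\alpha_{ij}|\le 1$. The Cartan-Weyl generators can be taken as $\tfrac{1}{\sqrt 2}(c_i^\dag c_j^{\;}+c_j^\dag c_i^{\;})$ and $\tfrac{i}{\sqrt 2}(c_i^\dag c_j^{\;}-c_j^\dag c_i^{\;})$ for $i<j$, with the Cartan subalgebra consisting of traceless diagonal number operators; since each $c_i^\dag c_j^{\;}$ has operator norm bounded by $m(n)=\poly(n)$ on $\mc S_{m(n)}$, the polynomial equivalence of Eq.~\eqref{eq:distancebounds} holds with $p(n),q(n)=\poly(n)$ in this representation.

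Next, I would invoke Lemma~\ref{lem:wildberger} with target precision $\varepsilon=\Theta(\epsilon/(T(n)\,p(n)\,n^2))$ so that $\|H_n^{(r)}-H_{n,D}^{(r)}\|\le\epsilon/(2T(n))$ on $\mc S_{m(n)}$. By Eq.~\eqref{eq:rbound} this requires $r=\cO(n^2\log T(n))$ rotations $V^{(j)}=\exp(i(\pi_x X_{\alpha_j}+\pi_y Y_{\alpha_j}))$, each of which is generated by a weight-two bosonic operator and hence counts as $\cO(1)$ elementary gates in the bosonic model after splitting the $\cO(1)$-bounded phases into unit-norm pieces. The classical computation of the angles at each step is polynomial in $\dim\mf g_n$ and polylogarithmic in the precision, hence $\poly(n)$. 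Applying Lemma~\ref{lem:diagonalsimulation} then gives $V(t)$ with the stated accuracy.

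For the diagonal evolution $e^{-itH_{n,D}^{(r)}}$, with $H_{n,D}^{(r)}=\sum_j\gamma_j^{(r)}c_j^\dag c_j^{\;}$ (up to an additive constant), the essential observation is that each $c_j^\dag c_j^{\;}$ has integer spectrum on the Fock space, so $e^{-i\phi\, c_j^\dag c_j^{\;}}$ is $2\pi$-periodic in $\phi$; even though $t\gamma_j^{(r)}$ may be as large as $T(n)\poly(n)$, the phase can be reduced modulo $2\pi$ and implemented as $\cO(1)$ unit-norm bosonic gates per mode, for a total of $\cO(n)$ gates. Summing, $G(n)=2r+\cO(n)=\cO(n^2\log T(n))=\poly(n)$, which gives exponential fast-forwarding whenever $l(n)=\Omega(nT(n))$. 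The main obstacle, absent from the fermionic case, is that bosonic generators are unbounded on the full Fock space: one must track the $m(n)$-dependence of $\|H_n\|$ and of the constants in Eq.~\eqref{eq:distancebounds}, and rely on the fact that this dependence enters $r$ only logarithmically via Eq.~\eqref{eq:rbound}, so that the $\poly(n)$ growth of $m(n)$ is absorbed without spoiling the final complexity.
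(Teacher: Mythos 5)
Your proposal is correct and takes essentially the same route as the paper: invoking Lemma~\ref{lem:wildberger} and Lemma~\ref{lem:diagonalsimulation} with $\mf g_n=\mf{su}(n)$ on the restricted subspace $\mc S_{m(n)}$, obtaining $r=\cO(n^2\log T(n))$, and implementing the diagonal evolution as $\cO(n)$ number-operator rotations. You usefully make explicit two points the paper leaves terse — that the $m(n)$-dependence of $p(n)$ only enters logarithmically through Eq.~\eqref{eq:rbound}, and that the $2\pi$-periodicity of $e^{-i\phi c_j^\dag c_j}$ (from the integer spectrum of the number operator) is what justifies reducing the diagonal phases to $|\theta_j|\le\pi$ — but these are clarifications of the same argument, not a different one.
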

	
	\begin{proof}
		The proof is similar to that of Thm.~\ref{thm:fermionicFF}. By restricting to the subspace $\mc S_{m(n)}$, we may obtain bounds of the form \eqref{eq:distancebounds}, where $p(n)=\frac{m(n)}{\sqrt{2n}}$. Since the initial distance to the Cartan subalgebra is $d_{\mf h}(H_n)=\cO(n^2)$, we apply the Lie-algebra diagonalization method for
		\begin{align}
			r &= \cO\left(\dim\mf{su}(n)\log \left(\frac{T(n)p(n)n^2}{\epsilon}\right)\right)\\
			&= \cO\left(n^2\log (T(n))\right)
		\end{align}
		iterations. As in the fermionic case, the $\mf{su}(2)$ rotations $V^{(j)}$ are elementary gates and the diagonal unitary $e^{-itH_{n,D}^{(r)}}$ can be decomposed as a sequence of $\cO(n)$ rotations $e^{-i\theta_j c_j^\dag c^{\;}_j}$, $|\theta_j|\le \pi$. Therefore, we obtain a total of $G(n)=\cO(n^2\log (T(n)))$ gates.
	\end{proof}

	\section{Fast-forwarding and energy measurements}
	\label{sec:energy}
	We revisit and further develop a connection between fast-forwarding and the time-energy uncertainty principle discussed in Ref.~\cite{AA17}, using the setting described in Sec.~\ref{sec:def}.
	Our result, which also holds for polynomial fast-forwarding and polynomially-precise energy measurements, is roughly as follows.
	Suppose we can simulate a (normalized) Hamiltonian for time $T$ using $G(T)$ elementary gates.
	Then we can measure the eigenenergies with accuracy $ \cO(1/ T)$ and high confidence using $\tilde \cO(G(T))$ elementary gates, where we dropped polylogarithmic factors in $T$.
	Conversely, suppose we can measure the eigenenergies of a Hamiltonian with accuracy $\delta E$ and high confidence using $G(\delta E)$ elementary gates.
	Then, we can simulate the Hamiltonian for time $\cO(1/{\delta E})$ using $\cO(G(\delta E))$ elementary gates.

	To prove this result, we need to provide precise definitions of energy measurements that fit with those in Sec.~\ref{sec:def}.
	
	\begin{definition}[Accuracy of energy measurements]
		An energy measurement has accuracy $\delta E$ and confidence level $\eta >0$ if, on any input eigenstate of a Hamiltonian, we measure an outcome $E'$ that satisfies
		\begin{align}
			\label{eq:accuracyenergy}
			\Pr\left(|E'-E|\le \delta E\right) \ge \eta \; ,
		\end{align}
		where $E$ is the energy of the eigenstate.
	\end{definition}
	
	\begin{definition}[Parameters of energy measurements]
		\label{def:SEEM}
		A sequence of Hamiltonians $\{H_n\}_n$ acting on spaces $\{\cH_n\}_n$ has energy measurement parameters $(\eta,\delta E(n),\xi,G(n))$ if there exist unitary operations $\{U_n\}_n$ and quantum circuits $\{V_n\}_n$ that satisfy:
		\begin{enumerate}
			\item For any $n$ and any eigenstate $\ket{\psi_E}$ of $H_n$ with energy $E$, the unitary $U_n$ acts as
			\begin{align}
				U_n\ket{\psi_E} \otimes \ket 0_\cA = \ket{\psi_E} \otimes \sum_{E'}\alpha_{E'} \ket{E'}_\cA \;,
			\end{align}
			and
			\begin{align}
				\sum_{E': |E-E'|\le\delta E(n)} |\alpha_{E'}|^2 \ge \eta \;.
			\end{align}
			Therefore, a projective measurement of the ancillary system outputs an estimate $E'$ that satisfies Eq.~\eqref{eq:accuracyenergy}.
			\item For any $n$, the quantum circuit $V_n$ can be implemented with at most $G(n)$ elementary gates and satisfies
			\begin{align}
				\left\|(U_n-V_n)\ket\psi \otimes \ket 0_\cA \right\|\le\xi \;,
			\end{align}
			for all $\ket\psi \in \cH_n$.
		\end{enumerate}
		\sloppy Similarly, 
		the sequence of Hamiltonians $\{H_n\}_n$ has energy measurement parameters $(\eta,\delta E(n),\xi,G(n))$ on invariant subspaces $\{\cS_n\subseteq \cH_n\}_n$ of the Hamiltonian if the two conditions hold for any eigenstate $\ket{\psi_E}\in \cS_n$ and all $\ket\psi\in \cS_n$, respectively.
	\end{definition}
	
	In the first condition, the ancillary states $\ket{E'}_\cA$ encode the energy in, for example, binary form. 
	Note that these definitions are essentially the same as in Ref.~\cite{AA17} with the addition of $G(n)$ for the gate count, which is necessary in our setting because we are considering a more general form of efficient energy measurements.
	
	Now we state the equivalence between fast-forwarding and energy measurements precisely. For simplicity, we restrict to qubit systems, although generalizations can be made to other models of computation.
	
	\begin{theorem}[Fast-forwarding and precise energy measurements]
		\label{thm:FF&SEEM}
		Let $\{H_n\}_n$ be a sequence of (qubit) Hamiltonians with norms $\|H_n\|\le 1$. Then the following statements hold:
		\begin{enumerate}
			\item If $\{H_n\}_n$ can be simulated with parameters $(T(n),\epsilon,G(n))$, then for any constant $\eta<1$, $\{H_n\}_n$ has energy measurement parameters $(\eta,\cO(\frac{1}{T(n)}),\cO(\epsilon\log T(n)),\cO((\log T(n))^2+G(n)\log T(n)))$.
			\item Conversely, if $\{H_n\}_n$ has energy measurement parameters $(\eta,\delta E(n),\xi,G(n))$, then for any constant $\alpha>0$, $\{H_n\}_n$ can be simulated with parameters $(\cO(\frac{1}{\delta E(n)}),\alpha\eta+2(1-\eta+\xi),\cO(G(n)))$.
		\end{enumerate}
		The same statements hold on subspaces $\mc S_n\subseteq \left(\bb C^2\right)^{\otimes n}$ if $\left\|\left.H_n\right|_{\mc S_n}\right\|\le 1$.
	\end{theorem}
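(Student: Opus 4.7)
The plan is to prove both directions via the standard duality between Hamiltonian simulation and quantum phase estimation (QPE), keeping careful track of gate counts, precision, and error accumulation. The forward direction (1) turns a simulation circuit into an energy-measuring unitary by packaging it inside QPE, and the backward direction (2) turns an energy-measuring unitary into a simulation circuit via the measure--rotate--uncompute template already used in Fig.~\ref{fig:QPE}(a).

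For direction 1, I would fix a base time $t_0 = \Theta(1)$ small enough that $t_0 \|H_n\| < 1/2$ (available since $\|H_n\|\le 1$), so the eigenphases of $W := e^{-i t_0 H_n}$ sit unambiguously on the unit circle. Running QPE with $m = \Theta(\log T(n))$ bits of precision requires controlled copies of $W^{2^k} = e^{-i 2^k t_0 H_n}$ for $k = 0,1,\dots,m-1$; each has time at most $T(n)$ and can thus be approximated by $V_n(2^k t_0)$ using $G(n)$ gates with error at most $\epsilon$. Telescoping the $m$ substitutions by triangle inequality gives an aggregate simulation error $\cO(\epsilon \log T(n))$, and the inverse QFT on the $m$-bit register costs $\cO((\log T(n))^2)$ additional gates. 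Standard QPE returns the closest $m$-bit estimate of the eigenphase (i.e. energy estimate of precision $\cO(1/T(n))$) with probability at least $4/\pi^2$; to lift this to any constant $\eta<1$, I would either add $\cO(1)$ extra precision bits and round or run $\cO(1)$ independent copies and take the median. The resulting unitary has precisely the form demanded by Def.~\ref{def:SEEM} with the claimed parameters.

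For direction 2, I would build $V(t) = V_n^\dag\,\Phi(t)\, V_n$ with the conditional-phase operator $\Phi(t) = \sum_{E'} e^{-i t E'} \ketbra{E'}_{\cA}$, implementable with $\cO(1)$ single-qubit rotations per ancilla bit and hence absorbed in $\cO(G(n))$. Choose $t \le C/\delta E(n)$ with $C = C(\alpha)$. On an eigenstate $\ket{\psi_E}$, decompose the post-$U_n$ ancilla state $\sum_{E'} \alpha_{E'}\ket{E'}$ into a ``good'' piece supported on $|E'-E|\le \delta E(n)$, of weight at least $\eta$, and a ``bad'' piece of weight at most $1-\eta$. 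Bounding $|e^{-itE'}-e^{-itE}|$ by $t|E'-E| \le C$ on the good piece and by $2$ on the bad piece shows that $\Phi(t)\ket{f_E}$ deviates from $e^{-itE}\ket{f_E}$ by a vector whose norm contributes the ``$\alpha\eta$'' and ``$2(1-\eta)$'' pieces once $C$ is calibrated to $\alpha$; the two substitutions of $V_n$ for $U_n$ add $2\xi$. For a general input $\ket\psi = \sum_E c_E\ket{\psi_E}$, the per-eigenstate error vectors are of the form $U_n^\dag(\ket{\psi_E}\otimes \delta_E)$ with orthogonal first registers, so by a Pythagorean sum $\|\sum_E c_E \ket{\psi_E}\otimes \delta_E\|^2 = \sum_E |c_E|^2 \|\delta_E\|^2 \le \max_E \|\delta_E\|^2$, and the eigenstate bound transfers without loss to the uniform bound required by Def.~\ref{def:simulparamsubsp}.

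The main obstacle is the bookkeeping of several overlapping approximations in direction 2, specifically promoting the per-input bound $\|(V_n - U_n)\ket\psi\otimes\ket 0\|\le \xi$ of Def.~\ref{def:SEEM} into a bound applicable to $V_n^\dag$ acting on the \emph{non-}$\ket 0$-ancilla state produced by $\Phi(t) V_n\ket\psi\otimes\ket 0$; I would handle this by working with the natural operator-norm extension of $V_n$ coming from its gate-by-gate construction, so that $\|V_n - U_n\|_{\rm op}$ inherits the same $\xi$ bound and $V_n^\dag$ enjoys the same approximation quality as $V_n$. A secondary and routine difficulty is fitting the confidence boosting, QFT, and error accumulation in direction 1 into the exact packaging $\cO((\log T(n))^2 + G(n)\log T(n))$ and $\cO(\epsilon \log T(n))$ stated in the theorem; this is straightforward once a specific boosting scheme (e.g.\ the Kitaev--style median) is fixed.
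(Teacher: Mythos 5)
Your proposal takes essentially the same route as the paper's proof in Appendix~B: direction~1 is handled by running quantum phase estimation on controlled copies of the fast-forwarding circuit with $\cO(\log T)$ bits of precision (the paper shifts $H\mapsto H+\one$ and applies the Nielsen--Chuang bound $\Pr(|m-\lfloor x\rfloor|>c)\le\frac{1}{2(c-1)}$, while you normalize the time step and use extra bits or median-of-means, but these are equivalent ways of reaching constant confidence $\eta$), and direction~2 uses the same measure--rotate--uncompute circuit $V_n^\dagger\,\Phi(t)\,V_n$ with the identical good/bad decomposition yielding $\eta\,\delta E\,T + 2(1-\eta)$. One place where you are actually \emph{more} careful than the paper is the passage from eigenstates to arbitrary $\ket\psi\in\cS_n$ in direction~2: the paper just asserts that ``the same way of splitting the sum into two parts will give the same bound,'' whereas you supply the explicit Pythagorean argument $\|\sum_E c_E\ket{\psi_E}\otimes\delta_E\|^2=\sum_E|c_E|^2\|\delta_E\|^2\le\max_E\|\delta_E\|^2$, which is the right justification.

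On the one subtlety you flag as the ``main obstacle'' --- bounding the effect of replacing $U_n$ by $V_n$ on the inverse side of the circuit, where the ancilla is no longer in $\ket 0$ --- your proposed resolution is not actually warranted by Def.~\ref{def:SEEM}. That definition only requires $\|(U_n-V_n)\ket\psi\otimes\ket 0_\cA\|\le\xi$ on the $\ket 0$-ancilla slice, which does \emph{not} imply $\|U_n-V_n\|_{\mathrm{op}}\le\xi$, and asserting that the operator-norm bound ``comes from the gate-by-gate construction'' is a non sequitur: $V_n$ is some circuit approximating $U_n$, and nothing in the definition controls how it behaves off the slice. To be fair, the paper's own proof has exactly the same gap (it simply states ``Replacing $U$ by $\tilde U$ adds an additional $2\xi$ error'' with no further argument), so you have not made matters worse; but if you want to make this step airtight you should either strengthen the definition of energy-measurement parameters to a genuine operator-norm bound, or observe that the circuits produced in direction~1 do in fact satisfy the stronger bound because QPE is uniformly close to the ideal unitary, not merely on the $\ket 0$ slice.
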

	\begin{proof}
		The proof of the theorem uses the same technique as in Ref.~\cite{AA17} and is reproduced in Appendix~\ref{appx:energy}. To prove the first statement, we use quantum phase estimation to measure the energy. This approach requires implementing the unitaries $e^{-i t H_n}$
		controlled on an ancillary register of $\cO(\log(T(n))$ qubits and for times $t \le T(n)$. To prove the second statement, we measure the energy $E$ of an eigenstate $\ket{\psi_E}$ on an ancillary register and implement the phase $e^{-itE}$ on that register using standard techniques.
	\end{proof}
	
	According to Thm.~\ref{thm:FF&SEEM}, we see that various types of fast-forwarding (e.g., polynomial) result in various types of precise energy measurements that are not necessarily exponential.

	
	\section{Conclusions and outlook}
	\label{sec:ConclusionsAndOutlook}

	We studied the problem of fast-forwarding quantum evolution in various physical settings and under fairly general conditions, going beyond previous studies~\cite{AA17}. We provided a definition of fast-forwarding that considers the model of quantum computation, the classes of Hamiltonians, and properties of the initial states, and used it to demonstrate exponential and polynomial fast-forwarding in quantum systems of different particle statistics. These include spin systems that have a permutation-invariant property, frustration-free spin systems, and quadratic fermionic and bosonic systems. Our techniques could be used to demonstrate fast-forwarding of Hamiltonians not discussed in our work.
	For example, although we focused on quadratic fermionic and bosonic Hamiltonians as an application of Lie-algebra diagonalization in Sec.~\ref{subsec:LieAlgebraDiagonalization}, that method can be used for any Hamiltonian that is an element of a Lie algebra of small dimension;
	a recent result in Ref.~\cite{KSFDK21} considers a related problem.
	
	Many open questions remain. First, we do not provide a characterization of all Hamiltonians that can be fast-forwarded but rather some examples. It would be interesting to understand this characterization better, as it has important consequences in quantum simulation and quantum complexity. Second, some examples of exponential fast-forwarding, such as those obtained via Lie-algebra diagonalization, are related to Hamiltonians that admit an efficient classical solution, for which spectral properties and related quantities may be computed classically efficiently~\cite{baxter2016exactly, korepin1997quantum, sutherland2004beautiful,somma2006efficient, murg2012algebraic}.
	Developing the connection between exponential fast-forwarding and efficient solvability further will be important for understanding the advantages of exponential fast-forwarding. Third, the polynomial fast-forwarding of Sec.~\ref{subsec:FrustrationFree} applies to frustration-free spin Hamiltonians, but it might also be applied to the simulation of more general quantum field theories or other condensed matter systems, as we are often interested in the low-energy dynamics of such Hamiltonians. 
	Fourth, our assumptions on the no-fast-forwarding line
	are based on no-fast-forwarding theorems in the literature, and extending these theorems to more general quantum systems would allow us to understand the level of fast-forwarding better when using our simulation methods.
	We expect that these and other questions will form the basis of further studies.

	
	\section{Acknowledgements}
	\label{sec:acknowledgements}
	
	We acknowledge support from the U.S. Department of Energy, Office of Science, High-Energy Physics and Office of Advanced Scientific Computing Research, under the Accelerated Research in Quantum Computing (ARQC) program. 
	This material is also based upon work supported by the U.S. Department of Energy, Office of Science, National Quantum Information Science Research Centers, Quantum Science Center.
	Los Alamos National Laboratory is managed by Triad National Security, LLC, for the National Nuclear Security Administration of the U.S. Department of Energy under Contract No. 89233218CNA000001.

	\newpage
	
	\bibliographystyle{unsrtnat}
	\bibliography{FastForward}

\begin{thebibliography}{54}
\providecommand{\natexlab}[1]{#1}
\providecommand{\url}[1]{\texttt{#1}}
\expandafter\ifx\csname urlstyle\endcsname\relax
  \providecommand{\doi}[1]{doi: #1}\else
  \providecommand{\doi}{doi: \begingroup \urlstyle{rm}\Url}\fi

\bibitem[Feynman(1982)]{Fey82}
Richard~P. Feynman.
\newblock Simulating physics with computers.
\newblock \emph{International Journal of Theoretical Physics}, 21\penalty0
  (6--7):\penalty0 467--488, 1982.
\newblock \doi{10.1007/BF02650179}.

\bibitem[Lloyd(1996)]{Llo96}
Seth Lloyd.
\newblock Universal quantum simulators.
\newblock \emph{Science}, 273\penalty0 (5278):\penalty0 1073--1078, 1996.
\newblock \doi{10.1126/science.273.5278.1073}.

\bibitem[Somma et~al.(2002)Somma, Ortiz, Gubernatis, Knill, and
  Laflamme]{SOGKL02}
R.~D. Somma, G.~Ortiz, J.~E. Gubernatis, E.~Knill, and R.~Laflamme.
\newblock Simulating physical phenomena by quantum networks.
\newblock \emph{Phys. Rev. A}, 65:\penalty0 042323, 2002.
\newblock \doi{10.1103/PhysRevA.65.042323}.

\bibitem[Berry et~al.(2007)Berry, Ahokas, Cleve, and Sanders]{BAC07}
D.W. Berry, G.~Ahokas, R.~Cleve, and B.C. Sanders.
\newblock Efficient quantum algorithms for simulating sparse {H}amiltonians.
\newblock \emph{Comm. Math. Phys.}, 270:\penalty0 359, 2007.
\newblock \doi{10.1007/s00220-006-0150-x}.

\bibitem[Wiebe et~al.(2010)Wiebe, Berry, Hoyer, and Sanders]{WBH+10}
N.~Wiebe, D.~Berry, P.~Hoyer, and B.C. Sanders.
\newblock Higher order decompositions of ordered operator exponentials.
\newblock \emph{J. Phys. A: Math. Theor.}, 43:\penalty0 065203, 2010.
\newblock \doi{10.1088/1751-8113/43/6/065203}.

\bibitem[Childs and Wiebe(2012)]{CW12}
A.~M. Childs and N.~Wiebe.
\newblock Hamiltonian simulation using linear combinations of unitary
  operations.
\newblock \emph{Quantum Information and Computation}, 12:\penalty0 901--924,
  2012.
\newblock \doi{10.26421/QIC12.11-12-1}.

\bibitem[Berry and Childs(2012)]{BC12}
Dominic~W. Berry and Andrew~M. Childs.
\newblock Black-box {H}amiltonian simulation and unitary implementation.
\newblock \emph{Quantum Information and Computation}, 12\penalty0
  (1--2):\penalty0 29--62, 2012.
\newblock \doi{10.26421/QIC12.1-2-4}.
\newblock arXiv:0910.4157.

\bibitem[Berry et~al.(2014)Berry, Childs, Cleve, Kothari, and Somma]{BCC+14}
Dominic~W. Berry, Andrew~M. Childs, Richard Cleve, Robin Kothari, and
  Rolando~D. Somma.
\newblock Exponential improvement in precision for simulating sparse
  {H}amiltonians.
\newblock In \emph{Proc. of the 46th ACM Symposium on Theory of Computing},
  pages 283--292, 2014.
\newblock \doi{10.1145/2591796.2591854}.

\bibitem[Berry et~al.(2015{\natexlab{a}})Berry, Childs, Cleve, Kothari, and
  Somma]{BCC+15}
Dominic~W. Berry, Andrew~M. Childs, Richard Cleve, Robin Kothari, and
  Rolando~D. Somma.
\newblock Simulating {H}amiltonian dynamics with a truncated {T}aylor series.
\newblock \emph{Phys. Rev. Lett.}, 114:\penalty0 090502, 2015{\natexlab{a}}.
\newblock \doi{10.1103/PhysRevLett.114.090502}.

\bibitem[Berry et~al.(2015{\natexlab{b}})Berry, Childs, and Kothari]{BCK15}
Dominic~W. Berry, Andrew~M. Childs, and Robin Kothari.
\newblock Hamiltonian simulation with nearly optimal dependence on all
  parameters.
\newblock In \emph{Proceedings of the 56th Symposium on Foundations of Computer
  Science}, 2015{\natexlab{b}}.
\newblock \doi{10.1109/FOCS.2015.54}.

\bibitem[Low and Chuang(2017)]{LC17}
G.H. Low and I.L. Chuang.
\newblock Optimal {H}amiltonian simulation by quantum signal processing.
\newblock \emph{Phys. Rev. Lett.}, 118:\penalty0 010501, 2017.
\newblock \doi{10.1103/PhysRevLett.118.010501}.

\bibitem[Low and Chuang(2019{\natexlab{a}})]{LC19}
Guang~Hao Low and Isaac~L Chuang.
\newblock Hamiltonian simulation by qubitization.
\newblock \emph{Quantum}, 3:\penalty0 163, 2019{\natexlab{a}}.
\newblock \doi{10.22331/q-2019-07-12-163}.

\bibitem[Campbell(2019)]{campbell2019random}
Earl Campbell.
\newblock Random compiler for fast {H}amiltonian simulation.
\newblock \emph{Physical review letters}, 123\penalty0 (7):\penalty0 070503,
  2019.
\newblock \doi{10.1103/PhysRevLett.123.070503}.

\bibitem[Berry et~al.(2020)Berry, Childs, Su, Wang, and Wiebe]{berry2020time}
Dominic~W Berry, Andrew~M Childs, Yuan Su, Xin Wang, and Nathan Wiebe.
\newblock Time-dependent {H}amiltonian simulation with $ {L}^1$-norm scaling.
\newblock \emph{Quantum}, 4:\penalty0 254, 2020.
\newblock \doi{10.22331/q-2020-04-20-254}.

\bibitem[Atia and Aharonov(2017)]{AA17}
Y.~Atia and D.~Aharonov.
\newblock Fast-forwarding of {H}amiltonians and exponentially precise
  measurements.
\newblock \emph{Nat. Comm.}, 8:\penalty0 1572, 2017.
\newblock \doi{10.1038/s41467-017-01637-7}.

\bibitem[Haah et~al.(2018)Haah, Hastings, Kothari, and Low]{HHK+18}
Jeongwan Haah, Matthew~B. Hastings, Robin Kothari, and Guang~Hao Low.
\newblock Quantum algorithm for simulating real time evolution of lattice
  {H}amiltonians.
\newblock In \emph{Proceedings of the IEEE 59th Symposium on Foundations of
  Computer Science}, page 350, 2018.
\newblock \doi{10.1109/FOCS.2018.00041}.

\bibitem[Harrow et~al.(2009)Harrow, Hassidim, and Lloyd]{HHL09}
Aram~W. Harrow, Avinatan Hassidim, and Seth Lloyd.
\newblock Quantum algorithm for linear systems of equations.
\newblock \emph{Phys. Rev. Lett.}, 103\penalty0 (15):\penalty0 150502, 2009.
\newblock \doi{10.1103/PhysRevLett.103.150502}.

\bibitem[Childs et~al.(2017)Childs, Kothari, and Somma]{CKS17}
Andrew~M. Childs, Robin Kothari, and Rolando~D. Somma.
\newblock Quantum linear systems algorithm with exponentially improved
  dependence on precision.
\newblock \emph{SIAM J. Comp.}, 46:\penalty0 1920, 2017.
\newblock \doi{10.1137/16M1087072}.

\bibitem[Verstraete et~al.(2009)Verstraete, Cirac, and Latorre]{VCL09}
Frank Verstraete, J.~Ignacio Cirac, and Jos\'e~I. Latorre.
\newblock Quantum circuits for strongly correlated quantum systems.
\newblock \emph{Phys. Rev. A}, 79:\penalty0 032316, 2009.
\newblock \doi{10.1103/PhysRevA.79.032316}.

\bibitem[Kivlichan et~al.(2018)Kivlichan, McClean, Wiebe, Gidney, Aspuru-Guzik,
  Chan, and Babbush]{Kivlichan18}
Ian~D. Kivlichan, Jarrod McClean, Nathan Wiebe, Craig Gidney, Al\'an
  Aspuru-Guzik, Garnet Kin-Lic Chan, and Ryan Babbush.
\newblock Quantum simulation of electronic structure with linear depth and
  connectivity.
\newblock \emph{Phys. Rev. Lett.}, 120:\penalty0 110501, 2018.
\newblock \doi{10.1103/PhysRevLett.120.110501}.

\bibitem[Jiang et~al.(2018)Jiang, Sung, Kechedzhi, Smelyanskiy, and
  Boixo]{Jiang18}
Zhang Jiang, Kevin~J. Sung, Kostyantyn Kechedzhi, Vadim~N. Smelyanskiy, and
  Sergio Boixo.
\newblock Quantum algorithms to simulate many-body physics of correlated
  fermions.
\newblock \emph{Phys. Rev. Applied}, 9:\penalty0 044036, 2018.
\newblock \doi{10.1103/PhysRevApplied.9.044036}.

\bibitem[{\c{S}}ahino{\u{g}}lu and Somma(2021)]{csahinouglu2021hamiltonian}
Burak {\c{S}}ahino{\u{g}}lu and Rolando~D Somma.
\newblock Hamiltonian simulation in the low-energy subspace.
\newblock \emph{npj Quantum Information}, 7\penalty0 (1):\penalty0 1--5, 2021.
\newblock \doi{10.1038/s41534-021-00451-w}.

\bibitem[Tran et~al.(2021)Tran, Su, Carney, and Taylor]{TSCT21}
Minh~C. Tran, Yuan Su, Daniel Carney, and Jacob~M. Taylor.
\newblock Faster digital quantum simulation by symmetry protection.
\newblock \emph{PRX Quantum}, 2:\penalty0 010323, 2021.
\newblock \doi{10.1103/PRXQuantum.2.010323}.

\bibitem[Su et~al.(2021)Su, Huang, and Campbell]{su2021nearly}
Yuan Su, Hsin-Yuan Huang, and Earl~T. Campbell.
\newblock Nearly tight {T}rotterization of interacting electrons.
\newblock \emph{{Quantum}}, 5:\penalty0 495, 2021.
\newblock \doi{10.22331/q-2021-07-05-495}.

\bibitem[Bacon et~al.(2006)Bacon, Chuang, and Harrow]{BCH06}
Dave Bacon, Isaac~L. Chuang, and Aram~W. Harrow.
\newblock Efficient quantum circuits for {S}chur and {C}lebsch-{G}ordan
  transforms.
\newblock \emph{Physical Review Letters}, 97\penalty0 (17):\penalty0 170502,
  2006.
\newblock \doi{10.1103/PhysRevLett.97.170502}.

\bibitem[Bacon et~al.(2005)Bacon, Chuang, and Harrow]{BCH05}
Dave Bacon, Isaac~L. Chuang, and Aram~W. Harrow.
\newblock The quantum {S}chur transform: I. {E}fficient qudit circuits.
\newblock \emph{arXiv:quant-ph/0601001}, 2005.

\bibitem[Perez-Garcia et~al.(2008)Perez-Garcia, Verstraete, Cirac, and
  Wolf]{PVCW08}
David Perez-Garcia, Frank Verstraete, J.~Ignacio Cirac, and Michael~M. Wolf.
\newblock {PEPS} as unique ground states of local {H}amiltonians.
\newblock \emph{Quantum Inf. Comp.}, 8:\penalty0 0650, 2008.
\newblock \doi{10.26421/QIC8.6-7-6}.

\bibitem[Bravyi and Terhal(2009)]{BT09}
Sergey Bravyi and Barbara Terhal.
\newblock Complexity of stoquastic frustration-free {H}amiltonians.
\newblock \emph{SIAM J. Comp.}, 39\penalty0 (4):\penalty0 1462, 2009.
\newblock \doi{10.1137/08072689X}.

\bibitem[Somma and Boixo(2013)]{SB13}
Rolando~D. Somma and Sergio Boixo.
\newblock Spectral gap amplification.
\newblock \emph{SIAM J. Comp}, 42:\penalty0 593--610, 2013.
\newblock \doi{10.1137/120871997}.

\bibitem[Chowdhury and Somma(2017)]{CS16}
Anirban~N. Chowdhury and Rolando~D. Somma.
\newblock Quantum algorithms for {G}ibbs sampling and hitting-time estimation.
\newblock \emph{Quant. Inf. Comp.}, 17:\penalty0 0041, 2017.
\newblock \doi{10.26421/QIC17.1-2-3}.

\bibitem[Kitaev(1996)]{Kit96}
Alexei~Y. Kitaev.
\newblock Quantum measurements and the {A}belian stabilizer problem.
\newblock \emph{Electron. Colloquium Comput. Complex.}, \penalty0 (3), 1996.

\bibitem[Cleve et~al.(1998)Cleve, Ekert, Macchiavello, and Mosca]{CEMM98}
R.~Cleve, A.~Ekert, C.~Macchiavello, and M.~Mosca.
\newblock Quantum algorithms revisited.
\newblock \emph{Proc. R. Soc. Lond. A}, 454:\penalty0 339--354, 1998.
\newblock \doi{10.1098/rspa.1998.0164}.

\bibitem[Lipkin et~al.(1965)Lipkin, Meshkov, and Glick]{lipkin1965validity}
Harry~J Lipkin, N~Meshkov, and AJ~Glick.
\newblock Validity of many-body approximation methods for a solvable model:
  ({I}). {E}xact solutions and perturbation theory.
\newblock \emph{Nuclear Physics}, 62\penalty0 (2):\penalty0 188--198, 1965.
\newblock \doi{10.1016/0029-5582(65)90862-X}.

\bibitem[Affleck et~al.(1988)Affleck, Kennedy, Lieb, and Tasaki]{AKLT88}
I.~Affleck, T.~Kennedy, E.~H. Lieb, and H.~Tasaki.
\newblock Valence bond ground states in isotropic quantum antiferromagnets.
\newblock \emph{Comm. Math. Phys.}, 115:\penalty0 477, 1988.
\newblock \doi{10.1007/978-3-662-06390-3_19}.

\bibitem[Jordan et~al.(2012)Jordan, Lee, and Preskill]{jordan2012quantum}
Stephen~P Jordan, Keith~SM Lee, and John Preskill.
\newblock Quantum algorithms for quantum field theories.
\newblock \emph{Science}, 336\penalty0 (6085):\penalty0 1130--1133, 2012.
\newblock \doi{10.1126/science.1217069}.

\bibitem[Zanardi(1999)]{zanardi1999symmetrizing}
Paolo Zanardi.
\newblock Symmetrizing evolutions.
\newblock \emph{Physics Letters A}, 258\penalty0 (2-3):\penalty0 77--82, 1999.
\newblock \doi{10.1016/S0375-9601(99)00365-5}.

\bibitem[Cook et~al.(2020)Cook, Eidenbenz, and B{\"a}rtschi]{cook2020quantum}
Jeremy Cook, Stephan Eidenbenz, and Andreas B{\"a}rtschi.
\newblock The quantum alternating operator ansatz on maximum k-vertex cover.
\newblock In \emph{2020 IEEE International Conference on Quantum Computing and
  Engineering (QCE)}, pages 83--92. IEEE, 2020.
\newblock \doi{10.1109/QCE49297.2020.00021}.

\bibitem[Nielsen and Chuang(2001)]{NC01}
M.~Nielsen and I.~Chuang.
\newblock \emph{Quantum Computation and Quantum Information}.
\newblock Cambridge University Press, Cambridge, 2001.
\newblock \doi{10.1017/CBO9780511976667}.

\bibitem[Low and Chuang(2019{\natexlab{b}})]{low2019hamiltonian}
Guang~Hao Low and Isaac~L Chuang.
\newblock Hamiltonian simulation by qubitization.
\newblock \emph{Quantum}, 3:\penalty0 163, 2019{\natexlab{b}}.
\newblock \doi{10.22331/q-2019-07-12-163}.

\bibitem[Childs(2010)]{childs2010relationship}
Andrew~M Childs.
\newblock On the relationship between continuous-and discrete-time quantum
  walk.
\newblock \emph{Communications in Mathematical Physics}, 294\penalty0
  (2):\penalty0 581--603, 2010.
\newblock \doi{10.1007/s00220-009-0930-1}.

\bibitem[Knill et~al.(2007)Knill, Ortiz, and Somma]{KOS07}
Emanuel Knill, Gerardo Ortiz, and Rolando~D. Somma.
\newblock Optimal quantum measurements of expectation values of observables.
\newblock \emph{Phys. Rev. A}, 75:\penalty0 012328, 2007.
\newblock \doi{10.1103/PhysRevA.75.012328}.

\bibitem[Reed and Simon(1978)]{reed1978methods}
Michael Reed and Barry Simon.
\newblock \emph{Methods of modern mathematical physics}.
\newblock Academic press, 1978.

\bibitem[Somma et~al.(2003)Somma, Ortiz, Knill, and Gubernatis]{SOKG03}
R.~D. Somma, G.~Ortiz, E.~Knill, and J.~E. Gubernatis.
\newblock Quantum simulations of physics problems.
\newblock \emph{Int. J. Quant. Inf.}, 1:\penalty0 189, 2003.
\newblock \doi{10.1142/S0219749903000140}.

\bibitem[Cirstoiu et~al.(2020)Cirstoiu, Holmes, Iosue, Cincio, Coles, and
  Sornborger]{cirstoiu2020variational}
Cristina Cirstoiu, Zoe Holmes, Joseph Iosue, Lukasz Cincio, Patrick~J Coles,
  and Andrew Sornborger.
\newblock Variational fast forwarding for quantum simulation beyond the
  coherence time.
\newblock \emph{npj Quantum Information}, 6\penalty0 (1):\penalty0 1--10, 2020.
\newblock \doi{10.1038/s41534-020-00302-0}.

\bibitem[K\"okc\"u et~al.(2021)K\"okc\"u, Steckmann, Freericks, Dumitrescu, and
  Kemper]{KSFDK21}
Efekan K\"okc\"u, Thomas Steckmann, J.~K. Freericks, Eugene~F. Dumitrescu, and
  Alexander~F. Kemper.
\newblock Fixed depth {H}amiltonian simulation via {C}artan decomposition.
\newblock \emph{arXiv:2104.00728 [cond-mat, physics:quant-ph]}, 2021.

\bibitem[Wildberger(1993)]{Wildberger93}
N.~J. Wildberger.
\newblock Diagonalization in compact lie algebras and a new proof of a theorem
  of kostant.
\newblock \emph{Proceedings of the American Mathematical Society}, 119\penalty0
  (2):\penalty0 649--655, 1993.
\newblock \doi{10.1090/S0002-9939-1993-1151817-6}.

\bibitem[Somma(2005)]{Som05}
Rolando~D Somma.
\newblock Quantum computation, complexity, and many-body physics.
\newblock \emph{arXiv quant-ph/0512209}, 2005.

\bibitem[Somma(2019)]{Som19}
Rolando~D Somma.
\newblock Unitary circuit synthesis for tomography of generalized coherent
  states.
\newblock \emph{Journal of Mathematical Physics}, 60\penalty0 (11):\penalty0
  112202, 2019.
\newblock \doi{10.1063/1.5121549}.

\bibitem[Gilmore and Feng(1983)]{GilFen83}
Robert Gilmore and Da~Hsuan Feng.
\newblock Coherent states for bosons and fermions: A tutorial.
\newblock \emph{Progress in Particle and Nuclear Physics}, 9:\penalty0 479,
  1983.
\newblock \doi{10.1016/0146-6410(83)90028-5}.

\bibitem[Baxter(2016)]{baxter2016exactly}
Rodney~J Baxter.
\newblock \emph{Exactly solved models in statistical mechanics}.
\newblock Elsevier, 2016.
\newblock \doi{10.1142/9789814415255_0002}.

\bibitem[Korepin et~al.(1997)Korepin, Bogoliubov, and
  Izergin]{korepin1997quantum}
Vladimir~E Korepin, Nicholay~M Bogoliubov, and Anatoli~G Izergin.
\newblock \emph{Quantum inverse scattering method and correlation functions},
  volume~3.
\newblock Cambridge university press, 1997.
\newblock \doi{10.1007/BF02107233}.

\bibitem[Sutherland(2004)]{sutherland2004beautiful}
Bill Sutherland.
\newblock \emph{Beautiful models: 70 years of exactly solved quantum many-body
  problems}.
\newblock World Scientific, 2004.
\newblock \doi{10.1142/5552}.

\bibitem[Somma et~al.(2006)Somma, Barnum, Ortiz, and Knill]{somma2006efficient}
Rolando Somma, Howard Barnum, Gerardo Ortiz, and Emanuel Knill.
\newblock Efficient solvability of {H}amiltonians and limits on the power of
  some quantum computational models.
\newblock \emph{Physical review letters}, 97\penalty0 (19):\penalty0 190501,
  2006.
\newblock \doi{10.1103/PhysRevLett.97.190501}.

\bibitem[Murg et~al.(2012)Murg, Korepin, and Verstraete]{murg2012algebraic}
Valentin Murg, Vladimir~E Korepin, and Frank Verstraete.
\newblock Algebraic {B}ethe ansatz and tensor networks.
\newblock \emph{Physical Review B}, 86\penalty0 (4):\penalty0 045125, 2012.
\newblock \doi{10.1103/PhysRevB.86.045125}.

\end{thebibliography}

	\newpage
	
	\appendix
	
	\section{Complexity of block diagonalizing permutation-invariant qubit Hamiltonians}
	\label{appx:schur}
	
	We give a circuit description of the Schur transform $U_{\rm Sch}$ for qubit (spin-$1/2$) systems, following Ref.~\cite{BCH06}. Although it is not the most efficient implementation in terms of gate complexity, it has the advantage of being explicit (without requiring a separate ancillary register for classical computation), and the unary encoding for the $\ket J$ and $\ket q$ registers is suitable for simulation of permutation-invariant qubit Hamiltonians in Thm.~\ref{thm:su2}.
	
	\begin{lemma}\label{lem:schur}
		The Schur transform $U_{\rm Sch}$ for $n$-qubit systems can be implemented with $\cO(n^3)$ two-qubit gates.
	\end{lemma}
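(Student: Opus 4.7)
The plan is to construct $U_{\rm Sch}$ iteratively, processing one qubit at a time via controlled Clebsch--Gordan (CG) transforms, exactly as in the Bacon--Chuang--Harrow scheme. Concretely, I would introduce three registers: a $J$-register (in unary of size $\cO(n)$, flagging the current total angular momentum $J_k\in\{0,1/2,\ldots,k/2\}$ after absorbing $k$ qubits), an $M$-register (in unary of size $\cO(n)$, flagging the $z$-projection $M_k$), and a path register $p$ consisting of one qubit per step that records whether $J_k=J_{k-1}+1/2$ or $J_{k-1}-1/2$. The state after all steps, read in these registers, encodes $\ket{J}_{\cA'_1}\ket{q}_{\cH'}\ket{p}_{\cA'_2}$ of Thm.~\ref{thm:su2}, with $q$ derivable from $M$ via a fixed relabeling and $J=J_n$.

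Next I would give an explicit description of the $k$-th step $W_k$. On input $\ket{J_{k-1}}\ket{M_{k-1}}\ket{x_k}$, the CG decomposition $J_{k-1}\otimes \tfrac12 \cong (J_{k-1}+\tfrac12)\oplus(J_{k-1}-\tfrac12)$ gives a real $2\times 2$ rotation acting on the qubit-plus-branch space, whose angle depends only on $(J_{k-1},M_{k-1})$ through the closed-form CG coefficients. In the unary encoding, conditioning on a specific value of $J_{k-1}$ costs one single-qubit control, and conditioning on a specific $M_{k-1}$ costs one more; therefore one matrix element of $W_k$ is implemented by $\cO(1)$ two-qubit gates. I would then apply these controlled $2\times 2$ rotations sequentially over all $\cO(k^2)$ allowed pairs $(J_{k-1},M_{k-1})$, followed by controlled increment/decrement updates of the unary $J$- and $M$-registers (each such update is $\cO(k)$ two-qubit gates since a unary increment is a cascade of controlled swaps on $\cO(k)$ active sites). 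This yields $\cO(k^2)$ gates for $W_k$.

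Finally, I would sum the costs: $U_{\rm Sch} = W_n W_{n-1} \cdots W_1$ has total gate count $\sum_{k=1}^n \cO(k^2) = \cO(n^3)$. Any residual bookkeeping, such as initializing the path register to $\ket 0$ and fixing $J_1 = 1/2$, is $\cO(n)$ and absorbed into the bound. Correctness follows because the iterated CG decomposition is exactly the Schur--Weyl decomposition for $SU(2)^{\otimes n}$ acting on $(\bb C^2)^{\otimes n}$, and the path register is the standard Young--Yamanouchi labeling of the $S_n$ irrep $\cP_J$.

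The main technical obstacle I expect is not the gate counting but the careful bookkeeping of the controlled operations in unary: one must verify that applying the CG rotations for all $(J_{k-1},M_{k-1})$ pairs in a fixed order does not inadvertently act twice on the same branch of the superposition, and that the conditional register updates (shifting the active unary bit of $J$ by $\pm 1/2$ depending on the new path bit, and of $M$ by $\pm 1/2$ depending on $x_k$) can be merged with the CG rotation without extra asymptotic cost. Once these reversible-update primitives are in place, the $\cO(n^3)$ bound follows immediately from the per-step $\cO(k^2)$ estimate.
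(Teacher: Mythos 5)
Your proposal is correct and follows essentially the same route as the paper: both construct $U_{\rm Sch}$ as a cascade of $n$ Clebsch--Gordan transforms with the total-angular-momentum and magnetic-quantum-number registers held in unary so that each conditional $2\times 2$ CG rotation needs only two controls, yielding $\cO(n^2)$ gates per step and $\cO(n^3)$ overall. Your $M$-register is the paper's $\ket q$ register under a fixed relabeling, and your path qubits are the paper's $\ket{p_i}$ registers, so the constructions coincide up to the minor ordering of the unary-shift updates relative to the CG rotation.
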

	
	\begin{proof}
		As the representation space of $U(\mf{su}(2))$, the Hilbert space $\mc H=\left(\bb C^2\right)^{\otimes n} = \mc Q_{1/2}^{\otimes n}$ can be decomposed as
		\begin{align}
			\label{eq:tensordecomp}
			\mc Q_{1/2}^{\otimes n} \cong a_{n/2}\mc Q_{n/2}\oplus a_{n/2-1}\mc Q_{n/2-1}\oplus a_{n/2-2}\mc Q_{n/2-2}\oplus\dots,
		\end{align}
		where $\mc Q_J=\mathrm{Sym}^{2J}\mc Q_{1/2}$ is the spin-$J$ irreducible representation space of dimension $2J+1$ and we write $a_J\mc Q_J$ to mean $\bigoplus_{k=1}^{a_J}\mc Q_J$, i.e., each irreducible representation $\mc Q_J$ appears with multiplicity $a_J$.
		The Schur transform $U_{\rm Sch}$ changes the computational basis to the angular momentum basis.
		Explicitly, given an input state $\ket x\in\mc Q_{1/2}^{\otimes n}$ along with two ancilla registers initialized to $\ket{J=0}$, $\ket{q=0}$, the algorithm outputs a linear combination of Schur basis states with registers $\ket J, \ket q, \ket{p_1},\dots,\ket{p_n}$. 
		The register $\ket J$ corresponds to the total angular momentum $J$, labeling the spin-$J$ irreducible representation $\cQ_{J}$ of dimension $2J+1$. 
		The register $\ket q$ labels the basis vectors in the $\cQ_{J}$ representation.
		Since there are $a_J$ copies of each $\mc Q_J$ in $\mc Q_{1/2}^{\otimes n}$, the $\ket{p_i}$ registers label the different copies of $\mc Q_J$ in the decomposition. 
		It is crucial for implementing the evolution operator that in each copy, the $\ket q$ registers label the same vectors. For example, $\ket{q=J}$ is always the highest weight vector of the representation. 
		The circuit is shown in Fig.~\ref{fig:USch}.
		
		\begin{figure} [htbp]
			\centerline{\Qcircuit @C=1em @R=.7em {
					\lstick{\ket 0} & {/}\qw & \multigate{4}{U_{\rm Sch}} & \rstick{\ket J} \qw \\
					\lstick{\ket 0} & {/}\qw & \ghost{U_{\rm Sch}}& \rstick{\ket q} \qw \\
					\lstick{\ket{x_1}} & \qw & \ghost{U_{\rm Sch}} & \rstick{\ket{p_1}} \qw \\
					\lstick{\vdots} & \qw & \ghost{U_{\rm Sch}} & \rstick{\vdots} \qw \\
					\lstick{\ket{x_n}} & \qw & \ghost{U_{\rm Sch}} & \rstick{\ket{p_n}} \qw
			}}
			\caption{\label{fig:USch}The Schur transformation circuit. On an input $\ket x$, the output is a linear combination of the basis states $\ket J\otimes \ket q\otimes \ket p$.}
		\end{figure}
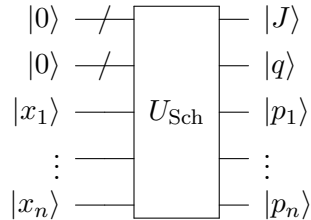
		
		The $\ket J$ and $\ket q$ registers are encoded in unary, where a basis state is $\ket 1$ on the qubit corresponding to the value of $J$ or $q$ and $\ket 0$ on all other qubits. For example, the values of $\ket J$ are encoded as $\ket{J=0} = \ket{100\dots}$, $\ket{J=\frac{1}{2}} = \ket{010\dots}$, $\ket{J=1} = \ket{001\dots}$ and the values of $\ket q$ are encoded as $\ket{q=-\frac{n}{2}} = \ket{100\dots}$, $\ket{q=-\frac{n-1}{2}} = \ket{010\dots}$, $\ket{q=-\frac{n-2}{2}} = \ket{001\dots}$. This way, a gate conditioned on $J=J_0$ or $q=q_0$ is controlled on one qubit instead of all the qubits representing $\ket J$ or $\ket q$. Since $n/2$ is the maximum value of $J$, we will need $\cO(n)$ qubits to represent $J$ and $q$ in this way, but the space $\cH$ already required $n$ qubits to represent, so the unary encoding does not significantly increase the space complexity.
		
		The Schur transformation circuit is composed of $n$ subroutines of Clebsch-Gordan transforms applied sequentially (Fig.~\ref{fig:USch detailed}) that are decomposed as in Fig.~\ref{fig:UCG detailed}. 
		In Fig.~\ref{fig:UCG detailed}, the $X$ gates on the $\ket J$ and $\ket q$ registers denote the transformation $\ket a\mapsto\ket{a+\frac 1 2}$. 
		In the unary encoding, it is implemented as a series of swap gates (Fig.~\ref{fig:X unary}). 
		Therefore, the $X^{-1}$ and controlled-$X^2$ gates on the $q$ and $J$ registers are composed of $\cO(n)$ elementary gates. 
		If we interpret the computational basis of the $\ket{x_i}$ register as $\ket 0=\ket{-\frac 1 2}$, $\ket 1=\ket{\frac 1 2}$, the $X^{-1}$ and controlled-$X^2$ gates together implement the controlled sum operation $\ket q\ket{x_i}\mapsto\ket{q+x_i}\ket{x_i}$ and $\ket J\ket{x_i}\mapsto\ket{J+x_i}\ket{x_i}$. 
		
		\begin{figure} [!htbp]
			\centerline{\Qcircuit @C=1em @R=.7em {
					\lstick{\ket 0} & {/}\qw & \multigate{2}{U_{CG}} & \multigate{1}{U_{CG}} & \multigate{1}{U_{CG}} & \multigate{1}{U_{CG}} & \rstick{\ket J} \qw \\
					\lstick{\ket 0} & {/}\qw & \ghost{U_{CG}} & \ghost{U_{CG}}\cwx[2] & \ghost{U_{CG}}\cwx[3] & \ghost{U_{CG}}\cwx[4] & \rstick{\ket q} \qw \\
					\lstick{\ket{x_1}} & \qw & \ghost{U_{CG}} & \ghost{} & \ghost{} & \ghost{} & \rstick{\ket{p_1}} \qw \\
					\lstick{\ket{x_2}} & \qw & \qw & \gate{{\color{white}U_{CG}}} & \ghost{} & \ghost{} & \rstick{\ket{p_2}} \qw \\
					\lstick{\vdots} & \qw & \qw & \qw & \gate{{\color{white}U_{CG}}} & \ghost{} & \rstick{\vdots} \qw \\
					\lstick{\ket{x_n}} & \qw & \qw & \qw & \qw & \gate{{\color{white}U_{CG}}} & \rstick{\ket{p_n}} \qw \\
			}}
			\caption{\label{fig:USch detailed}The Schur transformation $U_{\rm Sch}$ as a sequence of Clebsch-Gordan unitaries. The output is a linear combination of basis states $\ket J\otimes \ket q\otimes \ket p$.}
		\end{figure}
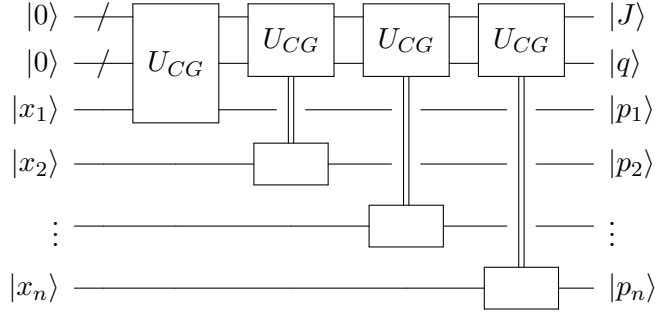
		
		\begin{figure} [!htbp]
			\centerline{\Qcircuit @C=1em @R=.7em {
					\lstick{\ket J} & {/}\qw & \qw & \qw & \ctrl{1} & \gate{X^{-1}} & \gate{X^2} & \rstick{\ket{J'}} \qw \\
					\lstick{\ket q} & {/}\qw & \gate{X^{-1}} & \gate{X^2} & \ctrl{1} & \qw & \qw & \rstick{\ket{q'}} \qw \\
					\lstick{\ket{x_i}} & \qw & \qw & \ctrl{-1} & \gate{R_y(\theta_{J,q'})} & \qw & \ctrl{-2} & \rstick{\ket{p_i}} \qw
			}}
			\caption{\label{fig:UCG detailed}A decomposition of the Clebsch-Gordan unitary $U_{CG}$ with output a linear combination of the basis states $\ket{J'}\otimes \ket{q'}\otimes \ket{p_i}$.}
		\end{figure}
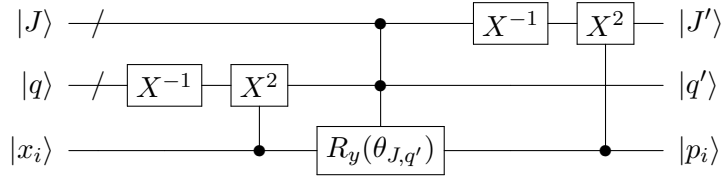
		
		\begin{figure} [!htbp]
			\centerline{\Qcircuit @C=1em @R=.7em {
					& \qw & \qw & \qw & \qswap & \qw\\
					& \qw & \qw & \qswap & \qswap \qwx & \qw\\
					\lstick{\ket q\text{ or }\ket J}& \qw & \qswap & \qswap \qwx & \qw & \qw\\
					& \qswap & \qswap \qwx & \qw & \qw & \qw\\
					& \qswap \qwx & \qw & \qw & \qw & \qw \gategroup{1}{1}{5}{1}{0.5em}{\{}
			}}
			\caption{\label{fig:X unary}An $X$ gate in the unary encoding used for the $q$ and $J$ registers.}
		\end{figure}
		
		The controlled $R_y$ gate of Fig.~\ref{fig:UCG detailed} denotes an operation that implements the rotation
		\begin{align}
			R_y(\theta_{J,q'}) = \begin{pmatrix} \cos\theta_{J,q'} & -\sin\theta_{J,q'}\\\sin\theta_{J,q'} & \cos\theta_{J,q'}\end{pmatrix}\;,
		\end{align}
		where
		\begin{align}
			\cos\theta_{J,q'} = \sqrt{\frac{1}{2}+\frac{q'}{2J+1}}\;, \quad \sin\theta_{J,q'} = \sqrt{\frac{1}{2}-\frac{q'}{2J+1}}\;.
		\end{align}
		It is decomposed as $\cO(n^2)$ rotation gates, each with only two controls because of the unary encoding. 
		An example is shown in Fig.~\ref{fig:CG rotation} with $n=3$ and $x_i=x_3$, the last instance of $U_{CG}$ in the decomposition of $U_{\rm Sch}$. 
		
		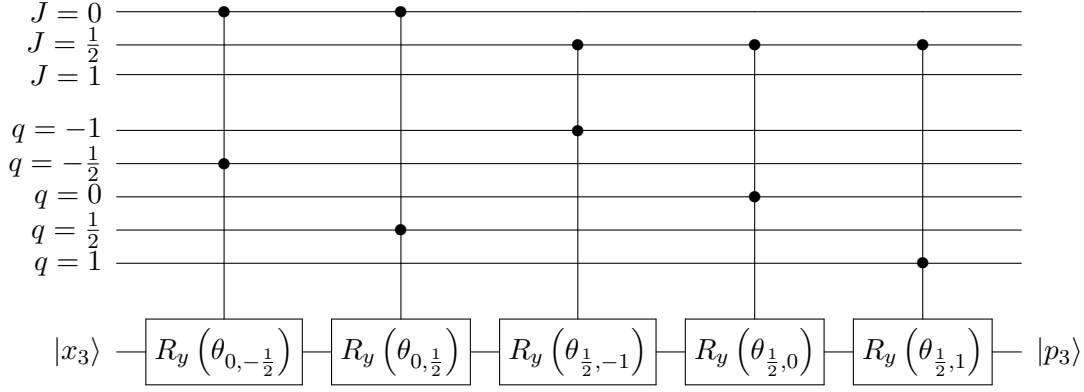
\begin{figure} [htbp]
			\centerline{\Qcircuit @C=1em @R=.9em {
					\lstick{J=0} & \ctrl{5} & \ctrl{7} & \qw & \qw & \qw & \qw \\
					\lstick{J=\frac{1}{2}} & \qw & \qw & \ctrl{3} & \ctrl{5} & \ctrl{7} & \qw \\
					\lstick{J=1} & \qw & \qw & \qw & \qw & \qw & \qw \\ \\
					\lstick{q=-1} & \qw & \qw & \ctrl{6} & \qw & \qw & \qw \\
					\lstick{q=-\frac 1 2} & \ctrl{5} & \qw & \qw & \qw & \qw & \qw \\
					\lstick{q=0} & \qw & \qw & \qw & \ctrl{4} & \qw & \qw \\
					\lstick{q=\frac 1 2} & \qw & \ctrl{3} & \qw & \qw & \qw & \qw \\
					\lstick{q=1} & \qw & \qw & \qw
					& \qw & \ctrl{2} & \qw \\
					\\
					\lstick{\ket{x_3}} & \gate{R_y\left(\theta_{0,-\frac 1 2}\right)} & \gate{R_y\left(\theta_{0,\frac 1 2}\right)} & \gate{R_y\left(\theta_{\frac 1 2,-1}\right)} & \gate{R_y\left(\theta_{\frac 1 2,0}\right)} & \gate{R_y\left(\theta_{\frac 1 2,1}\right)} & \rstick{\ket{p_3}} \qw
			}}
			\caption{\label{fig:CG rotation}The circuit for the rotation $R_y(\theta_{J,q'})$ controlled on the $J$ and $q$ registers. In the example illustrated, it implements the transformation $\ket J\otimes \ket{q'}\otimes \ket{x_3}\mapsto\ket J\otimes \ket{q'}\otimes \ket{p_3}$ when $n=3$. Note that we only need to perform the rotations for the valid values of $q'$ and $J$, satisfying $|q'|\le J+\frac{1}{2}$ and $q'\equiv J+\frac{1}{2} \pmod 1$.}
		\end{figure}
		
		We see from our decomposition that each $U_{CG}$ consists of $\cO(n^2)$ gates, giving a total of $\cO(n^3)$ gates to implement change of basis unitary $U_{\rm Sch}$.
	\end{proof}

	\section{Proof of Theorem~\ref{thm:FF&SEEM}}
	\label{appx:energy}
	We provide a proof of Thm.~\ref{thm:FF&SEEM} using the same method as in Ref.~\cite{AA17}, but we reproduce it here to calculate the quantum complexity for non-exponential fast-forwarding and precision of energy measurements.
	
	The main step in proving that fast-forwarding implies efficient energy measurements is the use of quantum phase estimation. This idea is captured in the following lemma.

	\begin{lemma}
		[Fast-forwarding $\to$ Precise energy measurements]
		\label{lem:FFtoSEEM}
		Suppose $H$ is a qubit Hamiltonian with $\left\|\left.H\right|_S\right\|\le 1$ for a subspace $\mc S\subseteq \left(\bb C^2\right)^{\otimes n}$, and $H$ can be simulated with parameters $(T,\epsilon,G)$ on $S$. For any $l,c\in \bb Z$ with $l\le \lfloor \log_2(2T)\rfloor$ and $3\le c<2^l$, $H$ has energy measurement parameters
		\begin{align}
			\left(\eta=1-\frac{1}{2(c-2)},\delta E=\frac{2\pi c}{2^{l}},\xi=l\epsilon,G'=\cO(l^2+lG)\right)
		\end{align}
		on $\mc S$.
	\end{lemma}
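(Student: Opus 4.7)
The plan is to prove the lemma by implementing standard quantum phase estimation (QPE), using the simulation circuits $V(t)$ granted by the hypothesis to realize the required controlled powers of $U=e^{-iH}$, and then invoking the textbook QPE error analysis together with the triangle inequality to track the accumulated simulation error.

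\textbf{Step 1: Setup.} I would take the QPE ancilla register $\mc A$ to consist of $l$ qubits, initialized to $\ket{0}^{\otimes l}$ and Hadamarded. For each $k=0,1,\ldots,l-1$, I apply a controlled-$U^{2^k}$ where $U=e^{-iH}$, acting on the system register holding $\ket{\psi_E}\in\mc S$. Here $U^{2^k}=e^{-iH\cdot 2^k}$; since $l\le\lfloor\log_2(2T)\rfloor$, every time $2^k\le 2^{l-1}\le T$ lies in the simulation window, so I can implement each $U^{2^k}$ on $\mc S$ using a single controlled-$V(2^k)$, incurring at most error $\epsilon$ in the Euclidean norm on the subspace. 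After the $l$ controlled evolutions, I apply an inverse quantum Fourier transform on $\mc A$. This produces the desired unitary $V_n$ whose output, on eigenstate input $\ket{\psi_E}\otimes\ket{0}_{\mc A}$, encodes an estimate of the phase $\phi$ defined by $e^{-iE}=e^{2\pi i\phi}$.

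\textbf{Step 2: Precision from QPE.} Standard phase-estimation analysis (e.g., Nielsen--Chuang, Thm.~5.2) shows that when $U$ is applied exactly, the measurement outcome $m\in\{0,1,\ldots,2^l-1\}$ satisfies $|m/2^l-\phi|\le c/2^l$ with probability at least $1-1/(2(c-2))$ for $3\le c<2^l$. Because $\|H|_{\mc S}\|\le 1<\pi$, the phase $\phi$ determines $E$ unambiguously via $E=-2\pi\phi\bmod 2\pi$, and the induced precision on $E$ is $\delta E=2\pi c/2^l$. This identifies $U_n$ of Def.~\ref{def:SEEM}: it is the ideal QPE circuit with $U=e^{-iH}$, and the ancillary label $\ket{E'}$ is obtained from $\ket{m}$ by the classical map $E'=-2\pi m/2^l\bmod 2\pi$, so the confidence parameter is $\eta=1-1/(2(c-2))$.

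\textbf{Step 3: Error accumulation and gate count.} Let $V_n$ be the circuit that replaces each ideal controlled-$U^{2^k}$ by the corresponding controlled-$V(2^k)$. On any product input $\ket{\psi}\otimes\ket{0}_{\mc A}$ with $\ket\psi\in\mc S$, after the Hadamards the state on the system register sits in the subspace $\mc S$, so the hypothesis of Def.~\ref{def:simulparamsubsp} guarantees that each controlled-$V(2^k)$ approximates controlled-$U^{2^k}$ within Euclidean error $\epsilon$ on each branch. Telescoping by the triangle inequality over the $l$ calls yields $\|(U_n-V_n)(\ket\psi\otimes\ket 0_{\mc A})\|\le l\epsilon$, giving $\xi=l\epsilon$. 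The gate count is $\cO(l)$ for the Hadamards, $\cO(l^2)$ for the inverse quantum Fourier transform, and $\cO(lG)$ for the $l$ controlled simulations (the controls add only constant overhead), summing to $G'=\cO(l^2+lG)$.

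\textbf{Main obstacle.} The step requiring the most care is Step~1/3: I need the simulation bound from Def.~\ref{def:simulparamsubsp} to survive being \emph{controlled} by an ancilla qubit and being composed $l$ times, and I need the intermediate system state to remain in $\mc S$ so that the per-call error $\epsilon$ still applies. Since $H$ preserves $\mc S$, each exact $U^{2^k}$ maps $\mc S$ to $\mc S$; this, together with the fact that the ancilla/system decomposition is unentangled at entry to each controlled evolution in the sense needed for the definition (by linearity across the control branches), is enough to justify the clean per-call bound, giving the stated $\xi=l\epsilon$.
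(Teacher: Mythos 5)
Your proposal is correct and follows essentially the same approach as the paper's proof: quantum phase estimation with the fast-forwarding circuits supplying the controlled powers of the time-evolution unitary, the Nielsen--Chuang confidence bound supplying $\eta = 1 - 1/(2(c-2))$ and $\delta E = 2\pi c/2^l$, and a hybrid/telescoping argument giving $\xi = l\epsilon$ and $G' = \cO(l^2 + lG)$. The only (cosmetic) difference is that the paper runs phase estimation on $e^{i(H+I)}$ so that the phase $(E+1)/(2\pi)$ lies cleanly in $[0,1/\pi]$, whereas you use $e^{-iH}$ directly and rely on $\|H|_{\mc S}\| \le 1 < \pi$ to resolve the mod-$1$ wraparound near $\phi=0$; both are fine.
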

	\begin{proof}
		For simplicity, we give the proof for $\mc S=\left(\bb C^2\right)^{\otimes n}$, but the proof also holds for proper subspaces. First, consider the ideal case where we can implement $e^{-itH}$ (and therefore $e^{itH}$) exactly for $t\le T$. Since $\|H\|\le 1$, each eigenenergy $E$ satisfies $0\le E+1\le 2\le 2\pi$, so we can perform phase estimation on $e^{i(H+I)}$. To achieve $l$ bits of precision, we need to implementing controlled $e^{i(H+I)2^k}=e^{i2^k}e^{i2^kH}$ gates for $k=0,1,\dots,l-1$. Since $2^{l-1}\le T$ by assumption, each controlled $e^{i2^kH}$ can be implemented. The controlled $e^{i2^k}$ phase factors are phase gates on the control registers. The ideal circuit as described in Ref.~\cite{NC01} is shown in Fig.~\ref{fig:FF to SEEM}.
		
		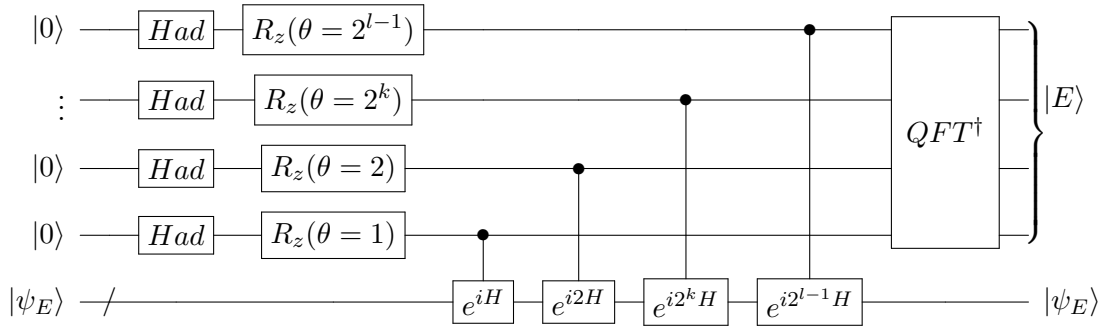
\begin{figure} [htbp]
			\centerline{\Qcircuit @C=1em @R=.7em {
					\lstick{\ket 0} & \qw & \gate{Had} & \gate{R_z(\theta=2^{l-1})} & \qw & \qw & \qw & \ctrl{4} & \multigate{3}{QFT^\dag} & \qw\\
					\lstick{\vdots} & \qw & \gate{Had} & \gate{R_z(\theta=2^k)} & \qw & \qw & \ctrl{3} & \qw & \ghost{QFT^\dag} & \rstick{\ket E}\qw\\
					\lstick{\ket 0} & \qw & \gate{Had} & \gate{R_z(\theta=2)} & \qw & \ctrl{2} & \qw & \qw & \ghost{QFT^\dag} & \qw\\
					\lstick{\ket 0} & \qw & \gate{Had} & \gate{R_z(\theta=1)} & \ctrl{1} & \qw & \qw & \qw & \ghost{QFT^\dag} & \qw\\
					\lstick{\ket{\psi_E}} & {/}\qw & \qw & \qw & \gate{e^{iH}} & \gate{e^{i2H}} & \gate{e^{i2^kH}} & \gate{e^{i2^{l-1}H}} & \qw & \rstick{\ket{\psi_E}}\qw \gategroup{1}{10}{4}{10}{0.5em}{\}}
			}}
			\caption{\label{fig:FF to SEEM}The quantum phase estimation circuit that uses Hamiltonian simulation subroutines $e^{i 2^k H}$. Each $Had$ is a Hadamard gate.}
		\end{figure}
		
		From Section 5.2.1 of Ref.~\cite{NC01}, phase estimation prepares a register $\ket m$ with
		\begin{align}
			\Pr(|m-\lfloor x\rfloor|>c)\le\frac{1}{2(c-1)}\;,
		\end{align}
		where $c$ is a positive integer and $x=2^l\left(\frac{E+1}{2\pi}\right)$ for an eigenstate of energy $E$. Thus,
		\begin{align}
			\Pr(|m-x|>c)&\le\Pr(|m-\lfloor x\rfloor|>c-1)\\
			& \le\frac{1}{2(c-2)}\;.
		\end{align}
		Rearranging the left hand side gives
		\begin{align}
			\Pr\left(\left|\frac{2\pi m}{2^l}-1-E\right|>\frac{2\pi c}{2^l} \right)\le\frac{1}{2(c-2)} \;, 
		\end{align}
		so that we have accuracy $\delta E=\frac{2\pi c}{2^l}$ and confidence bounded by $\eta = 1-\frac{1}{2(c-2)}$.
		
		In the (nonideal) circuit, there are $l$ Hadamard and $l$ single qubit rotation gates, $\cO(lG)$ elementary gates from the controlled time evolution operators $e^{i2^kH}$, and $\cO(l^2)$ elementary gates to implement the quantum Fourier transform. Therefore, the total gate count is $\cO(l^2+lG)$. The error in the circuit is due to the nonideal time evolution operators from the Hamiltonian simulation algorithm. Since each of the $l$ instances contribute an error of at most $\epsilon$, the total error in the energy measurement circuit is at most $\xi=l\epsilon$.
	\end{proof}
	
	The converse statement that efficient energy measurements imply the ability to fast-forward makes use of the following lemma.
	
	\begin{lemma}[Precise energy measurements $\to$ fast-forwarding]
		\label{lem:SEEMtoFF}
		Suppose a qubit Hamiltonian $H$ has energy measurement parameters $(\eta,\delta E,\xi,G)$ on a subspace $\mc S\subseteq \left(\bb C^2\right)^{\otimes n}$. Then for any $T$, the Hamiltonian can be simulated with parameters
		\begin{align}
			(T,\epsilon=\eta\delta E T+2(1-\eta+\xi),\cO(G))\;.
		\end{align}
		on $\mc S$.
	\end{lemma}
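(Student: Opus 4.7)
The plan is to build the simulation circuit as $V(t) = V_n^\dag D_t V_n$, where $V_n$ is the approximate energy measurement circuit guaranteed by the hypothesis, and $D_t$ is the diagonal phase operator on the ancillary register defined by $D_t \ket{E'}_\cA = e^{-itE'}\ket{E'}_\cA$. Intuitively, $V_n$ writes a (noisy) estimate of the energy into $\cA$, $D_t$ applies the corresponding phase, and $V_n^\dag$ uncomputes the register. Since $D_t$ is diagonal in the computational basis and $E'$ is stored in a small number of qubits, $D_t$ decomposes into single-qubit phase rotations (one per bit, with angle $t\cdot 2^k$), contributing only a negligible additive cost, so the overall gate count remains $\cO(G)$.

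First I would verify correctness on an eigenstate $\ket{\psi_E}\in \mc S$. By Def.~\ref{def:SEEM}, the ideal unitary satisfies $U_n \ket{\psi_E}\ket 0_\cA = \ket{\psi_E}\otimes \sum_{E'}\alpha_{E'}\ket{E'}_\cA$ with $\sum_{|E-E'|\le\delta E}|\alpha_{E'}|^2 \ge \eta$. Hence
\[
U_n^\dag D_t U_n \ket{\psi_E}\ket 0_\cA - e^{-itE}\ket{\psi_E}\ket 0_\cA \;=\; U_n^\dag (D_t - e^{-itE}I)\, U_n \ket{\psi_E}\ket 0_\cA,
\]
and I would split the ancilla sum into the accurate part ($|E'-E|\le \delta E$), where $|e^{-itE'}-e^{-itE}|\le t\delta E \le T\delta E$, and the inaccurate part (total squared weight $\le 1-\eta$), where the phase difference is bounded by $2$. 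Combining these contributions via the triangle inequality yields the portion of $\epsilon$ coming from the imperfect energy measurement, of the claimed form $\eta\, T\,\delta E + 2(1-\eta)$ up to the bookkeeping used in the statement.

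Next I would account for replacing $U_n$ by $V_n$: the hypothesis supplies $\|(V_n-U_n)\ket\psi\ket 0_\cA\|\le\xi$ on $\mc S$, and each substitution (once for $V_n$, once for $V_n^\dag$) contributes at most $\xi$ to the total error, giving the $2\xi$ in the stated $\epsilon$. Combining the two error sources and extending by linearity to arbitrary $\ket\psi\in \mc S$ (using the invariance of $\mc S$ under $H$ so that $\ket\psi$ is a superposition of eigenstates within $\mc S$) gives the claimed simulation parameters $(T,\eta\,\delta E\,T+2(1-\eta+\xi),\cO(G))$.

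The main obstacle I anticipate is the asymmetry in the approximation guarantee: the hypothesis controls $\|(V_n-U_n)\ket\psi\ket 0_\cA\|$ only for states with the ancilla in $\ket 0_\cA$, whereas in the circuit the unitary $V_n^\dag$ is applied to $D_t V_n \ket\psi \ket 0_\cA$, which is \emph{not} of that product form. The clean way around this is to write the total deviation as a telescoping sum, $V_n^\dag D_t V_n - U_n^\dag D_t U_n = V_n^\dag D_t (V_n - U_n) + (V_n^\dag - U_n^\dag) D_t U_n$, so that each error operator acts on a state of the appropriate $\ket 0_\cA$ form (the second term after moving $D_t U_n$ back onto $\ket\psi\ket 0_\cA$, which preserves the ancilla-$\ket 0$ structure only if we take adjoints carefully). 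Getting this bookkeeping right, so that each error factor is applied to a state to which the hypothesis directly applies, is the delicate part of the argument.
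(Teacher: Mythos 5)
Your construction $V(t)=V_n^\dagger D_t V_n$ and the split into accurate/inaccurate ancilla components matches the paper's proof exactly, and you correctly derive the $\eta\,\delta E\,T+2(1-\eta)$ contribution from the ideal circuit $U_n^\dagger D_t U_n$. You also correctly identify a subtlety that the paper glosses over: the hypothesis only bounds $\|(V_n-U_n)\ket\psi\ket 0_\cA\|$ on states with the ancilla in $\ket 0_\cA$, and the paper's one-line ``replacing $U$ by $\tilde U$ adds $2\xi$'' does not explain why the $V_n^\dagger$ at the end behaves well on a state whose ancilla is no longer $\ket 0_\cA$.

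However, your proposed fix -- the telescoping $V_n^\dagger D_t(V_n-U_n) + (V_n^\dagger-U_n^\dagger)D_t U_n$ -- does not actually close the gap. In the second term, $(V_n^\dagger-U_n^\dagger)$ hits $D_t U_n\ket\psi\ket 0_\cA$, whose ancilla is a superposition $\sum_{E'}\alpha_{E'}e^{-itE'}\ket{E'}_\cA$, and ``moving $D_t U_n$ back'' is not a licit operation; tracking through it carefully one only gets a bound of order $2\alpha+\xi$ for that term, not $\xi$, giving a final bound of roughly $3\alpha+2\xi$ rather than the claimed $\alpha+2\xi$. The clean resolution instead begins by inserting $V_n^\dagger V_n=\one$ on the target: write $e^{-itH}\ket\psi\ket 0_\cA = V_n^\dagger\bigl(V_n\, e^{-itH}\ket\psi\ket 0_\cA\bigr)$, so that by unitarity
\begin{align}
\bigl\|V_n^\dagger D_t V_n\ket\psi\ket 0_\cA - e^{-itH}\ket\psi\ket 0_\cA\bigr\|
= \bigl\|D_t V_n\ket\psi\ket 0_\cA - V_n\, e^{-itH}\ket\psi\ket 0_\cA\bigr\|\;.
\end{align}
Now insert $U_n$ via the triangle inequality. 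The leftover terms $(V_n-U_n)\ket\psi\ket 0_\cA$ and $(V_n-U_n)\,e^{-itH}\ket\psi\ket 0_\cA$ both have the ancilla in $\ket 0_\cA$, the latter because $\mc S$ is $H$-invariant so $e^{-itH}\ket\psi\in\mc S$; each is bounded by $\xi$. What remains is $\|D_t U_n\ket\psi\ket 0_\cA - U_n e^{-itH}\ket\psi\ket 0_\cA\| = \|U_n^\dagger D_t U_n\ket\psi\ket 0_\cA - e^{-itH}\ket\psi\ket 0_\cA\|\le\eta\,\delta E\,T+2(1-\eta)$. This gives exactly the stated $\epsilon$ and vindicates the paper's claim, but by a slightly different algebraic route than the one you sketched.
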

	\begin{proof}
		Let $U$ be the ideal energy measurement circuit and $\tilde U$ be the approximation. In the ideal case (Fig.~\ref{fig:SEEM to FF}), we apply $U$, then apply the rotations $e^{-itE}$ for $t\le T$ on the $\ket E$ register, and finally undo the energy measurement by applying $U^\dagger$.
		
		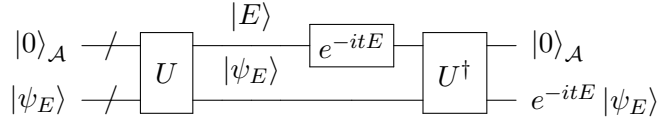
\begin{figure} [htbp]
			\centerline{\Qcircuit @C=1em @R=.7em {
					\lstick{\ket{0}_{\mc A}} & {/}\qw & \multigate{1}{U} & \qw & \ustick{\ket E}\qw & \qw & \gate{e^{-itE}} & \multigate{1}{U^\dag} & \rstick{\ket 0_{\mc A}} \qw\\
					\lstick{\ket{\psi_E}} & {/}\qw & \ghost{U} & \qw & \ustick{\ket{\psi_E}}\qw & \qw & \qw & \ghost{U^\dag} & \rstick{e^{-itE}\ket{\psi_E}} \qw 
			}}
			\caption{\label{fig:SEEM to FF}The quantum simulation circuit using energy measurements.}
		\end{figure}
		
		When we replace $U$ by $\tilde U$, there will be $2G$ elementary gates from the applications of $\tilde U$ and $\tilde U^\dagger$ and at most $G$ gates from the rotation $e^{-itE}$, since the number of ancillary qubits needed to store the energy is at most the number of gates in the algorithm. Note that the binary encoding of $\ket E$ allows us to implement $e^{-itE}$ in parallel as individual phase gates on the $\ket E$ register. Thus, the total gate count is $\cO(G)$.
		
		If the energy measurement had perfect confidence $\eta=1$ and zero error $\delta E=0$, the ideal circuit would exactly fast-forward the Hamiltonian. The error in the actual circuit comes from the imperfect values of $\eta$, $\delta E$, and $\xi$. Denote the rotations $e^{-itE}$ by the unitary $R$. For any eigenstate $\ket{\psi_E}$ of $H$ in the subspace $\cS$, we have 
		\begin{align}
			\left \|(e^{-itH}\otimes \one_\cA-U^\dagger RU)\ket{\psi_E}\otimes \ket 0_{\mc A}\right\|
			&= \left\|U^\dagger(e^{-itH}\otimes \one_\cA-R)U\ket{\psi_E}\otimes \ket 0_{\mc A}\right\|\\
			&= \|\ket{\psi_E}\otimes \sum_{E'}\alpha_{E'}(e^{-itE}-e^{-itE'})\ket{E'}_{\cA}\|\\
			&\le \eta\delta ET+2(1-\eta).
		\end{align}
		The last inequality is obtained by splitting the sum into two parts, $|E'-E|\le\delta E$ and $|E'-E|>\delta E$. For a general state $\ket\psi=\sum_{E}c_E\ket{\psi_E}\in \cS$, the same way of splitting the sum into two parts will give the same bound. Replacing $U$ by $\tilde U$ adds an additional $2\xi$ error. Therefore, the total error is $\alpha=\eta\delta ET+2(1-\eta+\xi)$.
	\end{proof}
	
	Combining the two lemmas, we prove Thm.~\ref{thm:FF&SEEM}.
	
	\begingroup
	\def\thetheorem{\ref{thm:FF&SEEM}}
	\begin{theorem}[Fast-forwarding and precise energy measurements]
		Let $\{H_n\}_n$ be a sequence of (qubit) Hamiltonians with norms $\|H_n\|\le 1$. Then the following statements hold:
		\begin{enumerate}
			\item If $\{H_n\}_n$ can be simulated with parameters $(T(n),\epsilon,G(n))$, then for any constant $\eta<1$, $\{H_n\}_n$ has energy measurement parameters $(\eta,\cO(\frac{1}{T(n)}),\cO(\epsilon\log T(n)),\cO((\log T(n))^2+G(n)\log T(n)))$.
			\item Conversely, if $\{H_n\}_n$ has energy measurement parameters $(\eta,\delta E(n),\xi,G(n))$, then for any constant $\alpha>0$, $\{H_n\}_n$ can be simulated with parameters $(\cO(\frac{1}{\delta E(n)}),\alpha\eta+2(1-\eta+\xi),\cO(G(n)))$.
		\end{enumerate}
		The same statements hold on subspaces $\mc S_n\subseteq \left(\bb C^2\right)^{\otimes n}$ if $\left\|\left.H_n\right|_{\mc S_n}\right\|\le 1$.
	\end{theorem}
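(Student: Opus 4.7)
The plan is to establish the two directions of the equivalence separately as matched circuit primitives: fast-forwarding lets one run quantum phase estimation at fine resolution, while precise energy measurements let one implement time evolution by coherently measuring the energy, applying a diagonal phase on the measurement register, and uncomputing.

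For the forward direction, I would apply high-confidence quantum phase estimation to the shifted unitary $e^{i(H_n + I)}$, whose phases lie in $[0, 2] \subseteq [0, 2\pi]$ since $\|H_n\| \le 1$. With $l = \lfloor \log_2 2T(n) \rfloor$ ancillary qubits, the largest required controlled power is $e^{i 2^{l-1}(H_n + I)}$, implementable at evolution time $2^{l-1} \le T(n)$ using the fast-forwarding circuit $V_n$ at cost $\cO(G(n))$, while the initial Hadamards, the phase corrections absorbing the $e^{i 2^k I}$ factors, and the inverse Fourier transform add $\cO(l^2)$ further gates. Standard bounds (Section 5.2.1 of Ref.~\cite{NC01}) give accuracy $2\pi c / 2^l$ with confidence $1 - 1/(2(c-2))$ for any integer $c \ge 3$; fixing $c$ at a constant yields $\delta E = \cO(1/T(n))$ at the target $\eta < 1$, and the $l$ approximate evolutions contribute a total circuit error of at most $l\epsilon = \cO(\epsilon \log T(n))$.

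For the reverse direction, I would run the energy measurement unitary $U_n$ to entangle the input with an ancillary register encoding $\ket{E'}$, apply the diagonal operator $e^{-itE'}$ on that register --- implementable as a tensor product of single-qubit phase rotations because $E'$ is stored in binary and the ancilla has at most $G(n)$ qubits --- and uncompute with $U_n^\dagger$. On an eigenstate $\ket{\psi_E}$ the residual error reduces to the size of $\sum_{E'} \alpha_{E'} (e^{-itE} - e^{-itE'}) \ket{E'}_\cA$, which I would bound by splitting the sum into the well-approximated portion $|E'-E| \le \delta E(n)$ (contributing at most $\eta \delta E(n) t$ via $|e^{-itE}-e^{-itE'}| \le t|E-E'|$) and its complement of total weight at most $1-\eta$ (contributing at most $2(1-\eta)$). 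Replacing $U_n$ by its circuit approximation adds $2\xi$, giving total error $\eta \delta E(n) t + 2(1-\eta+\xi)$ for any $t$, so choosing $t = \alpha/\delta E(n)$ yields the stated simulation parameters. The subspace versions go through verbatim since every construction acts identically on each $\ket{\psi_E} \in \cS_n$ and on arbitrary inputs drawn from $\cS_n$, and the hypothesis $\left\|\left.H_n\right|_{\cS_n}\right\|\le 1$ still ensures the phase-estimation range fits in $[0, 2\pi]$.

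The main obstacle is careful error bookkeeping rather than any single conceptual difficulty: verifying that the integer $c$ in quantum phase estimation can indeed be chosen independently of $n$ to realize any fixed $\eta < 1$, that the $l$ approximate time evolutions compose with only linear error accumulation, and that binary storage of the energy really permits the conditional phase within the claimed $\cO(G(n))$ gate budget without reintroducing multiplicative overhead.
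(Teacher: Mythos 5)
Your proposal follows the paper's proof almost line for line: the forward direction is the same high-resolution phase estimation on the shifted unitary $e^{i(H_n+I)}$ with $l=\lfloor\log_2 2T(n)\rfloor$ ancillas and the standard Nielsen–Chuang confidence bound, and the reverse direction is the same measure–phase–uncompute circuit with the identical split of the error sum into the $|E'-E|\le\delta E$ region and its complement plus the $2\xi$ cost of the imperfect measurement circuit. The proposal is correct and matches the paper's approach.
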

	\addtocounter{theorem}{-1}
	\endgroup
	\begin{proof}
		To prove the first statement, use Lemma~\ref{lem:FFtoSEEM} with $l=\lfloor\log_2(2T(n))\rfloor$ and $c$ a large enough constant so that $1-\frac{1}{2(c-2)}>\eta$. This gives an accuracy $\delta E(n)$ that is inversely proportional to the fast-forwarding time $T(n)$ with gate count $\cO((\log T(n))^2+G(n)\log T(n))$.
		
		To prove the second statement, take $T(n)=\frac{\alpha}{\delta E(n)}$ in Lemma~\ref{lem:SEEMtoFF} to obtain fast-forwarding for time inversely proportional to the accuracy of the measurement with gate count $\cO(G(n))$.
	\end{proof}

\end{document}